\documentclass{article}

\usepackage[a4paper,top=1in,bottom=1in,left=1.25in,right=1.25in]{geometry}

\usepackage[T1]{fontenc}
\usepackage{newtxtext,newtxmath}

\usepackage{mathtools} 
\usepackage{bm}

\usepackage{amsthm} 

\usepackage{graphicx}
\usepackage{booktabs,threeparttable,multirow,tabularx}

\usepackage{setspace}
\usepackage[round,authoryear]{natbib}

\usepackage{xurl}
\usepackage[colorlinks=true, linkcolor=black, citecolor=black, urlcolor=blue]{hyperref}

\newtheorem{lemma}{Lemma}
\newtheorem{definition}{Definition}[section]
\newtheorem{proposition}[definition]{Proposition}

\title{Corporate Bond Yield Curve Modeling: A Rating-Based Regime-Switching Generalized CIR Approach}

\author{%
Maochun Xu\thanks{Department of Applied Mathematics, Xi’an Jiaotong--Liverpool University, Suzhou, China.}%
\and
Yunqi Liang \thanks{Department of Engineering Science, University of Oxford, Oxford, OX1 3PJ, United Kingdom Email: \texttt{Yunqi.Liang@reuben.ox.ac.uk}}
\and
Yi Hong\thanks{Corresponding author. Department of Applied Mathematics, Xi’an Jiaotong--Liverpool University, Suzhou, China. Email: \texttt{Yi.Hong@xjtlu.edu.cn}.}%
}

\begin{document}

\maketitle

\begin{center}
Working Paper\\
This version: March 2026
\end{center}

\begin{abstract} 
Persistent shifts in term-structure dynamics undermine the stability of single-regime models in long samples. We develop an arbitrage-free regime-switching generalized CIR (RS–GCIR) model that jointly prices the Chinese government bond (CGB) curve and corporate bond curves. To capture the systematic transmission from interest-rate conditions to credit spreads, we structure the model into two blocks and price corporate bonds conditional on the prevailing rate regime. The rate block features a two-state RS–GCIR short-rate process estimated from CGB zero-coupon curves, while the credit block embeds CIR-type credit factors in an intensity-based framework for rating migration and default. We implement a block-recursive Unscented Kalman Filter (UKF) procedure—filtering the rate block first and the credit block next—using weekly data from 2014--2025, a period that begins with the onset of China’s modern corporate default cycle. We identify two persistent rate regimes with distinct level--volatility profiles. Relative to single-regime benchmarks, regime switching improves joint curve fit, delivers economically interpretable filtered regime probabilities, and sharpens the decomposition of corporate yields into discounting and credit compensation.
\end{abstract}


\noindent\textbf{Keywords:} Regime switching; Affine term structure model; Corporate bond pricing; RS--GCIR; Reduced-form credit risk; Rating migration; Unscented Kalman filter; Chinese government bond yield curve.

\noindent\textit{JEL Classification:} G12; E43; C58.

\newpage

\section{Introduction}
\label{sec:intro}

Since the landmark onshore bond default of Shanghai Chaori Solar in March 2014, China’s credit bond market has evolved through pronounced and persistent shifts in the level and risk characteristics of the sovereign term structure \citep{lin2017bonded,GengPan2023SOE}. Over the subsequent decade, the risk-free discount environment has experienced a material downshift—from the relatively elevated yield levels prevailing in the mid-2010s to a markedly lower-rate regime in more recent years—accompanied by changes in volatility and policy-driven risk sentiment. Against this backdrop, investment-grade corporate curves have co-moved tightly with sovereign phases, implying that observed credit spreads are strongly state dependent rather than well summarized by a single stationary premium.

This evidence matters for valuation. In reduced-form models, defaultable bond prices are pinned down by two primitives—the discount curve and the compensation for credit events (default and rating migration)—both of which are allowed to depend on the macro–financial state under the pricing measure. The intensity-based framework with stochastic rating transitions \citep{DuffieSingleton1999, feldhutter2008decomposing} formalizes how credit compensation varies with the economic environment, and therefore can inherit the state dependence embedded in the sovereign term structure. Related term-structure decompositions further emphasize that movements in corporate yields reflect joint variation in discounting and credit components, motivating specifications that let the credit block load on the prevailing rate environment rather than forcing a single set of dynamics to rationalize heterogeneous episodes. Consequently, a single-regime affine multi-factor model can become statistically brittle in long China samples: it pools structurally different rate–credit episodes, blurs the economic interpretation of risk premia, and mechanically worsens pricing performance when both discounting conditions and credit compensation shift across regimes.

Against this backdrop, we develop an arbitrage-free, rating-based reduced-form framework to price China’s onshore investment-grade corporate bonds with explicit default and migration risk. The model is organized into a rate block and a credit block.
In the rate block, we jointly model the China government bond (CGB) and China Development Bank (CDB) term structures to recover persistent rate regimes and to deliver an operational discount curve for corporate valuation. Consistent with market practice, we take the CDB curve as the discounting reference and measure corporate spreads relative to CDB, reflecting the central role of policy-bank benchmarks in pricing and hedging \citep{ZhangWu2019CreditBonds}. At the same time, CGB and CDB are not interchangeable: their yield spread is systematic and time-varying and is more naturally interpreted as a non-credit services component than as CDB default risk. We therefore model the CGB--CDB spread as a latent convenience-yield factor in the spirit of \citet{KrishnamurthyVissingJorgensen2012,ChenChenHeLiuXie2023}, which prevents sovereign convenience premia from being mechanically loaded into measured corporate credit spreads. In the credit block, corporate bonds are priced in a Lando-style migration-default framework that maps intensities and rating-transition generators into defaultable bond prices and yield decompositions \citep{DuffieSingleton1999,lando1998cox,feldhutter2008decomposing}, yielding a transparent separation of discounting (CDB-based) and credit compensation.

We estimate the model in a nonlinear regime-switching state-space form by
maximum likelihood. Likelihood evaluation uses a regime-switching unscented
Kalman filter (UKF), which is designed for nonlinear transition and measurement
maps \citep{JulierUhlmann1997,WanVanDerMerwe2000}. Estimation follows a
block-recursive strategy \citep{FeldhutterNielsen2012}. We first filter the rate factors and the rate-regime
probabilities from the CGB and CDB curves. We then filter credit states and the
systemic credit regime conditional on the rate-block posterior summaries.
A central implementation principle is price-level mixing. Defaultable
bond pricing is exponential in the state.

Empirically, the regime-switching model delivers a materially tighter fit for investment-grade corporate curves across ratings and maturities. The implied decomposition is economically cleaner: a sizable share of spread variation is attributed to shifts in discount-rate conditions and the CGB--CDB spread, while the remainder reflects state-dependent credit compensation through default and migration risk. Consistent with this credit-market improvement, the rate block also benefits from regime switching: the CGB and CDB term structures are well summarized by the rate regime with distinct mean and volatility patterns, and the filtered regime probabilities line up with major macro and funding episodes. The estimated spread further isolates sovereign non-credit services from the discount component.


This paper makes three contributions. First, we propose an arbitrage-free regime-switching term-structure model for China that links a two-regime GCIR rate block to a rating-based reduced-form credit block. The framework is designed to price, within a single coherent system, the zero-coupon term structures of Chinese government bonds (CGBs), policy bank bonds (CDBs), and onshore investment-grade corporate bonds, thereby allowing discount-rate regimes to propagate into credit pricing through a disciplined no-arbitrage mapping. Second, we develop a tractable estimation strategy that is consistent with the convex structure of asset pricing under regime uncertainty. Specifically, we implement a block-recursive maximum-likelihood procedure using a regime-switching unscented Kalman filter, and we perform regime aggregation at the price level (rather than mixing states or yields) to preserve the convexity implied by no-arbitrage valuation and to avoid Jensen-type distortions when moving between state variables and prices. Third, we construct a China-specific rating-migration and default-intensity generator for risk-neutral corporate bond pricing. Because China lacks an established market-standard risk-neutralization transition/intensity generator comparable to those commonly used in developed markets, we provide a transparent and implementable pipeline—from China-specific empirical rating migration inputs to a continuous-time generator and its risk-neutral adjustment—so that the resulting matrix can be directly embedded in reduced-form pricing and decompositions of corporate yields into discounting and credit components.

Our paper bridges three strands of research and addresses a gap at their intersection by jointly modeling persistent regime shifts in the rate and credit environments, including their interaction in valuation, while keeping the credit block identified and economically interpretable. The first study latent regime shifts in interest-rate dynamics, beginning with \citet{Hamilton1989}. It extends to no-arbitrage regime-switching affine term-structure models that deliver state-contingent bond pricing while preserving tractable valuation \citep{van2020regime,CelaryKruhnerEksiAltay2024}. Recent contributions further improve tractability by characterizing Markov-modulated affine processes and related specifications. These include variants with richer state dependence, semi-Markov modulation, or additional sources of risk, yet retain exponential-affine pricing formulas \citep{RodrigoMamon2021,KurtFrey2022,SiuElliott2025,MolibeliVanVuuren2025}. The second strand develops reduced-form credit pricing with rating migration and default in an exponential-affine setting. It provides a disciplined mapping from intensities and rating-transition generators to defaultable bond prices and yield decompositions \citep{DuffieSingleton1999,lando1998cox,feldhutter2008decomposing}. The third strand shows that corporate credit spreads contain a large component tied to discount-rate conditions and time-varying risk premia, which can dominate purely expected-loss movements in long samples \citep{GilchristZakrajsek2012}.

The rest of the paper is organized as follows. Section~\ref{sec:model} presents
the model and pricing implications. Section~\ref{sec:data} describes the data.
Section~\ref{sec:estimation} details the state-space representation and the
block-recursive regime-switching UKF likelihood. Section~\ref{sec:empirical}
reports the empirical results. Section~\ref{sec:conclusion} concludes.

\section{Model}
\label{sec:model}

We develop a joint pricing framework for China’s bond market that embeds the benchmark rate term structure---constructed from Chinese government bonds (CGBs) and policy-bank bonds such as China Development Bank (CDB) bonds---together with rating-sorted corporate bond yields in a unified state-space setting. The key insight is that the interest-rate environment and the credit cycle can co-move in valuation and filtered beliefs, yet their regime changes need not be synchronized. Accordingly, we introduce two latent regime processes governing the rate and credit sides, respectively, within a regime-switching affine (GCIR-type) term-structure model, in the spirit of the Markov-switching tradition and regime-shift term-structure literature \citep{Hamilton1989,BansalZhou2002,DaiSingletonYang2006}.

We interpret the rate regime $s_t^{r}$ as summarizing monetary and interest-rate conditions (high-rate versus low-rate), and the credit regime $s_t^{c}$ as capturing shifts in economy-wide risk appetite and financing constraints (credit expansion versus contraction), consistent with evidence that credit spreads and financial conditions are strongly cyclical \citep{BernankeGertlerGilchrist1999,GilchristZakrajsek2012}. We adopt a block-learning structure: the rate regime and rate factors are identified mainly from the CGB/CDB curves, while the credit regime is identified from the cross-section and term structure of rating-based corporate yields conditional on the extracted rate block. This separation leads to a two-step estimation approach and highlights inference issues associated with generated regressors \citep{Pagan1984,murphy2002estimation}. For corporate bonds, we follow a rating-based reduced-form treatment of credit risk \citep{lando1998cox,feldhutter2008decomposing} and allow a regime-dependent transmission channel through which rate factors load into corporate default intensities.

\subsection{State dynamics with regime switching}
\label{subsec:theory_block_learning}

Let the latent state vector be
\begin{equation}
X_t=\bigl(X_{1,t},X_{2,t},X_{3,t},X_{4,t}\bigr)^\top\in\mathbb{R}^4 .
\end{equation}
The first three factors $\{X_{1,t},X_{2,t},X_{3,t}\}$ constitute a rate block identified from the CGB curve together with policy-bank (CDB) yields/spreads, while $X_{4,t}$ forms a credit block. In particular, $X_{3,t}$ is a CDB--CGB convenience-yield factor that captures a systematic spread between CDB and CGB bond valuations driven by non-credit components such as liquidity and collateral value \citep{KrishnamurthyVissingJorgensen2012,hu2022review}.

We introduce two latent regime components: a benchmark-rate regime $s_t^{r}$ and a systemic credit regime $s_t^{c}$,
\begin{align}
&s_t^{r}\in\{L,H\}, &&L:\ \text{low-rate environment},\quad H:\ \text{high-rate environment}, \nonumber\\
&s_t^{c}\in\{E,C\}, &&E:\ \text{credit expansion},\quad C:\ \text{credit contraction}.
\label{eq:regime_labels}
\end{align}
For notational convenience in state-contingent valuation, we define the composite regime index
\begin{equation}
S_t \equiv (s_t^{r},s_t^{c})\in\mathcal S \equiv \{L,H\}\times\{E,C\}.
\label{eq:joint_state_def}
\end{equation}
Crucially, \eqref{eq:joint_state_def} is introduced only as a bookkeeping device: the transition mechanism of the rate and credit regimes is specified separately, and the inference of the rate regime does not use credit-block information.

Both $s_t^{r}$ and $s_t^{c}$ are modeled as time-homogeneous continuous-time Markov chains with generators $Q^{r}$ and $Q^{c}$, respectively. For parsimony, we impose separable transition intensities in continuous time. Equivalently, the joint regime
$S_t=(s_t^r,s_t^c)\in\mathcal S^r\times\mathcal S^c$ is a CTMC with generator given by
\begin{equation}
Q = Q^{r}\oplus Q^{c},
\label{eq:joint_generator}
\end{equation}
where $\oplus$ denotes the Kronecker sum.\footnote{The Kronecker-sum restriction is imposed for parsimony: it implies that the \emph{transition intensities} of the rate and credit regimes are separable in continuous time. This does not preclude strong co-movement between interest rates and credit spreads. In particular, even with fixed $Q^{r}$ and $Q^{c}$, the \emph{filtered beliefs} about $s_t^{r}$ and $s_t^{c}$ can co-move because both blocks load on common economic forces through discounting and state-contingent valuation. The restriction can be relaxed by allowing a general generator on $\mathcal S$ if the data support cross-regime transition dependence.}
For $m\in\{r,c\}$ and $i\neq j$,
\begin{equation}
q^{m}_{ij}\ge 0,\qquad q^{m}_{ii}=-\sum_{j\neq i}q^{m}_{ij}.
\label{eq:regime_generator}
\end{equation}

We estimate the model under a block-recursive learning restriction---rate-regime inference uses only rate-block information---while maintaining separable regime transition intensities as in \eqref{eq:joint_generator}; details are deferred to Section~\ref{subsec:block_recursive}. This two-regime labeling is motivated by two strands of evidence. On the rate side, Markov-switching term-structure models document persistent shifts between low- and high-rate environments, reflecting changes in the macro--policy backdrop and discount-rate dynamics \citep{Hamilton1989,BansalZhou2002}. On the credit side, theory and evidence link economy-wide credit spreads and financing conditions to recurrent expansion--contraction phases of the credit cycle \citep{BernankeGertlerGilchrist1999,GilchristZakrajsek2012}.

\paragraph{RS--GCIR dynamics under $\mathbb P$ and $\mathbb Q$.}
We adopt a regime-switching generalized CIR (RS--GCIR) specification. The rate-block factors are governed by $s_t^{r}$, while the credit factor is governed by $s_t^{c}$. Let $B_t=(B_{1,t},B_{2,t},B_{3,t},B_{4,t})^\top$ be a four-dimensional standard Brownian motion (under $\mathbb P$) with $\mathbb E[dB_t\,dB_t^\top]=I_4\,dt$. To allow for contemporaneous correlation across factor innovations, let $L$ satisfy $\Omega=LL^\top$ and define $W_t$ by $dW_t=L\,dB_t$, so that $\mathbb E[dW_t\,dW_t^\top]=\Omega\,dt$.
Under the physical measure $\mathbb{P}$,
\begin{equation}
dX_{k,t}
= \kappa_k(s_{k,t})\bigl(\theta_k(s_{k,t})-X_{k,t}\bigr)\,dt
+ \sqrt{\alpha_k(s_{k,t})+\beta_k(s_{k,t})X_{k,t}}\;dW_{k,t},
\qquad k=1,2,3,4,
\label{eq:rs_gcir_P}
\end{equation}
where $s_{k,t}=s_t^{r}$ for $k=1,2,3$ and $s_{k,t}=s_t^{c}$ for $k=4$.
We impose standard admissibility conditions ensuring $\alpha_k(s)+\beta_k(s)X_{k,t}\ge 0$ almost surely.\footnote{For $\beta_k(s)>0$, letting $Y_{k,t}\equiv X_{k,t}+\alpha_k(s)/\beta_k(s)$ yields a CIR form with diffusion coefficient $\sqrt{\beta_k(s)Y_{k,t}}$, for which regime-by-regime Feller-type conditions can be imposed if desired.}

For valuation we work under an equivalent martingale measure $\mathbb{Q}$. We specify diffusion risk premia in affine form so that the RS--GCIR class is preserved under $\mathbb{Q}$. Specifically, for each factor $k$ and the relevant regime state $s\in\mathcal S_k$ (i.e., $\mathcal S_k=\{L,H\}$ for $k\le 3$ and $\mathcal S_k=\{E,C\}$ for $k=4$), we set the market price of diffusion risk to be
\begin{equation}
\Lambda_{k,t}
=\lambda_{k,s}\sqrt{\alpha_k(s)+\beta_k(s)X_{k,t}},
\qquad \lambda_{k,s}\in\mathbb{R}.
\label{eq:mpr}
\end{equation}
Let $\Lambda_t=(\Lambda_{1,t},\dots,\Lambda_{4,t})^\top$. The Girsanov change of measure
\begin{equation}
dW_{t}^{\mathbb Q}=dW_{t}+\Lambda_{t}\,dt
\end{equation}
implies that, conditional on $s_{k,t}=s$, the risk-neutral dynamics remain of G--CIR type:
\begin{equation}
dX_{k,t}
=
\kappa^{\mathbb Q}_k(s)\big(\theta^{\mathbb Q}_k(s)-X_{k,t}\big)\,dt
+
\sqrt{\alpha_k(s)+\beta_k(s)X_{k,t}}\;dW^{\mathbb Q}_{k,t},
\label{eq:rs_gcir_Q}
\end{equation}
with the parameter mapping
\begin{equation}\label{eq:rn-map-rsg}
\kappa^{\mathbb Q}_k(s)=\kappa_k(s)+\beta_k(s)\lambda_{k,s},
\qquad
\theta^{\mathbb Q}_k(s)=
\frac{\kappa_k(s)\theta_k(s)-\alpha_k(s)\lambda_{k,s}}{\kappa^{\mathbb Q}_k(s)}.
\end{equation}

Given the affine specification $\Lambda_{k,t}=\lambda_{k,s}\sqrt{\alpha_k(s)+\beta_k(s)X_{k,t}}$,
the $\mathbb{P}\to\mathbb{Q}$ mapping in \eqref{eq:rn-map-rsg} can be inverted. When $\beta_k(s)\neq 0$,
\begin{equation}\label{eq:implied-lambda-1}
\lambda_{k,s}=\frac{\kappa_k^{\mathbb Q}(s)-\kappa_k(s)}{\beta_k(s)}.
\end{equation}
Equivalently, when $\alpha_k(s)\neq 0$,
\begin{equation}\label{eq:implied-lambda-2}
\lambda_{k,s}=\frac{\kappa_k(s)\theta_k(s)-\kappa_k^{\mathbb Q}(s)\theta_k^{\mathbb Q}(s)}{\alpha_k(s)}.
\end{equation}
In the generic GCIR case $(\alpha_k(s),\beta_k(s)>0)$, \eqref{eq:implied-lambda-1} and
\eqref{eq:implied-lambda-2} coincide and provide a convenient consistency check. We impose $\kappa^{\mathbb Q}_k(s)>0$ for all $(k,s)$ to ensure mean reversion under $\mathbb{Q}$.
For parsimony, we assume the regime generators are invariant under $\mathbb{P}$ and $\mathbb{Q}$; hence the transition matrix over horizon $\Delta$ is $P^{m}(\Delta)=\exp(Q^{m}\Delta)$ for $m\in\{r,c\}$ (and $P(\Delta)=\exp(Q\Delta)$ for the composite chain when needed).

\subsection{Block-Recursive Identification and Two-Step Filtering Structure}
\label{subsec:block_recursive}

Let $Y_t^{r}$ denote observations from the benchmark rate block constructed from the China government bond (CGB) yield curve together with China Development Bank (CDB) yields, and let $Y_t^{c}$ denote the credit block observations (rating-sorted corporate yields). Define the filtration
\begin{equation}
\mathcal{I}_t^{r}:=\sigma\{Y_u^{r}:u\le t\},\qquad
\mathcal{I}_t^{c}:=\sigma\{Y_u^{c}:u\le t\},\qquad
\mathcal{I}_t:=\mathcal{I}_t^{r}\vee\mathcal{I}_t^{c}.
\label{eq:filtrations}
\end{equation}
We use $X_t^{r}\equiv (X_{1,t},X_{2,t},X_{3,t})$ for the rate block and $X_t^{c}\equiv X_{4,t}$ for the credit block. We work on a discrete observation grid with sampling interval $\Delta$ and normalize $\Delta$ to one unit in estimation. While the latent regimes driving the rate and credit blocks evolve with fixed (time-homogeneous) generators $Q^r$ and $Q^c$, we adopt a block-recursive learning structure: the benchmark rate regime and factors are identified from $\mathcal{I}_t^{r}$, and the credit regime is identified from $\mathcal{I}_t^{c}$ conditional on the rate block extracted in the first stage. This structure is motivated by evidence that risk-free-rate (monetary-policy) shocks transmit strongly to corporate yields and spreads, and that price discovery for benchmark curve movements tends to occur first in the more liquid government-bond market \citep{ioannidou2015monetary,Wright2011}. The key restrictions are: (i) the benchmark-rate regime is inferred from rate-block information only, and (ii) the credit regime is inferred from credit-block information conditional on a finite-dimensional summary of the first-stage rate-block posterior. Importantly, we do not allow the transition intensities of the credit regime to depend on the rate regime; any conditioning of the form $\pi_{t\mid t}^{c}(s^{c}\mid s^{r};\cdot)$ arises from state-contingent valuation and the observation structure rather than cross-regime transition dependence.

\begin{proposition}[Credit-regime inference conditional on the rate block]\label{prop:block_learning_general}
Let $\mathcal I_t=\mathcal I_t^r\vee \mathcal I_t^c$. Let $(X_t^{r},s_t^{r})$ denote the latent state driving the rate block (including its regime component), and let $(X_t^{c},s_t^{c})$ denote the latent state driving the credit block (including its regime component). Define the rate-block filtering distribution at time $t$ by
\[
\xi_t^{r}(\cdot)\ \equiv\ \pi\!\bigl((X_t^{r},s_t^{r})\in \cdot\ \big|\ \mathcal I_t^{r}\bigr).
\]
Let the rate-block posterior information be summarized by
\[
\mathcal R_t
\ \equiv\
\sigma\bigl\{\xi_t^{r}\bigr\},
\]
i.e., the smallest $\sigma$-field that makes the time-$t$ filtering distribution $\xi_t^r$ measurable.
Assume a block-recursive structure: (i) the rate-block observation and state equations are autonomous in the sense that, conditional on $(X_t^{r},s_t^{r})$, the rate-block data $\mathcal I_t^r$ do not depend on credit-side objects; and (ii) for each $t$,
\begin{equation}
s_t^c \ \perp\!\!\!\perp\ \mathcal I_t^r\ \big|\ \bigl(\mathcal I_t^c, X_t^r, s_t^r\bigr).
\label{eq:block_ci}
\end{equation}
Then, for any $t$ and any $s$ in the support of $s_t^c$,
\begin{equation}
\pi\!\left(s_t^{c}=s\ \big|\ \mathcal I_t\right)
=
\pi\!\left(s_t^{c}=s\ \big|\ \mathcal I_t^{c}\vee \mathcal R_t\right).
\label{eq:prop_exact_general}
\end{equation}
Moreover,
\begin{equation}
\pi\!\left(s_t^{c}=s\ \big|\ \mathcal I_t^{c},\mathcal I_t^{r}\right)
=
\mathbb E\!\left[
\pi\!\left(s_t^{c}=s\ \big|\ \mathcal I_t^{c}, X_t^r, s_t^r\right)
\ \Big|\ \mathcal I_t^{c},\mathcal I_t^{r}
\right],
\label{eq:prop_iterated_general}
\end{equation}
so that rate-block observations affect inference about $s_t^{c}$ only through their implications for the posterior of the latent rate-block state $(X_t^r,s_t^r)$ (and hence through the implied benchmark discount curve).
\end{proposition}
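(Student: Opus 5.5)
The plan is to prove the iterated-expectation identity \eqref{eq:prop_iterated_general} first, and then deduce \eqref{eq:prop_exact_general} by showing that the conditional expectation in \eqref{eq:prop_iterated_general} depends on $\mathcal I_t^r$ only through $\xi_t^r$.

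For \eqref{eq:prop_iterated_general}, the first step is the tower property with the intermediate $\sigma$-field $\mathcal G_t\equiv\mathcal I_t^c\vee\mathcal I_t^r\vee\sigma(X_t^r,s_t^r)\supseteq\mathcal I_t$:
\[
\pi(s_t^c=s\mid\mathcal I_t)=\mathbb E\bigl[\pi(s_t^c=s\mid\mathcal G_t)\,\big|\,\mathcal I_t\bigr].
\]
The conditional-independence assumption \eqref{eq:block_ci} says exactly that $\pi(s_t^c=s\mid\mathcal G_t)=\pi(s_t^c=s\mid\mathcal I_t^c,X_t^r,s_t^r)$ almost surely. Substituting this into the display gives \eqref{eq:prop_iterated_general}. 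This step is routine.

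For \eqref{eq:prop_exact_general}, write $h_s(\mathcal I_t^c;x,j)$ for a regular version of $\pi(s_t^c=s\mid\mathcal I_t^c,X_t^r=x,s_t^r=j)$, which exists because the state space is Polish. By \eqref{eq:prop_iterated_general},
\[
\pi(s_t^c=s\mid\mathcal I_t)=\int h_s(\mathcal I_t^c;x,j)\,\pi\bigl(d(x,j)\mid\mathcal I_t\bigr).
\]
The key step is to replace the joint posterior $\pi(\cdot\mid\mathcal I_t)$ of $(X_t^r,s_t^r)$ by something built from $\xi_t^r$ and $\mathcal I_t^c$. By Bayes' rule, this posterior is proportional to $\xi_t^r(d(x,j))\,p(\mathcal I_t^c\mid X_t^r=x,s_t^r=j,\mathcal I_t^r)$. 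Assumption (i) says the rate block is autonomous: $\mathcal I_t^r$ is generated from the rate state path alone, so the credit-block likelihood given the rate state does not further depend on the rate observations. I would state this reading of (i) explicitly as the Markov-type condition
\[
\mathcal I_t^c\perp\!\!\!\perp\mathcal I_t^r\mid(X_t^r,s_t^r).
\]
Given that condition, the posterior is a measurable functional $\Phi(\xi_t^r,\mathcal I_t^c)$. The integral above is then $\mathcal I_t^c\vee\mathcal R_t$-measurable. Since $\mathcal I_t^c\vee\mathcal R_t\subseteq\mathcal I_t$, because $\xi_t^r$ is $\mathcal I_t^r$-measurable, one more application of the tower property yields \eqref{eq:prop_exact_general}. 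Measurability of $\Phi$ in $\xi_t^r$ follows by endowing the space of probability measures with the weak topology and using measurability of integrals of bounded measurable kernels.

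The main obstacle is this last step. Assumptions (i) and (ii) as literally stated control $s_t^c$ given the current rate state. They do not obviously rule out dependence of $\mathcal I_t^c$ on the past rate path beyond $(X_t^r,s_t^r)$; such dependence would arise, for example, through discounting over time when past corporate prices load on past rate factors. The first thing to try is to read (i), together with the block-recursive structure, as making $(X_t^r,s_t^r)$ a sufficient bridge, and to state that reading explicitly. If that reading is too strong, I would fall back on enlarging the rate state to include the relevant path summaries. In that case $\xi_t^r$ would be the filter of the enlarged state, and the same argument goes through verbatim.
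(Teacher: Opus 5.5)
Your derivation of \eqref{eq:prop_iterated_general} is the paper's: the tower property through $\mathcal I_t^c\vee\mathcal I_t^r\vee\sigma(X_t^r,s_t^r)$ plus \eqref{eq:block_ci}. The difference lies in \eqref{eq:prop_exact_general}, where your argument is more careful than the paper's.

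The paper's proof never uses (i). It simply asserts that, since $\mathcal R_t=\sigma\{\xi_t^r\}$, anything entering ``only through the induced posterior of $(X_t^r,s_t^r)$'' can be conditioned on $\mathcal R_t$. But the mixing measure in \eqref{eq:prop_iterated_general} is the posterior of $(X_t^r,s_t^r)$ given $\mathcal I_t^c\vee\mathcal I_t^r$, not $\xi_t^r$. Turning that posterior into a functional of $(\xi_t^r,\mathcal I_t^c)$ is exactly your Bayes step, which requires $\mathcal I_t^c\perp\!\!\!\perp\mathcal I_t^r\mid(X_t^r,s_t^r)$. So you have made explicit a hypothesis the paper's last sentence uses tacitly.

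Your worry about this is justified: \eqref{eq:block_ci} plus a weak reading of (i), in which rate data are only independent of credit-side \emph{states} given the rate state, do not suffice. Let $R=(X_t^r,s_t^r)$, a past rate state $Z$, and $s_t^c$ be i.i.d.\ uniform bits, with $\mathcal I_t^r=\sigma(Z)$ and $\mathcal I_t^c=\sigma(s_t^c\oplus R,\ R\oplus Z)$.
\begin{itemize}
\item $\mathcal I_t^r\perp\!\!\!\perp s_t^c\mid R$ holds.
\item \eqref{eq:block_ci} holds, because $s_t^c$ is a function of $(\mathcal I_t^c,R)$.
\item $\xi_t^r$ is constant, so $\mathcal R_t$ is trivial.
\item Nevertheless $\pi(s_t^c=s\mid\mathcal I_t)\in\{0,1\}$, whereas $\pi(s_t^c=s\mid\mathcal I_t^c\vee\mathcal R_t)=\pi(s_t^c=s\mid\mathcal I_t^c)=1/2$.
\end{itemize}

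Under your reading of (i) your argument is complete and is the rigorous version of the paper's final sentence. Combined with \eqref{eq:block_ci} by contraction, that reading amounts to $\mathcal I_t^r\perp\!\!\!\perp(\mathcal I_t^c,s_t^c)\mid(X_t^r,s_t^r)$.

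Your fallback of enlarging the rate state to path summaries also goes through. However, $\xi_t^r$ then becomes the filter of the enlarged state, so you obtain a weaker reduction than the one the proposition advertises about the current state $(X_t^r,s_t^r)$.
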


\begin{proof}
Fix $t$ and $s$. By \eqref{eq:block_ci},
\[
\pi\!\left(s_t^{c}=s\ \big|\ \mathcal I_t^{c},\mathcal I_t^{r},X_t^r,s_t^r\right)
=
\pi\!\left(s_t^{c}=s\ \big|\ \mathcal I_t^{c},X_t^r,s_t^r\right).
\]
Taking conditional expectations given $(\mathcal I_t^{c},\mathcal I_t^{r})$ yields \eqref{eq:prop_iterated_general}. Finally, $\mathcal R_t$ is generated by the time-$t$ filtering distribution $\xi_t^{r}$, so any functional of $\mathcal I_t^r$ that enters inference only through the induced posterior of $(X_t^r,s_t^r)$ can be conditioned on $\mathcal R_t$ without loss, giving \eqref{eq:prop_exact_general}.
\end{proof}

Proposition \ref{prop:block_learning_general} is stated in terms of the exact posterior object $\mathcal R_t$. In estimation we inject a finite-dimensional summary $\widehat{\mathcal R}_t$ constructed from regime-conditional filtered states and regime beliefs. This approximation is disciplined by the first-stage regime-switching filter output and is treated as a generated regressor in the second-stage likelihood construction.

The rate-block observation equation is
\begin{equation}
Y_t^{r}=h^{r}\bigl(X_{1:3,t},s_t^{r}\bigr)+\varepsilon_t^{r},
\qquad
\varepsilon_t^{r}\mid s_t^{r}\sim\mathcal{N}\!\left(0,R^{r}_{s_t^{r}}\right),
\label{eq:obs_rate}
\end{equation}
with state dynamics \eqref{eq:rs_gcir_P} for $k=1,2,3$. A regime-switching filter yields the (time-$t$) filtered and (one-step-ahead) predicted beliefs
\begin{equation}
\pi_{t\mid t}^{r}(s)=\pi\!\left(s_t^{r}=s\mid \mathcal{I}_t^{r}\right),
\qquad
\pi_{t+\Delta\mid t}^{r}(s)=\pi\!\left(s_{t+\Delta}^{r}=s\mid \mathcal{I}_t^{r}\right),
\label{eq:pi_r_filtered_predicted}
\end{equation}
as well as filtered factor paths $\widehat{X}_{1:3,t}$.

Proposition \ref{prop:block_learning_general} implies that, for credit-regime inference and pricing, the rate-block data matter only through the posterior information $\mathcal R_t$ they generate about the latent rate-block state. In implementation, we inject a tractable finite-dimensional summary that preserves regime-contingent information from the first stage,
\begin{equation}
\widehat{\mathcal{R}}_t
=\sigma\Bigl\{\widehat{X}_{1,t}(s^{r}),\widehat{X}_{2,t}(s^{r}),\widehat{X}_{3,t}(s^{r}),\pi^{r}_{t\mid t}(s^{r})
:\ s^{r}\in\mathcal S^{r}\Bigr\},
\label{eq:rate_injection}
\end{equation}
where $\widehat{X}_{1:3,t}(s^{r})$ denotes the regime-conditional filtered rate factors under the event $\{s_t^{r}=s^{r}\}$ produced by the regime-switching filter, and $\pi^{r}_{t\mid t}(s^{r})$ denotes the corresponding filtered rate-regime belief.

The credit observation equation is
\begin{equation}
Y_t^{c}
= h^{c}\!\Bigl(X_{4,t},s_t^{c};\widehat{X}_{1:3,t}(s_t^{r}),s_t^{r}\Bigr)
+ \varepsilon_t^{c},
\qquad
\varepsilon_t^{c}\mid s_t^{c}\sim\mathcal{N}\!\left(0,R^{c}_{s_t^{c}}\right),
\label{eq:obs_credit}
\end{equation}
with $X_{4,t}$ following \eqref{eq:rs_gcir_P} indexed by $s_t^{c}$.
For each fixed $s^{r}\in\mathcal S^{r}$, define the conditional credit-regime filtering belief
\begin{equation}
\pi_{t\mid t}^{c}(s^{c}\mid s^{r})
=\pi\!\left(s_t^{c}=s^{c}\ \middle|\ \mathcal{I}_t^{c},\widehat{\mathcal{R}}_t,\,s_t^{r}=s^{r}\right).
\end{equation}
The (marginal) credit-regime belief integrates over the rate regime using the first-stage beliefs:
\begin{equation}
\pi_{t\mid t}^{c}(s^{c})
=\sum_{s^{r}\in\mathcal S^{r}}\pi_{t\mid t}^{c}(s^{c}\mid s^{r})\,\pi_{t\mid t}^{r}(s^{r}).
\label{eq:pi_c_marginal_mixture}
\end{equation}

We implement the two-step (block-recursive) estimation on a discrete observation grid and normalize the observation interval to one unit. The two-step RS--UKF recursions and the corresponding likelihood construction are given in Section~\ref{sec:estimation}. Throughout, we work with filtered (rather than smoothed) beliefs to avoid look-ahead bias.

\subsection{Rating-based corporate pricing under RS--GCIR}
\label{subsec:pricing_setup}

This section specifies the key pricing objects: (i) CGB and CDB zero-coupon bonds used for cash-flow valuation, and (ii) rating-based default and migration intensities for corporate bonds. We follow the rating-based reduced-form framework of \citet{lando1998cox,feldhutter2008decomposing} and combine it with the RS--GCIR state dynamics introduced above. The benchmark-rate regime $s_t^{r}$ and the systemic credit regime $s_t^{c}$ are modeled as separate latent Markov components with separable transition intensities (Section~\ref{subsec:theory_block_learning}), so that the credit-regime generator $Q^{c}$ does not vary across rate environments. Cross-block interactions therefore enter pricing only through state-contingent discounting and through the credit-risk driver that maps rate factors into default and migration intensities.

A distinctive feature of China’s onshore bond market is that the CGB curve may embed a sizable convenience component---stemming from liquidity, collateral specialness, and regulatory demand---which can depress CGB yields relative to CDB yields. Using the CGB curve to discount risky corporate cash flows would therefore mechanically inflate prices and compress implied credit spreads. We instead use the CDB curve as the benchmark discount curve for valuing corporate cash flows and explicitly model the CGB--CDB convenience-yield (spread) within the rate block. This choice isolates non-credit convenience effects from credit-risk pricing and yields a discounting benchmark that is closer to the marginal valuation curve used by market participants for corporate bonds.

Under $\mathbb Q$, we adopt an affine short-rate normalization in which the common benchmark discount-rate level is spanned by the first two rate factors, while the third factor captures the CGB--CDB convenience yield:
\begin{align}
r_t^{\mathrm{cgb}}
&= X_{1,t}+X_{2,t},
\label{eq:short_rate_cgb_norm}\\
r_t^{\mathrm{cdb}}
&= X_{1,t}+X_{2,t}+X_{3,t}.
\label{eq:short_rate_cdb_norm}
\end{align}
This normalization is an identification choice: any constant or affine level term can be absorbed into the definition of the state vector (equivalently, one may work with demeaned factors and a constant, without changing bond-price implications). With \eqref{eq:short_rate_cgb_norm}--\eqref{eq:short_rate_cdb_norm}, $X_{3,t}$ isolates the systematic spread between the CGB and CDB curves attributed to convenience yields, while $s_t^{r}$ governs the common rate environment for both markets.

We price CGB and CDB discount bonds under full regime switching, allowing $s_u^{r}$ to evolve over $u\in[t,T]$. For each $s\in\mathcal S^{r}$, define the regime-conditional discount-bond prices
\begin{align}
P_{\mathrm{cgb}}^{(s)}(t,T)
&=
\mathbb E_t^{\mathbb Q}\!\left[
\exp\!\left(-\int_t^T r_u^{\mathrm{cgb}}\,du\right)
\ \Big|\ X_t,\,s_t^{r}=s
\right],
\label{eq:bond_price_def_cgb_rs}\\
P_{\mathrm{cdb}}^{(s)}(t,T)
&=
\mathbb E_t^{\mathbb Q}\!\left[
\exp\!\left(-\int_t^T r_u^{\mathrm{cdb}}\,du\right)
\ \Big|\ X_t,\,s_t^{r}=s
\right].
\label{eq:bond_price_def_cdb_rs}
\end{align}
These regime-conditional prices solve a standard system of coupled pricing equations induced by the Markov generator $Q^{r}$; equivalently, they admit a dynamic-programming representation on the observation grid. In the RS--GCIR setting, the one-step pricing operator preserves an exponential-affine functional class, so the regime-conditional bond prices can be evaluated efficiently by backward recursion using the transition matrix $P^{r}(\Delta)=\exp(Q^{r}\Delta)$. Computational details are provided in Appendix~\ref{app:pricing}.

Observable bond prices are convex mixtures of regime-conditional prices using the time-$t$ rate-regime belief:
\begin{align}
P_{\mathrm{cgb}}(t,T)
&=\sum_{s^{r}\in\mathcal S^{r}}\pi_{t\mid t}^{r}(s^{r})\,P_{\mathrm{cgb}}^{(s^{r})}(t,T),
\label{eq:mixture_cgb}\\
P_{\mathrm{cdb}}(t,T)
&=\sum_{s^{r}\in\mathcal S^{r}}\pi_{t\mid t}^{r}(s^{r})\,P_{\mathrm{cdb}}^{(s^{r})}(t,T),
\label{eq:mixture_cdb}
\end{align}
where $\pi_{t\mid t}^{r}(s^{r})$ denotes the filtered belief about $s_t^{r}$ delivered by the rate-block filter. The associated yields are
$y^{\mathrm{cgb}}(t,T)=-\tau^{-1}\log P_{\mathrm{cgb}}(t,T)$ and
$y^{\mathrm{cdb}}(t,T)=-\tau^{-1}\log P_{\mathrm{cdb}}(t,T)$, where $\tau=T-t$.

We price corporate bonds in a rating-based reduced-form framework in the spirit of \citet{feldhutter2008decomposing}. 
For the non-regime benchmark, the Cox--Markov pricing identity and the associated spectral (mode) decomposition are derived in Appendix~\ref{sec:cox-pricing-onepiece}. Let $\eta_t\in\{1,\dots,K\}$ denote the issuer’s rating, where $\{1,\dots,K-1\}$ are non-default ratings and $K$ is an absorbing default state. The key pricing ingredients are (i) the discounting short rate for risky cash flows and (ii) a common positive driver that scales rating migration and default intensities.

Risky cash flows are discounted using the CDB curve. Under our normalization,
\begin{equation}
r_t^{\mathrm{cdb}} = X_{1,t}+X_{2,t}+X_{3,t},
\label{eq:rt_cdb_recall}
\end{equation}
so that the convenience-yield factor $X_{3,t}$ enters corporate valuation \emph{only through discounting} (i.e., through the pricing kernel), rather than through the default driver.

To accommodate four pass-through coefficients while keeping regime transitions separable (i.e., $Q^c$ invariant across $s^r$), we define a joint-regime dependent intensity driver. For each $(s^{r},s^{c})\in\mathcal S^{r}\times\mathcal S^{c}$, define
\begin{equation}
\mu_t^{(s^{r},s^{c})}
\ \equiv\
\mu\!\bigl(X_t,(s^{r},s^{c})\bigr)
\ :=\
c_{s^{c}\mid s^{r}}\bigl(X_{1,t}+X_{2,t}\bigr)+X_{4,t},
\label{eq:mu_driver_joint_state}
\end{equation}
where
\begin{equation}
c_{s^{c}\mid s^{r}}\in\{c_{E\mid L},c_{C\mid L},c_{E\mid H},c_{C\mid H}\}.
\end{equation}
This specification allows the pass-through from the rate block to credit risk to differ across rate environments without letting the credit-regime switching speed depend on the rate regime (the generator $Q^{c}$ remains the same for all $s^{r}$). We impose parameter and admissibility restrictions so that $\mu_t^{(s^{r},s^{c})}$ remains nonnegative in-sample; in estimation and pricing we additionally apply a small floor when needed to guard against numerical violations.

A modeling choice that matters directly for the pricing recursion is the treatment of recovery upon default. In the baseline specification, we impose a zero-recovery benchmark and interpret the resulting default component as an effective loss intensity under the risk-neutral measure. This assumption is adopted as a parsimonious normalization rather than as a literal statement that every defaulted bond delivers no ex post payoff. Available evidence for the Chinese credit bond market suggests that realized recoveries are generally low and highly dispersed, with substantial mass near the lower bound. Using default settlement data for 2014--2021, \cite{LiLiuWang2025} document that the average recovery rate is about 8\% within one year after default and about 11\% on a cumulative basis by the end of 2021, and that the cross-sectional distribution is markedly polarized, with observations concentrated near 0\% and 100\%. Complementary evidence in \cite{PanZhangTao2021} further shows that recovery outcomes vary systematically with bond and issuer characteristics, indicating substantial heterogeneity rather than a stable recovery parameter that can be cleanly identified in reduced-form term-structure estimation. Against this background, adopting a zero-recovery benchmark provides a defensible first-order approximation for our pricing exercise: it avoids introducing an additional weakly identified parameter and is consistent with an environment in which realized recoveries are, on average, low, delayed, and often close to zero. Accordingly, the pricing formulas below are written in terms of loss-adjusted intensities under \(\mathbb{Q}\). Under \(\mathbb{Q}\) and fractional recovery of market value, we work with loss-adjusted intensities. For each joint regime $(s^{r},s^{c})$ and $i\in\{1,\dots,K-1\}$ set
\[
a_{iK}^{(s^{r},s^{c})}(t)=\nu_i\,\mu_t^{(s^{r},s^{c})},\qquad \nu_i>0,
\]
and for migrations between non-default states $j\neq i$ impose
\[
a_{ij}^{(s^{r},s^{c})}(t)=\lambda_{ij}\,\mu_t^{(s^{r},s^{c})},\qquad i\neq j,\ \lambda_{ij}\ge 0.
\]
Let $\Lambda=(\lambda_{ij})$ denote a baseline rating generator (with diagonal entries given by negative row sums), and let $\widetilde\Lambda\in\mathbb R^{(K-1)\times(K-1)}$ denote its sub-generator on the non-default rating states. Assume $\widetilde\Lambda$ is diagonalizable and write the spectral decomposition
\[
\widetilde\Lambda=\widetilde B\,\widetilde D\,\widetilde B^{-1},
\qquad
\widetilde D=\mathrm{diag}(d_1,\dots,d_{K-1}),
\qquad d_j<0,
\]
where $\widetilde B$ collects the eigenvectors of $\widetilde\Lambda$ and $\widetilde D$ contains its eigenvalues. Following \citet{lando1998cox, feldhutter2008decomposing}, we encode the absorbing default state by defining the ``default column'' of $\widetilde B^{-1}$ as
\begin{equation}
\big[\widetilde B^{-1}\big]_{j,K}:=-\sum_{k=1}^{K-1}\big[\widetilde B^{-1}\big]_{j,k},
\qquad j=1,\dots,K-1.
\label{eq:default_column_Binv}
\end{equation}
This yields the (row-$i$) mode weights
\begin{equation}
w_{ij}:=-\big[\widetilde B\big]_{i,j}\,\big[\widetilde B^{-1}\big]_{j,K},
\qquad i=1,\dots,K-1,\ \ j=1,\dots,K-1,
\label{eq:wij_def}
\end{equation}
which satisfy $w_{ij}\ge 0$ and $\sum_{j=1}^{K-1}w_{ij}=1$.\footnote{Nonnegativity and unit row sums hold under the standard rating-based generator restrictions used in \citet{lando1998cox, feldhutter2008decomposing}; in practice we enforce these restrictions when calibrating $\widetilde\Lambda$.}

Let $S_t=(s_t^r,s_t^c)$ denote the joint regime with generator $Q=Q^r\oplus Q^c$.
For each joint regime $S=(s^r,s^c)$ and each non-default rating $i\in\{1,\dots,K-1\}$, define the time-$t$
pre-default value of a zero-coupon corporate bond maturing at $T$ by the mode decomposition
\begin{equation}
v^{\,i,S}(t,T)
=
\sum_{j=1}^{K-1} w_{ij}\,
\mathbb E_t^{\mathbb Q}\!\left[
\exp\!\left(
\int_t^T
\Big(
d_j\,\mu_u^{S_u}-r_u^{\mathrm{cdb}}
\Big)\,du
\right)
\ \Big|\ X_t,\,S_t=S
\right],
\label{eq:lando_mode_joint_kernel}
\end{equation}
where $\mu_u^{S_u}$ uses the contemporaneous joint regime $S_u=(s_u^r,s_u^c)$ in \eqref{eq:mu_driver_joint_state}. In the RS--GCIR setting, the pricing operator for each eigenmode preserves an exponential-affine functional class. Accordingly, $v^{\,i,S}(t,T)$ can be computed efficiently on the observation grid by backward recursion using the joint transition matrix $P(\Delta)=\exp(Q\Delta)$  (see Appendix~\ref{app:pricing} for the one-step affine blocks and the recursion).

Because defaultable-bond valuation is nonlinear through an exponential pricing kernel, observable prices must be formed by convex mixtures of joint-regime conditional prices at the \emph{price level}. Under block-recursive learning, we construct joint weights without joint filtering:
\begin{equation}
v^{\,i}(t,T)
=
\sum_{s^{r}\in\mathcal S^{r}}
\sum_{s^{c}\in\mathcal S^{c}}
\pi_{t\mid t}^{r}(s^{r})\,
\pi_{t\mid t}^{c}\!\left(s^{c}\mid s^{r};\widehat{\mathcal R}_t\right)\,
v^{\,i,(s^{r},s^{c})}(t,T),
\label{eq:corp_joint_mixture_block}
\end{equation}
where $\pi_{t\mid t}^{r}$ and $\pi_{t\mid t}^{c}(\cdot\mid\cdot)$ are filtered beliefs delivered by the two-step filter. This price-level mixing avoids Jensen-type distortions that would arise from averaging drivers or states before entering the exponential kernel. The corresponding corporate yield is $y^{i}(t,T)=-\tau^{-1}\log v^{\,i}(t,T)$, where $\tau=T-t$.

\section{Data and Empirical Design}
\label{sec:data}

This section describes the data used in the empirical analysis and the construction of key pricing inputs.
We use weekly fitted zero-coupon term structures from ChinaBond for (i) the rates block (CGB and CDB curves) and
(ii) the credit block (corporate curves by domestic rating) over 18 April 2014 to 31 July 2025.
To separate discounting from credit compensation, we construct matched-maturity spreads (CDB over CGB; corporate over CDB)
on a common maturity grid.
In addition, we build rating-migration and default inputs for pricing: a China-specific \(\mathbb P\)-measure transition matrix
from public rating-migration disclosures, and a parsimonious \(\mathbb Q\)-measure transition system and its intensity
representation implied by market term structures.
\subsection{Chinese market data}
China’s onshore bond market is dominated on the rates side by central government bonds (CGB) and policy-bank bonds, especially China Development Bank (CDB) bonds. The credit segment is largely issued by state-owned enterprises (SOEs), local government financing vehicles (LGFVs), and a smaller set of private firms. Domestic credit ratings follow a letter-grade system broadly comparable to international conventions, but market activity is highly concentrated in high ratings. Using Wind trading-volume data, Figure~\ref{fig:rating_share_2024} shows that secondary-market trading volume clusters in the AAA/AA range, motivating our focus on investment-grade rating buckets.

Our empirical analysis uses weekly zero-coupon yield curves from ChinaBond\footnote{\url{https://www.chinabond.com.cn/}} over the sample from 18 April 2014 to 31 July 2025. We start shortly after the first widely recognized onshore corporate bond default (Shanghai Chaori Solar on 7 March 2014), which marked a turning point in domestic credit-risk pricing. For the rate block, we use the ChinaBond fitted CGB zero-coupon curve and the fitted CDB zero-coupon curve. We retain maturities of 1, 2, 3, 4, 5, 7, and 10 years, which cover both the short end and the intermediate-to-long segment of the term structure.

For the credit block, corporate yields are taken from ChinaBond’s fitted zero-coupon curves by domestic rating. Guided by trading activity and to keep migration/default modeling tractable, we focus on four investment-grade buckets—AAA, AA+, AA, and AA-. We retain maturities of 1, 2, 3, 4, 5, 7, and 10 years; beyond 10 years, turnover is thinner and fitted curves are more sensitive to illiquidity and issue-specific noise.

We also construct matched-maturity spreads to separate credit and discounting components. The CDB spread is defined as the CDB zero-coupon yield minus the CGB zero-coupon yield at the same maturity. The corporate spread is defined as the corporate zero-coupon yield (by rating) minus the CDB zero-coupon yield at the same maturity. All series are aligned at the weekly frequency and on the same maturity grid.

Table~\ref{tab:ts_stats_corp_yields} reports summary statistics for corporate bond yields across the four rating buckets and maturities up to 10 years. Average yields increase with both maturity and credit risk, and lower-rated yields are modestly more volatile. Autocorrelations at one- and twelve-week horizons are close to unity, indicating very persistent dynamics. This motivates a flexible multi-factor specification that allows for slow-moving components and time variation in dynamics, including regime changes, rather than imposing a restrictive single-regime structure.

\begin{figure}[!t]
    \centering 
    \includegraphics[width=0.75\linewidth]{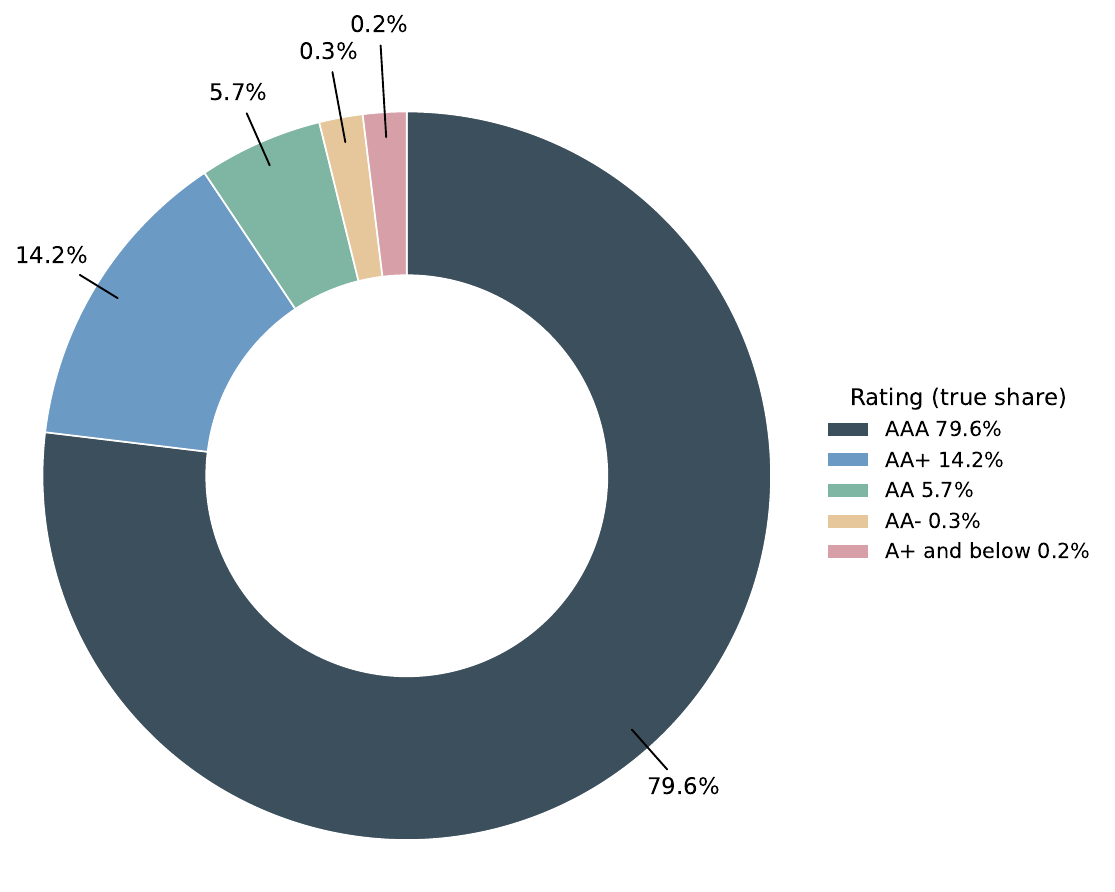}
    \caption{Rating distribution of China credit bond secondary-market trading volume (by notional amount), 2024 (source: Wind).}
    \label{fig:rating_share_2024}
    \vspace{-0.4em}
\end{figure}

\begin{table}[!t]
    \centering
    \caption{Time-series properties of weekly corporate zero-coupon yields by rating (maturities up to 10 years)}
    \label{tab:ts_stats_corp_yields}
    \footnotesize
    \setlength{\tabcolsep}{4pt}
    \renewcommand{\arraystretch}{1.1}
    \begin{tabular*}{\linewidth}{@{\extracolsep{\fill}} llrrrrrr}
        \toprule
        Rating & Maturity & Mean & Std. & AR(1) & AR(12) & ADF $p$-value & KPSS $p$-value \\
        \midrule
        AAA & 1Y  & 3.16 & 0.91 & 0.994 & 0.891 & 0.393 & 0.010 \\
            & 2Y  & 3.35 & 0.91 & 0.994 & 0.906 & 0.470 & 0.010 \\
            & 3Y  & 3.47 & 0.90 & 0.994 & 0.903 & 0.391 & 0.010 \\
            & 4Y  & 3.62 & 0.89 & 0.995 & 0.913 & 0.427 & 0.010 \\
            & 5Y  & 3.71 & 0.89 & 0.995 & 0.916 & 0.430 & 0.010 \\
            & 7Y  & 3.88 & 0.90 & 0.997 & 0.932 & 0.500 & 0.010 \\
            & 10Y & 4.03 & 0.90 & 0.997 & 0.940 & 0.622 & 0.010 \\
        \midrule
        AA+ & 1Y  & 3.38 & 1.00 & 0.994 & 0.898 & 0.385 & 0.010 \\
            & 2Y  & 3.59 & 1.00 & 0.995 & 0.910 & 0.262 & 0.010 \\
            & 3Y  & 3.73 & 1.00 & 0.995 & 0.908 & 0.303 & 0.010 \\
            & 4Y  & 3.92 & 1.00 & 0.995 & 0.919 & 0.333 & 0.010 \\
            & 5Y  & 4.04 & 1.00 & 0.996 & 0.920 & 0.506 & 0.010 \\
            & 7Y  & 4.29 & 1.03 & 0.997 & 0.936 & 0.464 & 0.010 \\
            & 10Y & 4.48 & 1.05 & 0.998 & 0.947 & 0.601 & 0.010 \\
        \midrule
        AA  & 1Y  & 3.60 & 1.06 & 0.994 & 0.902 & 0.400 & 0.010 \\
            & 2Y  & 3.86 & 1.07 & 0.996 & 0.916 & 0.326 & 0.010 \\
            & 3Y  & 4.06 & 1.05 & 0.996 & 0.915 & 0.382 & 0.010 \\
            & 4Y  & 4.33 & 1.06 & 0.996 & 0.927 & 0.609 & 0.010 \\
            & 5Y  & 4.48 & 1.06 & 0.996 & 0.931 & 0.469 & 0.010 \\
            & 7Y  & 4.77 & 1.11 & 0.998 & 0.943 & 0.703 & 0.010 \\
            & 10Y & 4.99 & 1.15 & 0.998 & 0.952 & 0.692 & 0.010 \\
        \midrule
        AA- & 1Y  & 5.18 & 1.01 & 0.994 & 0.862 & 0.798 & 0.010 \\
            & 2Y  & 5.56 & 1.05 & 0.995 & 0.898 & 0.899 & 0.010 \\
            & 3Y  & 5.82 & 1.06 & 0.996 & 0.902 & 0.925 & 0.010 \\
            & 4Y  & 6.13 & 1.09 & 0.996 & 0.919 & 0.885 & 0.010 \\
            & 5Y  & 6.30 & 1.10 & 0.996 & 0.920 & 0.874 & 0.010 \\
            & 7Y  & 6.69 & 1.04 & 0.997 & 0.937 & 0.980 & 0.010 \\
            & 10Y & 6.91 & 1.08 & 0.998 & 0.947 & 0.972 & 0.010 \\
        \bottomrule
    \end{tabular*}

    \vspace{0.3em}
    \raggedright\footnotesize
    Notes: This table reports summary statistics for weekly corporate zero-coupon yields from ChinaBond across four rating buckets (AAA, AA+, AA, and AA-) and maturities from 1 to 10 years. Mean and Std.\ denote the sample mean (in percent) and standard deviation. AR(1) and AR(12) are sample autocorrelations at lags 1 and 12 weeks. The ADF $p$-value corresponds to the augmented Dickey--Fuller test with an intercept (lag length selected by an information criterion). The KPSS $p$-value corresponds to the KPSS test of level stationarity with an intercept.
\end{table}

\subsection{Baseline State Specification and Measurement Mapping}
\label{subsec:baseline_state_spec}

The baseline latent-state vector in Section \ref{sec:model} is not chosen by mechanically applying statistical factor-selection criteria to the full panel of yields and spreads.
Instead, it is a structural specification designed to match the pricing decomposition of China’s bond market and the block-recursive identification strategy of the model.
The empirical objective is to separate three objects that are distinct in valuation:
(i) the common benchmark discount-rate component underlying the CGB and CDB curves,
(ii) the systematic CDB--CGB spread, and
(iii) the common credit-compensation component in corporate yields measured relative to CDB.
This leads to the four-state baseline
\[
X_t = (X_{1,t},X_{2,t},X_{3,t},X_{4,t})^\top,
\]
organized into a three-state rate block and a one-state credit block.

On the rates side, the data indicate that CGB and CDB curves share a large common term-structure component, but are not interchangeable.
A one-factor rate specification would be too restrictive for weekly zero-coupon curves from 1 to 10 years because it would primarily generate parallel shifts and would not provide enough flexibility for persistent slope variation.
We therefore use two common rate states, \(X_{1,t}\) and \(X_{2,t}\), as the minimal benchmark block for discounting dynamics.
These two states are intended to summarize the persistent level- and slope-type movements that jointly shape the CGB and CDB curves over the sample.

The third rate state, \(X_{3,t}\), is introduced for a different reason.
The matched-maturity CDB--CGB spread is persistent and economically meaningful in China’s onshore market, reflecting liquidity, collateral, and regulatory-demand effects rather than conventional corporate-type default risk.
Treating CGB and CDB as a single curve would therefore contaminate measured corporate spreads with movements in the policy-bank spread.
The role of \(X_{3,t}\) is to isolate this convenience-yield component and to keep the benchmark discount curve used for corporate valuation aligned with the CDB curve.

On the credit side, we adopt one latent systematic credit factor, \(X_{4,t}\), in the baseline specification.
This design reflects the fact that the corporate panels used in the paper are organized around investment-grade rating buckets and are priced in a rating-based reduced-form framework.
Cross-sectional heterogeneity across AAA, AA\(+\), AA, and AA\(-\) is handled primarily through rating-specific yields, migration/default inputs, and the matrix-exponential pricing structure, rather than by assigning an independent latent factor to each rating bucket.
The baseline credit state is therefore intended to capture the common time-varying compensation for credit risk beyond the CDB benchmark, while leaving finer cross-sectional or maturity-specific deviations to the rating structure and measurement residuals.

This specification is deliberately parsimonious.
The goal of the structural model is not to reproduce every statistical fluctuation in unrestricted yield panels, but to isolate a small set of priced states that remain economically interpretable and estimable in a regime-switching nonlinear state-space system.
In particular, enlarging the number of latent states would quickly increase the parameter dimension of the RS--GCIR model and weaken identification in the block-recursive filter, especially once separate rate and credit regimes are allowed for.
The four-state design should therefore be interpreted as a baseline structural decomposition of discounting, spread dynamics, and systematic credit compensation, rather than as a literal factor-count result from a purely statistical dimension-reduction exercise.

Table~\ref{tab:panel_state_mapping} summarizes the mapping from observable panels to the baseline state vector.
The next subsection turns from state-vector design to regime-number choice and asks whether the data support a parsimonious two-state regime structure.

\begin{table}[!t]
    \centering
    \caption{Mapping from observable panels to the baseline state specification.}
    \label{tab:panel_state_mapping}
    \footnotesize
    \renewcommand{\arraystretch}{1.2}
    \setlength{\tabcolsep}{6pt}
    \begin{threeparttable}
    \begin{tabular*}{\linewidth}{@{\extracolsep{\fill}} lccc}
        \toprule
        Observable panel & Baseline state(s) & Economic object & Role in the model \\
        \midrule
        CGB yields                & $X_{1,t},X_{2,t}$ & Common rate dynamics     & Benchmark discounting \\
        CDB--CGB spread           & $X_{3,t}$         & Convenience       & Separates CDB from CGB \\
        Corporate--CDB spreads    & $X_{4,t}$         & Systematic credit factor & Common credit compensation \\
        \bottomrule
    \end{tabular*}
    \begin{tablenotes}[flushleft]
        \footnotesize
        \item Notes: This table summarizes the economic mapping underlying the baseline latent-state vector
        \(X_t=(X_{1,t},X_{2,t},X_{3,t},X_{4,t})^\top\).
        The first two states capture the common benchmark term-structure movements needed for discounting.
        The third state isolates the persistent CDB--CGB spread, which reflects non-credit components such as liquidity,
        collateral, and regulatory demand.
        The fourth state captures the common time-varying credit component in corporate yields measured relative to CDB.
        The table is intended as a design summary rather than a statistical factor-selection result.
    \end{tablenotes}
    \end{threeparttable}
\end{table}

\subsection{Regime diagnostics and the two-state choice}

\label{sec:empirical}
\begin{figure}[!t]
    \includegraphics[width=0.95\linewidth]{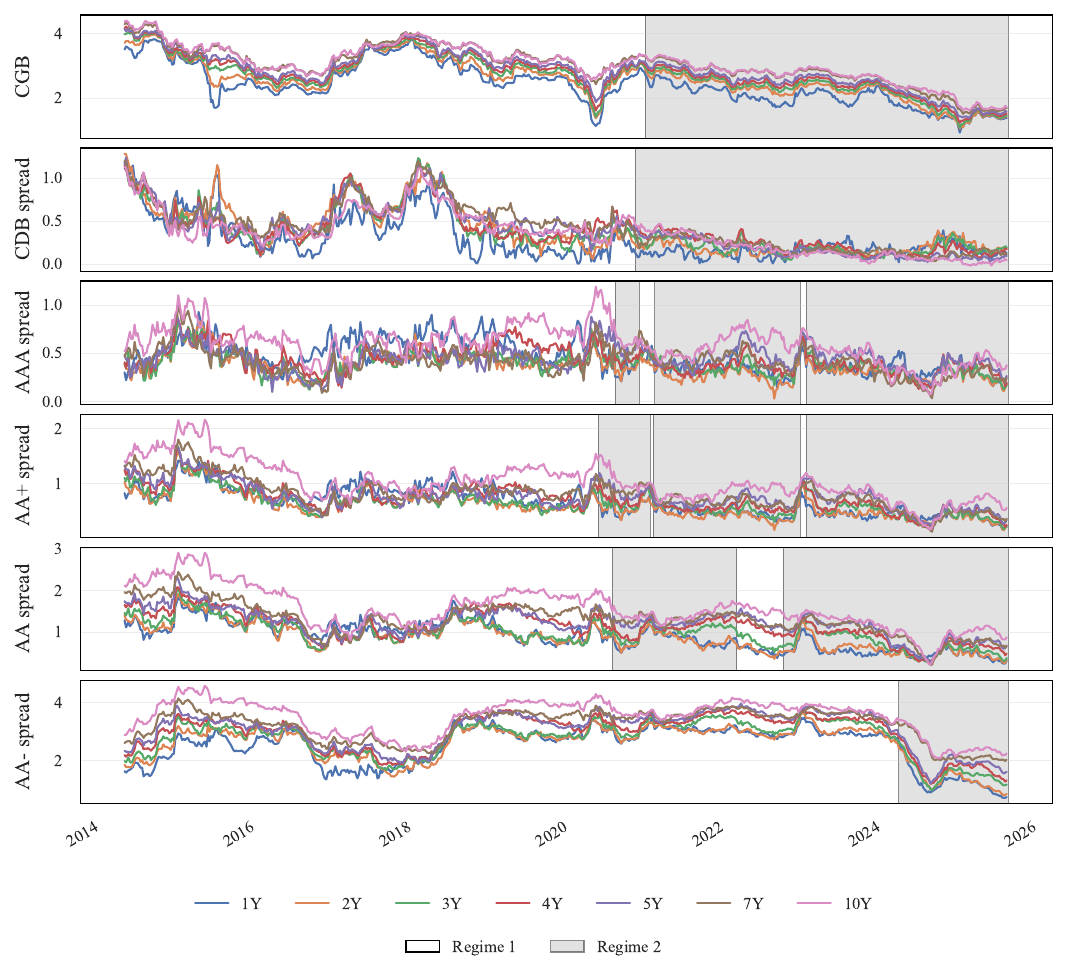}
    \centering
    \caption{Term structures of spreads for CGB, CDB, and investment-grade corporate buckets (AAA, AA\(+\), AA, AA\(-\)).
    The CDB spread is defined as the CDB yield minus the matched-maturity CGB yield.
    The corporate spread is defined as the corporate yield minus the matched-maturity CDB yield.
    Shaded areas mark the yield-based Gaussian-HMM regime sequence \(\widehat{s}^{\,\mathrm{HMM}}_{t}\in\{L,H\}\), where \(L\) (\(H\)) denotes the low-rate (high-rate) environment identified from weekly yields; see Appendix~\ref{app:hmm_diagnostics}.}
    \label{fig:spreads_plot}
\end{figure}

\begin{table}[!t]
    \centering
    \caption{Overall fit of Gaussian HMMs by bond segment (weekly yields).}
    \label{tab:gauss_hmm_fit_by_segment}
    \footnotesize
    \renewcommand{\arraystretch}{1.2}
    \setlength{\tabcolsep}{4pt}
    \begin{threeparttable}
    \begin{tabular*}{\linewidth}{@{\extracolsep{\fill}} lccc}
        \toprule
        Segment & $K = 1$ & $K = 2$ & $K = 3$ \\
        \midrule
        \multicolumn{4}{l}{\textbf{\textit{Log-likelihood}}} \\
        CGB  & $2318.95$ & $3189.72$ & $2934.08$ \\
        CDB  & $1891.06$ & $2822.21$ & $2786.99$ \\
        AAA  & $3041.25$ & $3798.26$ & $3765.61$ \\
        AA+  & $2565.46$ & $3390.36$ & $3194.05$ \\
        AA   & $1569.03$ & $2333.37$ & $2298.04$ \\
        AA-  & $998.22$  & $1713.10$ & $1747.06$ \\
        \midrule
        \multicolumn{4}{l}{\textbf{\textit{AIC}}} \\
        CGB  & $-4567.90$ & $-6233.43$ & $-5642.16$ \\
        CDB  & $-3712.13$ & $-5498.43$ & $-5347.98$ \\
        AAA  & $-6012.50$ & $-7450.52$ & $-7305.23$ \\
        AA+  & $-5060.93$ & $-6634.72$ & $-6162.10$ \\
        AA   & $-3068.06$ & $-4520.74$ & $-4370.08$ \\
        AA-  & $-1926.43$ & $-3280.20$ & $-3268.12$ \\
        \midrule
        \multicolumn{4}{l}{\textbf{\textit{BIC}}} \\
        CGB  & $-4417.06$ & $-5918.81$ & $-5155.14$ \\
        CDB  & $-3561.28$ & $-5183.80$ & $-4860.96$ \\
        AAA  & $-5861.65$ & $-7135.90$ & $-6818.21$ \\
        AA+  & $-4910.08$ & $-6320.10$ & $-5675.08$ \\
        AA   & $-2917.21$ & $-4206.12$ & $-3883.06$ \\
        AA-  & $-1775.59$ & $-2965.58$ & $-2781.10$ \\
        \bottomrule
    \end{tabular*}
    \begin{tablenotes}[flushleft]
        \footnotesize
        \item Notes: The table reports the maximized log-likelihood and the Akaike (AIC) and Bayesian (BIC) information criteria for Gaussian HMMs with $K\in\{1,2,3\}$ regimes, estimated separately for each yield segment.
        The sample length is $T=550$ weekly observations for every segment. Lower (more negative) AIC/BIC indicates a better fit after penalizing model complexity.
    \end{tablenotes}
    \end{threeparttable}
\end{table}

Before estimating the continuous-time RS--CIR model, we assess whether weekly yields exhibit discrete and persistent shifts that are consistent with a finite number of regimes.
We use Gaussian hidden Markov models (HMMs) as a reduced-form diagnostic.
The HMM is intentionally non-structural and does not impose no-arbitrage restrictions or CIR dynamics.

A Gaussian HMM combines two ingredients.
First, conditional on an unobserved regime, the data follow a Gaussian distribution with regime-specific mean and dispersion.
Second, regimes evolve as a Markov chain, which favors persistent episodes rather than isolated outliers.
This structure makes the model well suited to capture low-frequency changes in the level and variability of yields.
It also provides an operational way to compare different values of \(K\).
As \(K\) increases, the likelihood mechanically improves, but additional regimes are penalized by information criteria.
Hence, AIC and BIC offer a disciplined diagnostic for how many regimes are supported by the data. Following \citet{Hamilton1989}, we assume an unobserved Markov chain \(S_t \in \{1,\dots,K\}\).
Conditional on \(S_t=j\), the observed yield series has a Gaussian emission distribution,
\[
   y_t \mid (S_t=j) \sim \mathcal{N}(\mu_j,\Sigma_j), \qquad j = 1,\dots,K,
\]
where \(\mu_j\) summarizes the regime-specific mean level and \(\Sigma_j\) summarizes within-regime dispersion.
Regime transitions are governed by a time-homogeneous Markov matrix \(Q=(q_{ij})\), with
\(q_{ij}=\mathbb{P}(S_t=j\mid S_{t-1}=i)\).
Parameters are estimated by maximum likelihood using the forward--backward recursions and the EM/Baum--Welch algorithm
\citep{Hamilton1989,Gray1996ModelingTC,kim2017state}.

Figure~\ref{fig:spreads_plot} reports term structures of credit spreads and overlays the yield-based HMM state sequence using shaded areas.
We work with spreads rather than raw yields because spreads isolate risk premia from the common discount-rate component and allow a cleaner comparison across segments.
Specifically, the CDB spread is constructed as the CDB zero-coupon yield minus the matched-maturity CGB yield.
The corporate spread is constructed as the rating-specific corporate zero-coupon yield minus the matched-maturity CDB yield. This definition treats CDB as the intermediate benchmark and focuses the corporate spread on compensation beyond policy-bank funding conditions. The shaded classification yields long contiguous episodes, consistent with persistent regime dynamics rather than short-lived noise.
Across these episodes, spreads display systematic shifts in level and, in some panels, changes in slope across maturities.
The pattern is more pronounced for lower ratings, which exhibit higher spread levels and larger cross-maturity dispersion. At the same time, the figure shows transient widening and idiosyncratic movements that do not line up exactly with regime boundaries. This is expected because the HMM is designed to capture low-frequency shifts rather than event-driven shocks. The same regime-switching signature is visible in the yield panels reported in Appendix Figure~\ref{fig:app_hmm_yields}.
Taken together, the spread-based and yield-based evidence supports using the HMM as diagnostic guidance and motivates a parsimonious two-state structure in the structural RS--CIR model.

Table~\ref{tab:gauss_hmm_fit_by_segment} reports the maximized log-likelihood and the corresponding AIC and BIC for Gaussian HMMs with \(K\in\{1,2,3\}\), estimated separately by segment. Across the rate block (CGB and CDB) and the investment-grade buckets, moving from \(K=1\) to \(K=2\) yields a large improvement in fit, with substantially more favorable AIC and BIC values. In contrast, increasing to \(K=3\) delivers only limited incremental gains once model complexity is penalized; even for AA\(-\), where the log-likelihood rises under \(K=3\), both information criteria still favor \(K=2\). Appendix Table~\ref{tab:app_hmm_moments_yields} provides complementary support based on regime moments: under \(K=2\), the two regimes exhibit clearly separated mean levels and materially different dispersion proxies, whereas under \(K=3\) two of the three states often have very similar mean levels and dispersion proxies, making them difficult to distinguish in economic terms. Taken together, the information criteria in Table~\ref{tab:gauss_hmm_fit_by_segment} and the moment patterns in Appendix Table~\ref{tab:app_hmm_moments_yields} support a parsimonious two-regime specification as the most reliable benchmark.

\subsection{Rating migration data and risk-neutral calibration}
\label{subsec:rating_migration_Q}

This subsection develops the rating-migration inputs required for matrix-exponential corporate pricing under the risk-neutral measure $\mathbb Q$. The construction has three components: (i) a China-specific historical one-year transition matrix under $\mathbb P$ based on public migration disclosures, with tail-rating aggregation; (ii) a parsimonious one-year transition matrix under $\mathbb Q$ calibrated to spread-implied default probabilities following the rating-based reduced-form approach of \citet{lando2004}; and (iii) a continuous-time generator representation in which only default intensities are adjusted while the cross-rating migration network is held fixed, consistent with the decomposition idea in \citet{feldhutter2008decomposing}.

\paragraph{Historical transition matrix under $\mathbb P$ and state aggregation.}
China’s onshore credit market is highly concentrated in the AAA/AA+/AA segment, while ratings below $AA^{-}$ are relatively thin and noisy in both issuance and secondary trading. To improve statistical stability and align the migration system with the model state space, we construct a China-specific one-year transition matrix under $\mathbb P$ from the migration tables disclosed by \citet{CSPengyuan2025MigrationMatrix2024} and aggregate lower ratings into a coarse state space
\[
\mathcal S=\{\mathrm{AAA},\,\mathrm{AA}^+,\,\mathrm{AA},\,\mathrm{AA}^-,\,\mathrm{SG},\,\mathrm{D}\},
\]
where $\mathrm{SG}$ pools all domestic ratings below $AA^{-}$ and $\mathrm{D}$ denotes absorbing default.

Let $\mathcal S^{f}$ denote the fine-notch rating set reported in the disclosure. For $i,j\in\mathcal S^{f}$, let $N_{ij}$ be the number of obligors that start the year in fine rating $i$ and end the year in fine rating $j$, and define $N_i=\sum_j N_{ij}$. The fine-notch one-year transition probabilities are
\begin{equation}\label{eq:P_fine}
p_{ij}(1)=\frac{N_{ij}}{N_i},\qquad i,j\in\mathcal S^{f},
\end{equation}
with absorbing default imposed by $p_{DD}(1)=1$ and $p_{Dj}(1)=0$ for $j\neq D$.

The coarse transition matrix is obtained by pooling fine-notch transitions by destination and origin buckets. For any $I,J\in\mathcal S$, define
\begin{equation}\label{eq:P_coarse_simple}
\bar P^{\mathbb P}_{IJ}(1)
=\frac{\sum_{i\in I}\sum_{j\in J}N_{ij}}{\sum_{i\in I}N_i},
\qquad I,J\in\mathcal S,
\end{equation}
where $i\in I$ (respectively $j\in J$) means that fine rating $i$ (respectively $j$) belongs to coarse bucket $I$ (respectively $J$). Because disclosed tables may involve rounding, we apply a final row normalization for non-default rows:
\begin{equation}\label{eq:P_row_norm_simple}
P^{\mathbb P}_{IJ}(1)
=\frac{\bar P^{\mathbb P}_{IJ}(1)}{\sum_{L\in\mathcal S}\bar P^{\mathbb P}_{IL}(1)},
\qquad I\neq D,
\end{equation}
while preserving the absorbing default row by construction. This aggregation reduces noise in the lower-rating tail, avoids weakly identified high-dimensional migration systems, and matches the focus on actively traded investment-grade buckets.

\paragraph{Risk-neutral calibration from market curves.}
We next construct a one-year transition matrix under $\mathbb Q$ using spread-implied default probabilities from rating zero-coupon curves. Let
\[
s_i(T)=y_i(T)-y^{\mathrm{ref}}(T),
\]
denote the matched-maturity spread of rating $i\in\mathcal R:=\mathcal S\setminus\{D\}$ over a reference discounting curve $y^{\mathrm{ref}}(T)$ (baseline: the fitted CDB zero-coupon curve). Under a reduced-form approximation with constant recovery $\delta\in[0,1)$, the spread-implied cumulative default probability is approximated by
\begin{equation}\label{eq:qimp_def_simple}
q_i^{\mathrm{imp}}(T)\approx \frac{1-\exp\!\big(-s_i(T)T\big)}{1-\delta},\qquad i\in\mathcal R.
\end{equation}
When necessary, $q_i^{\mathrm{imp}}(T)$ is truncated to $[0,1]$. As usual, this object is interpreted as default-equivalent and may absorb non-credit spread components (e.g., liquidity or convenience effects).

To translate $\{q_i^{\mathrm{imp}}(T)\}$ into a tractable migration system, we start from $P^{\mathbb P}(1)$ and apply a row-wise distortion that changes only default probabilities while preserving the relative proportions of non-default transitions within each row. For each $i\in\mathcal R$, let $\pi_i\in(0,1)$ denote the one-year risk-neutral default probability and set
\begin{equation}\label{eq:PQ_default_simple}
P^{\mathbb Q}_{iD}(1)=\pi_i,\qquad
P^{\mathbb Q}_{ij}(1)=
\frac{1-\pi_i}{1-P^{\mathbb P}_{iD}(1)}\,P^{\mathbb P}_{ij}(1),
\quad j\in\mathcal S,\ j\neq D.
\end{equation}
The default row remains absorbing: $P^{\mathbb Q}_{DD}(1)=1$ and $P^{\mathbb Q}_{Dj}(1)=0$ for $j\neq D$.

The vector $\pi=(\pi_i)_{i\in\mathcal R}$ is estimated by matching the model-implied default probabilities on an integer-year horizon grid:
\begin{equation}\label{eq:pi_match_ls}
\widehat\pi\in
\arg\min_{\pi\in(0,1)^K}
\sum_{t=1}^{t_{\max}}\sum_{i\in\mathcal R}\omega_{it}
\left(\big[(P^{\mathbb Q}(1))^t\big]_{iD}-q_i^{\mathrm{imp}}(t)\right)^2,
\qquad \omega_{it}\ge 0,
\end{equation}
where $K=|\mathcal R|$ is the number of non-default rating buckets. Table~\ref{tab:P_transition_matrix} reports the resulting one-year transition matrix under $\mathbb Q$.

\begin{table}[!t]
\centering
\begin{threeparttable}
\caption{One-year risk-neutral rating transition matrix \(P^{\mathbb{Q}}(1)\) (rounded to four decimals).}
\label{tab:P_transition_matrix}
\setlength{\tabcolsep}{6pt}
\renewcommand{\arraystretch}{1.05}
\begin{tabularx}{\linewidth}{l *{6}{>{\centering\arraybackslash}X}}
\toprule
& AAA & AA+ & AA & AA- & SG & D \\
\midrule
AAA & 0.9978 & 0.0000 & 0.0000 & 0.0000 & 0.0000 & 0.0022 \\
AA+ & 0.0064 & 0.9719 & 0.0174 & 0.0013 & 0.0002 & 0.0028 \\
AA  & 0.0000 & 0.0060 & 0.9717 & 0.0187 & 0.0002 & 0.0034 \\
AA- & 0.0000 & 0.0007 & 0.0071 & 0.9798 & 0.0033 & 0.0091 \\
SG  & 0.0000 & 0.0000 & 0.0064 & 0.0208 & 0.8935 & 0.0792 \\
D   & 0.0000 & 0.0000 & 0.0000 & 0.0000 & 0.0000 & 1.0000 \\
\bottomrule
\end{tabularx}
\begin{tablenotes}[flushleft]
\footnotesize
\item Notes: States are \(\{\mathrm{AAA},\mathrm{AA}^+,\mathrm{AA},\mathrm{AA}^-,\mathrm{SG},\mathrm{D}\}\). \(\mathrm{SG}\) pools all domestic ratings below \(AA^-\), and \(\mathrm{D}\) is absorbing. Rows sum to one up to rounding.
\end{tablenotes}
\end{threeparttable}
\end{table}

\paragraph{Continuous-time generator and default-intensity adjustment.}
Matrix-exponential pricing is implemented with a continuous-time generator on the rating state space. Let the issuer rating be indexed as in Section~2.3 by $\eta_t\in\{1,\dots,K\}$, where $\{1,\dots,K-1\}$ are non-default ratings and $K$ is the absorbing default state. Define the non-default set $\mathcal R:=\{1,\dots,K-1\}$.

For $i,j\in\mathcal R$ with $i\neq j$, let $\lambda_{ij}\ge 0$ denote the risk-neutral migration intensity from rating $i$ to rating $j$, and define the total non-default migration exit intensity
\begin{equation}\label{eq:lambda_row_sum_def}
\lambda_i=\sum_{j\in\mathcal R,\ j\neq i}\lambda_{ij},\qquad i\in\mathcal R.
\end{equation}
Collect these intensities into the non-default migration matrix $\Lambda\in\mathbb R^{(K-1)\times(K-1)}$,
\begin{equation}\label{eq:Lambda_def}
\Lambda_{ij}=\lambda_{ij}\ (i\neq j),\qquad
\Lambda_{ii}=-\lambda_i,
\end{equation}
so that $\Lambda\mathbf 1=\mathbf 0$.

Let $\nu=(\nu_i)_{i\in\mathcal R}\in\mathbb R_+^{K-1}$ denote the baseline default-intensity vector implied by the embedded one-year matrix $P^{\mathbb Q}(1)$.\footnote{The embedding is obtained from a matrix-logarithm step with regularization/projection when needed to enforce a valid conservative generator with absorbing default.}
The corresponding $K\times K$ generator is
\begin{equation}\label{eq:Q_generator_baseline}
\mathcal Q(\nu)=
\begin{pmatrix}
\Lambda-\mathrm{diag}(\nu) & \nu\\
\mathbf 0^\top & 0
\end{pmatrix},
\end{equation}
where the upper-left block is $(K-1)\times(K-1)$, the default column $\nu$ is $(K-1)\times 1$, and the last row corresponds to the absorbing default state.

In the final calibration step, the off-diagonal migration network $\{\lambda_{ij}\}_{i\neq j}$ is kept fixed, and only the default intensities are adjusted:
\begin{equation}\label{eq:delta_nu_def}
\Delta\nu=(\Delta\nu_i)_{i\in\mathcal R}\in\mathbb R_+^{K-1},\qquad
\nu \mapsto \nu+\Delta\nu .
\end{equation}
Under this restriction, the update affects only the default column and the diagonal of the non-default block:
\begin{equation}\label{eq:nu_and_diag_update}
\mathcal Q(\nu+\Delta\nu)=
\begin{pmatrix}
\Lambda-\mathrm{diag}(\nu+\Delta\nu) & \nu+\Delta\nu\\
\mathbf 0^\top & 0
\end{pmatrix}.
\end{equation}
Equivalently, for each $i\in\mathcal R$,
\[
-\lambda_i-\nu_i
\;\longmapsto\;
-\lambda_i-\nu_i-\Delta\nu_i,
\]
while all off-diagonal migration intensities $\lambda_{ij}$ ($j\neq i$) remain unchanged.

The adjustment preserves the generator constraints because
\begin{equation}\label{eq:row_sum_zero_adjusted}
\big(\Lambda-\mathrm{diag}(\nu+\Delta\nu)\big)\mathbf 1+(\nu+\Delta\nu)=\mathbf 0,
\end{equation}
and all off-diagonal entries remain nonnegative. Hence $\mathcal Q(\nu+\Delta\nu)$ is a valid conservative generator and $\exp\!\big(\mathcal Q(\nu+\Delta\nu)t\big)$ is a valid transition matrix for all $t\ge 0$.

Recall the joint regime $S_t=(s_t^{r},s_t^{c})$ and the nonnegative credit-risk driver $\mu_t(S_t)$ defined in \eqref{eq:mu_driver_joint_state}. Under fractional recovery of market value, the loss-adjusted generator is
\begin{equation}\label{eq:A_t_loss_adjusted_final}
A_t^{\mathbb Q}(S_t)
=\mu_t(S_t)\,\mathcal Q(\nu+\Delta\nu).
\end{equation}
The vector $\Delta\nu$ is estimated conditional on the previously estimated term-structure and state-dynamics parameters by minimizing the relative RMSE between observed and model-implied yields. If needed for numerical stability, we additionally impose shape restrictions (e.g., monotonicity across rating buckets).

\smallskip
\noindent
Table~\ref{tab:P_transition_matrix} reports the pre-adjustment benchmark $P^{\mathbb Q}(1)$ used to initialize the continuous-time generator. The final pricing generator incorporates the subsequent $\Delta\nu$-adjustment.

\section{Estimation}
\label{sec:estimation}
We estimate the RS--GCIR model on a weekly panel of China government bond yields (CGB; central government),
China Development Bank (CDB) bond yields, and rating-sorted corporate bond yields. The model admits a
nonlinear regime-switching state-space representation with latent continuous factors $X_t$ and two latent
Markov components: a benchmark-rate regime $s_t^{r}$ and a systemic credit regime $s_t^{c}$. For
state-contingent valuation, we use the joint index $S_t=(s_t^{r},s_t^{c})$ as a bookkeeping device, while
maintaining separable regime transition intensities as in Section~\ref{subsec:theory_block_learning}. We
implement a block-recursive (two-step) estimation strategy: the rate block is filtered first using only
rate-block information to recover regime-conditional filtered states
$\{\widehat X_{1:3,t}^{(s^{r})}\}_{s^{r}\in\mathcal S^{r}}$ and filtered regime probabilities
$\pi^{r}_{t\mid t}(\cdot)$, and the credit block is then filtered conditional on these first-stage outputs
(generated regressors \citep{Pagan1984,murphy2002estimation}. Throughout, we normalize the observation
interval to one unit (one week), i.e., $\Delta=1$, and use the $t-1/t$ notation for prediction and updating.

In the rate block, the observation vector $Y_t^{r}$ (CGB curve and CDB yields) is linked to the rate
factors and the rate regime through
\begin{equation}
Y_t^{r}=h^{r}\bigl(X_{1:3,t},s_t^{r}\bigr)+\varepsilon_t^{r},
\qquad
\varepsilon_t^{r}\mid s_t^{r}\sim\mathcal{N}\!\left(0,R^{r}_{s_t^{r}}\right).
\label{eq:obs_rate_est}
\end{equation}
In the credit block, the observation vector $Y_t^{c}$ (rating-sorted corporate yields/spreads) is generated by
a nonlinear measurement map that depends on the credit state and regime and on the injected rate-block outputs.
Under the joint-regime conditional valuation route (Section~\ref{subsec:pricing_setup}), we write
\begin{equation}
Y_t^{c}
=
h^{c}\!\Bigl(X_{4,t},s_t^{c};\,\widehat X_{1:3,t}^{(s_t^{r})},\,s_t^{r}\Bigr)+\varepsilon_t^{c},
\qquad
\varepsilon_t^{c}\mid s_t^{c}\sim\mathcal{N}\!\left(0,R^{c}_{s_t^{c}}\right),
\label{eq:obs_credit_est}
\end{equation}
where $\widehat X_{1:3,t}^{(s^{r})}$ denotes the regime-conditional filtered rate state from the first-stage
filter. The dependence on $s_t^{r}$ reflects that corporate pricing objects (discounting and the pass-through
coefficient entering the intensity driver) are evaluated under the rate regime. The injected information set
$\widehat{\mathcal R}_t$ (defined in Section~\ref{subsec:block_recursive}) provides a finite-dimensional summary
of the rate-block posterior used in the second-stage recursion.

The crucial econometric feature is that $h^{c}(\cdot)$ is intrinsically nonlinear. Corporate bond prices are
computed under $\mathbb Q$ by combining (i) regime-dependent discounting from the RS--GCIR rate block with (ii)
rating-contingent cash-flow weights generated by a risk-neutral rating generator through the matrix-exponential
operator, and observed yields are obtained only after a price-to-yield transformation. This matrix-based migration
channel (and the subsequent yield inversion) induces substantial curvature in the measurement equation, especially
when joint-regime uncertainty enters the pricing kernel. We therefore evaluate the likelihood with a regime-switching
unscented Kalman filter (RS--UKF), which propagates sigma points through $h^{r}$ and $h^{c}$ rather than relying on first-order Taylor linearization. 

\subsection{Two-step likelihood evaluation.}
Let $Q^{r}$ and $Q^{c}$ denote the continuous-time generators for the rate and credit regimes, and let
\(
P^{m}(\Delta)=\exp(Q^{m}\Delta)
\)
be the $\Delta$-horizon transition matrix for $m\in\{r,c\}$.
With weekly data we set $\Delta=1$ and write $P^{m}\equiv P^{m}(1)$ with elements $p^{m}_{s's}=[P^{m}]_{s's}$.
For each regime $j$, the RS--GCIR dynamics imply a non-Gaussian conditional distribution for the factors.
Following the standard UKF construction, we approximate the one-step conditional distribution by moment matching,
\begin{equation}
X_t \mid (X_{t-1}, s_t=j)
\;\approx\;
\mathcal N\!\Big(m_j(X_{t-1}),\,V_j(X_{t-1})\Big),
\label{eq:state_moment_match_est}
\end{equation}
where $m_j(\cdot)$ and $V_j(\cdot)$ are implied by the RS--GCIR dynamics under the one-week discretization
(with a frozen-regime approximation within the week). Given parameters, the RS--UKF recursion predicts regime
probabilities, computes the regime-conditional predictive distribution of the observations via the unscented
transform, and evaluates the one-step predictive density as a finite mixture across regimes.

\medskip
\noindent\textit{Stage 1 (rate block): rate-regime filtering and likelihood.}
Given \eqref{eq:obs_rate_est}, define the regime-conditional one-step predictive likelihood
\[
\ell_t^{\,r}(s^{r})
\;\equiv\;
f\!\bigl(Y_t^{r}\mid \mathcal I_{t-1}^{r},\,s_t^{r}=s^{r}\bigr),
\qquad s^{r}\in\mathcal S^{r},
\]
evaluated numerically by the RS--UKF (Appendix~\ref{app:rsukf}). The predictive and filtered rate-regime
probabilities update by
\[
\pi_{t\mid t-1}^{r}(s^{r})
=
\sum_{s'\in\mathcal S^{r}}\pi_{t-1\mid t-1}^{r}(s')\,p^{r}_{s's^{r}},
\qquad
\pi_{t\mid t}^{r}(s^{r})
=
\frac{\ell_t^{\,r}(s^{r})\,\pi_{t\mid t-1}^{r}(s^{r})}
{\sum_{\tilde s^{r}\in\mathcal S^{r}}\ell_t^{\,r}(\tilde s^{r})\,\pi_{t\mid t-1}^{r}(\tilde s^{r})}.
\]
The implied one-step predictive density is the regime mixture
\begin{equation}
p\!\left(Y_t^{r}\mid \mathcal I_{t-1}^{r}\right)
=
\sum_{s^{r}\in\mathcal S^{r}}
\pi_{t\mid t-1}^{r}(s^{r})\,\ell_t^{\,r}(s^{r}),
\label{eq:pred_density_rate}
\end{equation}
and the first-stage log-likelihood is
\begin{equation}
\mathcal L^{r}(\Theta^{r})
=
\sum_{t}
\log p\!\left(Y_t^{r}\mid \mathcal I_{t-1}^{r};\Theta^{r}\right).
\label{eq:loglik_rate}
\end{equation}
The first-stage outputs $\bigl(\widehat X_{1:3,t}^{(s^{r})},\pi_{t\mid t}^{r}(s^{r})\bigr)_{s^{r}\in\mathcal S^{r}}$
form the injected information set $\widehat{\mathcal R}_{t}$ used in the second stage.

\medskip
\noindent\textit{Stage 2 (credit block): conditional credit-regime filtering and likelihood.}
Let $\ell_t^{\,c}(s^{c}\mid s^{r};\widehat{\mathcal R}_{t})$ denote the joint-regime conditional one-step
predictive likelihood contribution,
\[
\ell_t^{\,c}\!\left(s^{c}\mid s^{r};\widehat{\mathcal R}_{t}\right)
\;\equiv\;
f\!\bigl(Y_t^{c}\mid \mathcal I_{t-1}^{c},\,\widehat{\mathcal R}_{t},\,s_t^{c}=s^{c},\,s_t^{r}=s^{r}\bigr),
\qquad s^{c}\in\mathcal S^{c},
\]
evaluated numerically by propagating sigma points through $h^{c}$ in \eqref{eq:obs_credit_est}.
For each fixed $s^{r}\in\mathcal S^{r}$, the conditional credit-regime probabilities evolve with transition matrix
$P^{c}$ as
\[
\pi_{t\mid t-1}^{c}\!\left(s^{c}\mid s^{r}\right)
=
\sum_{s'\in\mathcal S^{c}}
\pi_{t-1\mid t-1}^{c}\!\left(s'\mid s^{r}\right)\,
p^{c}_{s's^{c}},
\]
and update to the conditional filtered probabilities
\[
\pi_{t\mid t}^{c}\!\left(s^{c}\mid s^{r};\widehat{\mathcal R}_{t}\right)
=
\frac{\ell_t^{\,c}\!\left(s^{c}\mid s^{r};\widehat{\mathcal R}_{t}\right)\,
\pi_{t\mid t-1}^{c}\!\left(s^{c}\mid s^{r}\right)}
{\sum_{\tilde s^{c}\in\mathcal S^{c}}
\ell_t^{\,c}\!\left(\tilde s^{c}\mid s^{r};\widehat{\mathcal R}_{t}\right)\,
\pi_{t\mid t-1}^{c}\!\left(\tilde s^{c}\mid s^{r}\right)}.
\]
Given the already-updated rate posterior at date $t$, the credit-block one-step predictive density conditional on
$\widehat{\mathcal R}_{t}$ is
\begin{equation}
p\!\left(Y_t^{c}\mid \mathcal I_{t-1}^{c},\widehat{\mathcal R}_{t}\right)
=
\sum_{s^{r}\in\mathcal S^{r}}
\pi_{t\mid t}^{r}(s^{r})
\sum_{s^{c}\in\mathcal S^{c}}
\pi_{t\mid t-1}^{c}\!\left(s^{c}\mid s^{r}\right)\,
\ell_t^{\,c}\!\left(s^{c}\mid s^{r};\widehat{\mathcal R}_{t}\right),
\label{eq:pred_density_credit}
\end{equation}
and the second-stage log-likelihood is
\begin{equation}
\mathcal L^{c}(\Theta^{c}\mid \widehat{\mathcal R})
=
\sum_{t}
\log p\!\left(Y_t^{c}\mid \mathcal I_{t-1}^{c},\widehat{\mathcal R}_{t};\Theta^{c}\right).
\label{eq:loglik_credit}
\end{equation}
Finally, the marginal credit-regime probability (useful for diagnostics) is obtained by mixing over the rate regime,
\begin{equation}
\pi_{t\mid t}^{c}(s^{c})
=
\sum_{s^{r}\in\mathcal S^{r}}
\pi_{t\mid t}^{r}(s^{r})\,
\pi_{t\mid t}^{c}\!\left(s^{c}\mid s^{r};\widehat{\mathcal R}_{t}\right).
\end{equation}

We estimate $\Theta^{r}$ by maximizing \eqref{eq:loglik_rate}. Conditional on the injected first-stage outputs
$\{\widehat{\mathcal R}_{t}\}_{t}$ (treated as generated regressors), we estimate $\Theta^{c}$ by maximizing
\eqref{eq:loglik_credit}. We implement a collapsing (moment-matching) step within the regime-mixture recursion to
stabilize the filter and carry forward a tractable Gaussian summary. Missing observations are handled by removing
the corresponding entries of $Y_t^{r}$ or $Y_t^{c}$ and the associated rows/columns of the predictive moments at
date $t$.

\subsection{Standard errors.}
Because the RS--UKF evaluates a numerically approximated likelihood in a nonlinear regime-switching
state-space model, we treat the resulting estimator as quasi-maximum likelihood (QMLE) and report robust
(sandwich) standard errors. Let the parameter vector be $\Theta=(\Theta^{r},\Theta^{c})$ and denote the
(per-week) log-likelihood contribution by
\(
\ell_t(\Theta)=\log p(Y_t^{r}\mid\mathcal I_{t-1}^{r};\Theta^{r})
+\log p(Y_t^{c}\mid\mathcal I_{t-1}^{c},\widehat{\mathcal R}_{t}(\Theta^{r});\Theta^{c}),
\)
where the second term depends on generated regressors $\widehat{\mathcal R}_{t}$ from the first-step filter.

Let $g_t(\Theta)=\partial \ell_t(\Theta)/\partial \Theta$ be the score and define the usual sandwich ingredients
\[
H(\Theta)=-\sum_{t}\frac{\partial^{2}\ell_t(\Theta)}{\partial\Theta\,\partial\Theta^{\top}},
\qquad
J(\Theta)=\sum_{t} g_t(\Theta)\,g_t(\Theta)^{\top}.
\]
We obtain $g_t(\Theta)$ and $H(\Theta)$ numerically using finite differences of the UKF-evaluated log-likelihood.
The robust covariance matrix is then
\[
\widehat{\mathrm{Var}}(\widehat\Theta)=H(\widehat\Theta)^{-1}\,J(\widehat\Theta)\,H(\widehat\Theta)^{-1}.
\]
This construction automatically incorporates the Murphy--Topel adjustment (generated-regressor) through the
dependence of $\ell_t(\Theta)$ on $\widehat{\mathcal R}_{t}(\Theta^{r})$ \citep{Pagan1984,murphy2002estimation}.
As an additional robustness check (e.g., under serial dependence or filter approximation error), we also report
standard errors from a block bootstrap over weeks.

\section{Empirical Results}

\subsection{Model performance}\label{subsec:emp_model_performance}

We evaluate in-sample fit using weekly zero-coupon yields for the rate block (CGB and CDB) and the rating buckets (AAA, AA$+$, AA, and AA$-$) over maturities $\tau\in\{1,2,3,4,5,7,10\}$ years.
For each segment and maturity, the pricing error is
\[
\epsilon_t(\tau)=y_t(\tau)-\hat y_t(\tau),
\]
where $y_t(\tau)$ is the observed yield and $\hat y_t(\tau)$ is the model-implied yield.
We report the sample mean and standard deviation of $\epsilon_t(\tau)$ in basis points (bp), i.e., $10{,}000\times \epsilon_t(\tau)$.
To summarize relative fit while keeping units comparable across maturities, we also report
\[
\mathrm{RRMSE}(\tau)=
\frac{\sqrt{\frac{1}{T}\sum_{t=1}^{T}\epsilon_t(\tau)^2}}{\bar y(\tau)},
\qquad
\bar y(\tau)=\frac{1}{T}\sum_{t=1}^{T} y_t(\tau),
\]
which is unitless.
The ``Average'' column is the simple average across $\tau\in\{1,2,3,4,5,7,10\}$.
Both models are evaluated on the same sample and the same observation set, so the statistics are directly comparable.

Table~\ref{tab:rsukf_pricing_error_stats} reports performance under the baseline RS--UKF specification, while Table~\ref{tab:ukf_pricing_error_stats} reports the single-regime UKF benchmark.
Across all segments, allowing for regime switching improves fit in a consistent direction.
Using the ``Average'' RRMSE as a summary, the RS--UKF reduces RRMSE from $0.058$ to $0.046$ for CGB, from $0.051$ to $0.045$ for CDB, and from $0.062$ to $0.047$ for the AAA bucket.
Improvements are also visible for AA$+$ (from $0.064$ to $0.051$) and AA (from $0.066$ to $0.058$), indicating that the gains are not limited to the rate block.

Beyond the relative metric, the mean errors show that regime switching primarily reduces persistent biases at short and intermediate maturities in the corporate buckets.
For example, under the single-regime UKF the 1-year mean errors are around $-10$ to $-14$ bp for AAA/AA$+$/AA, whereas under RS--UKF they move substantially closer to zero.
At the long end, both models display a modest positive bias for the 10-year CGB and CDB yields, suggesting that residual term-premium or curve-fitting effects remain more difficult to absorb with a parsimonious factor structure.
This pattern motivates reporting both absolute (bp) and relative (RRMSE) measures: the relative metric is informative across maturities, while the bp errors are essential for cross-rating interpretation.

Finally, we use linear-filter benchmarks as a robustness check.
Appendix~\ref{app:kf_comparison} reports analogous statistics for a Extended Kalman-filter (EKF) implementation and its regime-switching counterpart (RS--EKF).
Allowing for regime switching within an Extended KF can improve fit in several segments and maturities, but the remaining errors highlight that the measurement equation inherits nonlinearity from defaultable-bond valuation and from the price-to-yield transformation.
This evidence supports our main estimation strategy: a nonlinear filter combined with regime-conditional pricing and probability-weighted aggregation at the price level.
Figures~\ref{fig:cgb_level_slope_curv} and \ref{fig:cdb_level_slope_curv} provide a graphical diagnostic that complements the pricing-error evidence reported in Tables~\ref{tab:rsukf_pricing_error_stats} and \ref{tab:ukf_pricing_error_stats}. To summarize economically salient movements of each fitted term structure in a low-dimensional yet interpretable way, we consider three standard proxies: the short-end level $L_t \equiv y_t(1\text{Y})$, the slope $S_t \equiv y_t(10\text{Y})-y_t(1\text{Y})$, and the intermediate-maturity curvature $C_t \equiv 2y_t(5\text{Y})-y_t(1\text{Y})-y_t(10\text{Y})$. These components capture the dominant rotations and hump-shaped deformations of yield curves, and they map naturally to (i) the prevailing discount-rate environment and (ii) the maturity profile of risk compensation that enters reduced-form valuation through discounting and credit intensities \citep{lando1998cox,feldhutter2008decomposing}.

Figures~\ref{fig:cgb_level_slope_curv}--\ref{fig:cdb_level_slope_curv} show that the rate block reproduces the joint dynamics of $(L_t,S_t,C_t)$ with high fidelity over the full sample. This matters because, in our two-block design, the rate block delivers the operational discount curve for valuing corporate cash flows, whereas the credit block prices corporate bonds through an intensity-based migration--default mechanism. In the spirit of the term-structure decompositions in \citet{feldhutter2008decomposing}, corporate yields reflect joint variation in discounting and credit compensation; consequently, disciplined inference on the discounting leg is a prerequisite for attributing spread movements to credit intensities rather than to shifts in the benchmark rate environment. A subtle but informative asymmetry is that the CGB fit is marginally weaker than the CDB fit in episodes featuring localized deformations (particularly in curvature) beyond a pure level--slope rotation. Economically, this pattern is consistent with time-varying convenience yields and safe-asset premia embedded in sovereign valuations, which need not be shared one-for-one by policy-bank bonds \citep{KrishnamurthyVissingJorgensen2012,ChenChenHeLiuXie2023}. Methodologically, our specification absorbs such spread-like components via the dedicated CDB--CGB factor, thereby improving the CDB representation while keeping the CGB curve intentionally parsimonious. Importantly, this feature strengthens (rather than weakens) the pricing interpretation: using CDB as the discounting reference avoids mechanically loading sovereign convenience premia into measured corporate credit compensation.

\begin{figure*}[!t]
    \centering
    \includegraphics[width=0.98\textwidth]{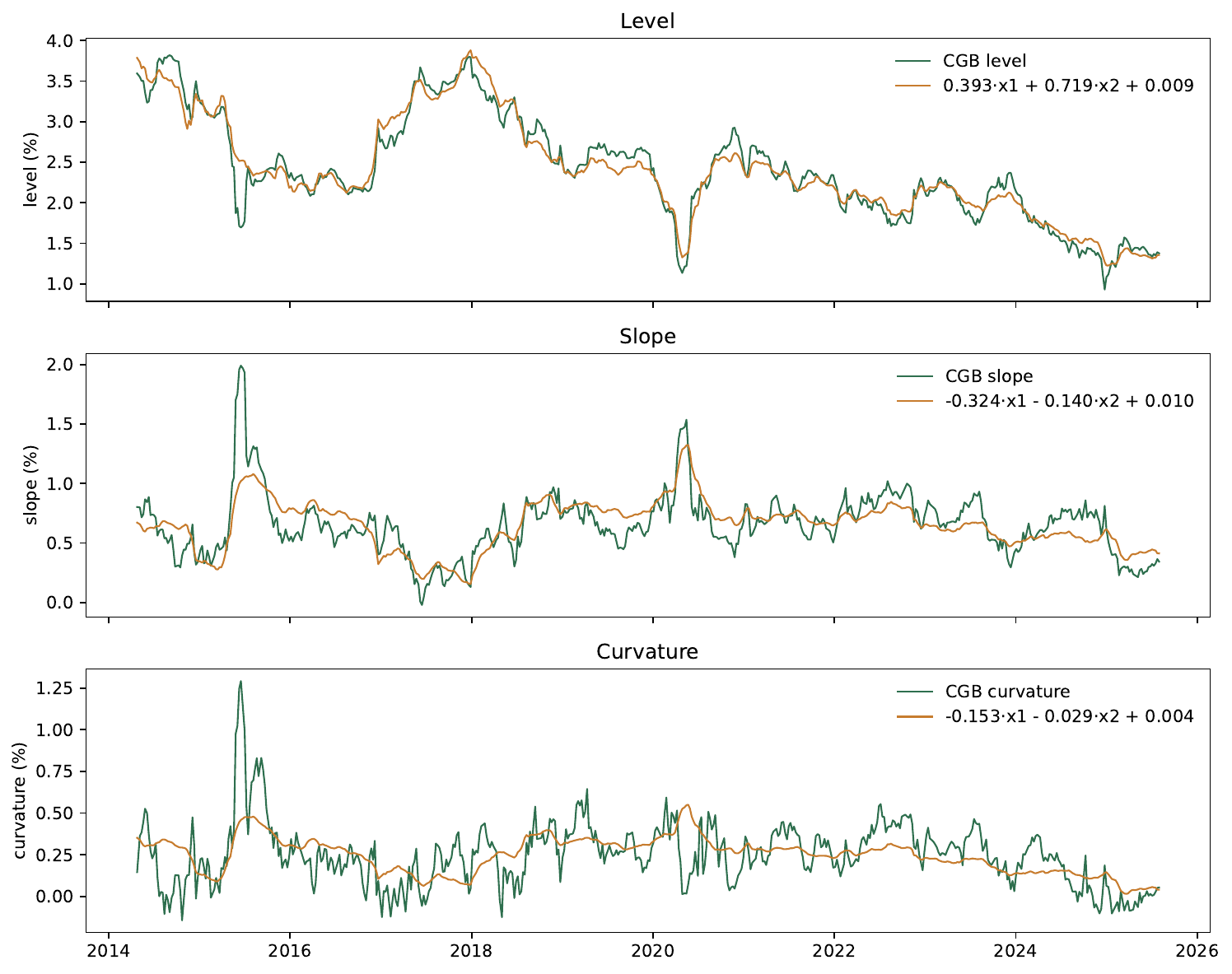}
    \caption{CGB term-structure diagnostics: observed vs.\ factor-implied level, slope, and curvature.
    The top panel reports the \emph{level} proxy $L_t \equiv y_t(1\text{Y})$.
    The middle panel reports the \emph{slope} proxy $S_t \equiv y_t(10\text{Y})-y_t(1\text{Y})$.
    The bottom panel reports the \emph{curvature} proxy $C_t \equiv 2y_t(5\text{Y})-y_t(1\text{Y})-y_t(10\text{Y})$.
    The green line is the data-based proxy computed from the observed zero-coupon yields $y_t(\tau)$, and the orange line is the factor-implied proxy obtained from the OLS projection.}
    \label{fig:cgb_level_slope_curv}
\end{figure*}

\begin{figure*}[!t]
    \centering
    \includegraphics[width=0.98\textwidth]{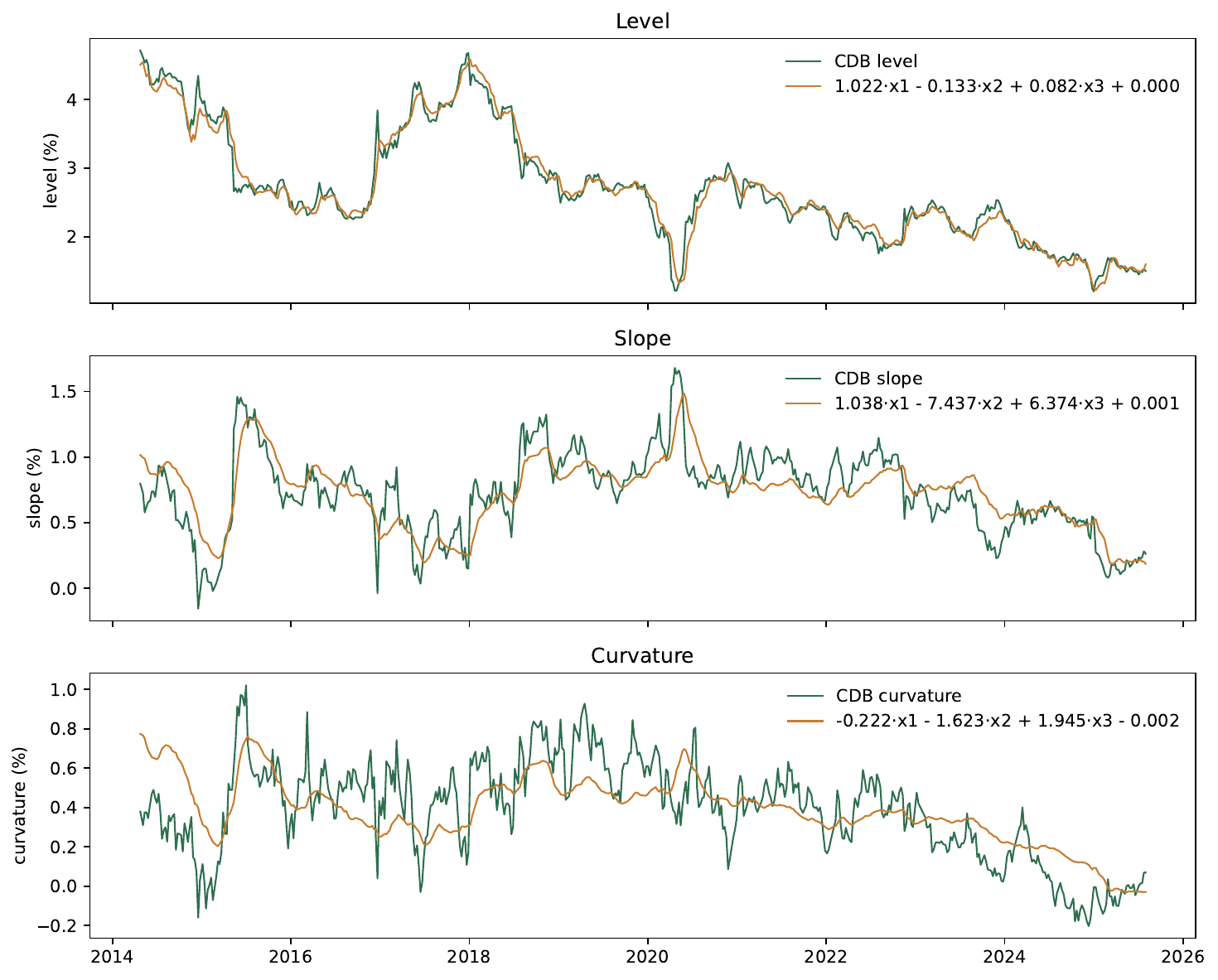}
    \caption{CDB term-structure diagnostics: observed vs.\ factor-implied level, slope, and curvature.
    The top panel reports the \emph{level} proxy $L_t \equiv y_t^{\text{CDB}}(1\text{Y})$.
    The middle panel reports the \emph{slope} proxy $S_t \equiv y_t^{\text{CDB}}(10\text{Y})-y_t^{\text{CDB}}(1\text{Y})$.
    The bottom panel reports the \emph{curvature} proxy $C_t \equiv 2y_t^{\text{CDB}}(5\text{Y})-y_t^{\text{CDB}}(1\text{Y})-y_t^{\text{CDB}}(10\text{Y})$.
    The green line is the data-based proxy computed from the observed CDB zero-coupon yields $y_t^{\text{CDB}}(\tau)$, and the orange line is the factor-implied proxy obtained from the OLS projection.}
    \label{fig:cdb_level_slope_curv}
\end{figure*}

\begin{table*}[!t]
\centering
\caption{Pricing error statistics under the RS--UKF model (up to 10Y).
The pricing error is $\epsilon_t = y_t - \hat{y}_t$, where $\hat{y}_t$ is the model-implied yield and $y_t$ is the observed yield.
Mean and standard deviation are reported in basis points (bp), i.e., $10{,}000 \times \epsilon_t$.
RRMSE is unitless.}
\label{tab:rsukf_pricing_error_stats}
\footnotesize
\setlength{\tabcolsep}{4pt}
\renewcommand{\arraystretch}{1.08}

\begin{threeparttable}
\begin{tabular*}{\linewidth}{@{\extracolsep{\fill}} lrrrrrrrr}
\toprule
& $\epsilon_{1}$ & $\epsilon_{2}$ & $\epsilon_{3}$ & $\epsilon_{4}$ & $\epsilon_{5}$ & $\epsilon_{7}$ & $\epsilon_{10}$ & Average \\
\midrule

\textit{CGB} \\
Mean       &   2.63 &  -2.70 &   2.27 &   2.09 &   3.15 &  -4.64 &  14.05 &   2.41 \\
St.~dev.   &  15.25 &  12.80 &  14.02 &  14.71 &  13.78 &  13.96 &  14.17 &  14.10 \\
RRMSE          &  0.059 &  0.044 &  0.046 &  0.042 &  0.040 &  0.039 &  0.054 &  0.046 \\
\addlinespace

\textit{CDB} \\
Mean       &   3.52 &  -1.75 &   2.87 &   2.29 &   2.99 &  -5.31 &  13.09 &   2.53 \\
St.~dev.   &  14.94 &  11.80 &  12.82 &  13.39 &  12.49 &  12.32 &  12.61 &  12.91 \\
RRMSE          &  0.062 &  0.043 &  0.045 &  0.041 &  0.040 &  0.037 &  0.050 &  0.045 \\
\addlinespace

\textit{AAA} \\
Mean       &  -1.36 &   3.00 &   6.37 &   1.42 &   1.40 &  -4.82 &  -9.77 &  -0.54 \\
St.~dev.  &  12.29 &  14.09 &  15.61 &  14.37 &  14.82 &  11.41 &  12.41 &  13.57 \\
RRMSE          &  0.042 &  0.052 &  0.056 &  0.049 &  0.048 &  0.041 &  0.044 &  0.047 \\
\addlinespace

\textit{AA+} \\
Mean       &  -4.61 &   0.42 &   3.84 &  -1.62 &  -1.49 &  -7.49 &  -3.86 &  -2.12 \\
St.~dev.   &  18.89 &  21.17 &  20.83 &  18.05 &  17.35 &  18.26 &  24.98 &  19.93 \\
RRMSE          &  0.052 &  0.061 &  0.060 &  0.048 &  0.043 &  0.041 &  0.051 &  0.051 \\
\addlinespace

\textit{AA} \\
Mean       &  -6.43 &  -1.30 &   2.02 &  -5.64 &  -3.63 &  -3.81 &  10.94 &  -1.12 \\
St.~dev.   &  25.64 &  26.03 &  22.97 &  18.92 &  19.09 &  23.85 &  33.14 &  24.23 \\
RRMSE          &  0.069 &  0.069 &  0.059 &  0.045 &  0.044 &  0.046 &  0.074 &  0.058 \\
\addlinespace

\textit{AA-} \\
Mean       &  -4.44 &  -2.99 &   1.22 &  -5.89 &  -3.62 & -11.70 &  -4.25 &  -4.52 \\
St.~dev.   &  34.99 &  23.44 &  23.20 &  23.03 &  22.78 &  18.61 &  20.73 &  23.83 \\
RRMSE          &  0.076 &  0.046 &  0.038 &  0.037 &  0.036 &  0.035 &  0.031 &  0.043 \\
\bottomrule
\end{tabular*}

\vspace{0.25em}
\raggedright\footnotesize
Notes:Mean and standard deviation are computed from yield pricing errors and converted to basis points by multiplying by 10{,}000.
\end{threeparttable}
\end{table*}

\begin{table*}[!t]
\centering
\caption{Pricing error statistics under the UKF model (single-regime, up to 10Y).
The pricing error is $\epsilon_t = y_t - \hat{y}_t$, where $\hat{y}_t$ is the model-implied yield and $y_t$ is the observed yield.
Mean and standard deviation are reported in basis points (bp), i.e., $10{,}000 \times \epsilon_t$.
RRMSE is unitless.}
\label{tab:ukf_pricing_error_stats}
\footnotesize
\setlength{\tabcolsep}{4pt}
\renewcommand{\arraystretch}{1.08}

\begin{threeparttable}
\begin{tabular*}{\linewidth}{@{\extracolsep{\fill}} lrrrrrrrr}
\toprule
& $\epsilon_{1}$ & $\epsilon_{2}$ & $\epsilon_{3}$ & $\epsilon_{4}$ & $\epsilon_{5}$ & $\epsilon_{7}$ & $\epsilon_{10}$ & Average \\
\midrule

\textit{CGB} \\
Mean      &  8.92 & -0.74 &  2.48 &  1.67 &  2.64 & -4.57 & 15.54 &  3.71 \\
St.~dev.  & 17.07 & 13.49 & 14.62 & 15.11 & 13.83 & 13.91 & 14.33 & 14.62 \\
RRMSE     & 0.086 & 0.051 & 0.051 & 0.047 & 0.046 & 0.042 & 0.068 & 0.056 \\
\addlinespace

\textit{CDB} \\
Mean      &  6.57 & -4.13 & -1.15 & -1.85 & -0.62 & -7.10 & 14.18 &  0.84 \\
St.~dev.  & 15.04 & 11.91 & 11.85 & 11.96 & 11.63 & 11.51 & 13.13 & 12.43 \\
RRMSE     & 0.066 & 0.043 & 0.039 & 0.036 & 0.036 & 0.037 & 0.053 & 0.044 \\
\addlinespace

\textit{AAA} \\
Mean      & -8.71 & -2.45 &  3.05 &  0.37 &  2.85 &  2.46 &  7.88 &  0.78 \\
St.~dev.  & 19.21 & 19.73 & 20.18 & 18.25 & 18.10 & 14.32 & 13.99 & 17.68 \\
RRMSE     & 0.075 & 0.078 & 0.078 & 0.071 & 0.067 & 0.053 & 0.048 & 0.067 \\
\addlinespace

\textit{AA+} \\
Mean      & -10.85 & -4.13 &  1.07 & -2.66 & -0.75 & -2.95 &  6.75 & -1.93 \\
St.~dev.  & 27.17 & 26.47 & 24.80 & 21.11 & 19.46 & 18.82 & 24.12 & 23.14 \\
RRMSE     & 0.084 & 0.083 & 0.076 & 0.064 & 0.055 & 0.046 & 0.058 & 0.067 \\
\addlinespace

\textit{AA} \\
Mean      & -11.35 & -4.66 & -0.01 & -6.69 & -3.86 & -2.58 & 14.77 & -2.05 \\
St.~dev.  & 34.22 & 30.70 & 25.90 & 20.51 & 19.96 & 23.56 & 32.05 & 26.70 \\
RRMSE     & 0.097 & 0.084 & 0.069 & 0.053 & 0.049 & 0.048 & 0.078 & 0.068 \\
\addlinespace

\textit{AA-} \\
Mean      & -0.33 &  0.25 &  3.13 & -5.58 & -4.80 & -14.85 & -7.12 & -4.19 \\
St.~dev.  & 31.98 & 22.85 & 22.90 & 23.91 & 25.09 & 21.89 & 27.35 & 25.14 \\
RRMSE     & 0.074 & 0.050 & 0.040 & 0.040 & 0.041 & 0.041 & 0.042 & 0.047 \\
\addlinespace

\bottomrule
\end{tabular*}

\vspace{0.25em}
\raggedright\footnotesize
Notes:Mean and standard deviation are computed from yield pricing errors and converted to basis points by multiplying by 10{,}000.
\end{threeparttable}
\end{table*}

\subsection{Parameter Analysis}\label{subsec:parameters}
Table~\ref{tab:rsukf_params} reports regime-dependent estimates for the RS--UKF model with two rate regimes
$s_t^{r}\in\{H,L\}$ and two credit regimes $s_t^{c}\in\{E,C\}$.
The rate factors $(X_{1},X_{2},X_{3})$ evolve conditional on $s_t^{r}$, while the credit factor $X_{4}$ evolves conditional on $s_t^{c}$.
Parameters $(\kappa,\theta,\alpha,\beta)$ govern the physical dynamics used for filtering, while the market-price-of-risk terms $\lambda$
govern the mapping into risk-neutral dynamics used for valuation.
This $P\!\to\!Q$ mapping is discipline-specific and is given by \eqref{eq:rn-map-rsg}, which clarifies the role of $\lambda$ in Table~\ref{tab:rsukf_params}.
To keep the main text focused on economically interpretable implications, we emphasize three features:
(i) regime differences in discounting dynamics in the rate block;
(ii) regime differences in residual credit dynamics in the credit block; and
(iii) the state-dependent pass-through coefficients $c(s_t^{c}\!\mid s_t^{r})$ that govern how rate-level movements enter the priced
migration/default channel under the risk-neutral measure.

The rate-block estimates support a two-state discounting environment in which the conditional behavior of the level component
$(X_1,X_2)$ and the policy-bank spread component $X_3$ differs across $H$ and $L$ (Table~\ref{tab:rsukf_params}).
Because $(X_1,X_2)$ enter the discount curve directly, regime differences in persistence and long-run configuration translate into
distinct term-structure responses to shocks: one regime features a faster-adjusting level component with a different long-run anchor,
while the other is more persistent, allowing low-frequency shifts in the benchmark curve to be captured without imposing a single stationary law.
The spread factor $X_3$ affects corporate yields through discounting but not through default compensation; its regime dependence therefore
captures state-dependent variation in the intermediate benchmark (CDB relative to CGB) and helps separate funding/discounting movements from
credit compensation. These regime contrasts are economically material and precisely estimated, as reflected in the reported standard errors in
Table~\ref{tab:rsukf_params}. Overall, the rate-block parameter patterns are consistent with the regime diagnostics in
Figure~\ref{fig:rate_credit_joint_regime}, which highlight persistent swings in the rate-regime belief.

The credit factor $X_4$ is filtered conditional on the fitted rate block and is designed to capture residual credit conditions beyond discounting.
The estimates imply that the priced credit environment can switch between an expansion state $E$ and a contraction state $C$ with different
typical levels and conditional dispersion of $X_4$ (Table~\ref{tab:rsukf_params}).
Importantly, the labels $(E,C)$ are assigned for interpretability using the filtered credit-regime probabilities and the observed behavior of
corporate yields/spreads, rather than by mechanically ordering a single parameter.
Moreover, since valuation is governed by risk-neutral dynamics, the economic implications for pricing are determined by the joint mapping
from physical parameters into risk-neutral parameters via $\lambda$ in \eqref{eq:rn-map-rsg} (rather than by physical parameters in isolation),
and the corresponding risk-neutral contrasts are the relevant objects for valuation.

A central pricing object is the state-dependent pass-through matrix $\mathbf{C}=\{c(s_t^{c}\!\mid s_t^{r})\}$.
Under the joint-regime valuation route, $\mathbf{C}$ governs how rate-level movements enter the risk-neutral migration/default driver
(see \eqref{eq:mu_driver_joint_state}): it measures the \emph{within-joint-regime marginal loading} of priced credit risk on the
quasi-risk-free level component $(X_1+X_2)$, conditional on the credit factor.
In contrast, the spread factor $X_3$ affects corporate yields only through discounting and does not enter the default-compensation channel.
We emphasize that $\mathbf{C}$ is a reduced-form pricing object under the risk-neutral measure: it summarizes regime-contingent co-movement
between priced credit risk and rate factors, rather than a structural causal effect of policy.

\[
\mathbf{C}
=
\begin{pmatrix}
c(E\mid H) & c(E\mid L)\\
c(C\mid H) & c(C\mid L)
\end{pmatrix}
=
\begin{pmatrix}
3.1592 & -0.2821\\
-0.1052 & 0.6013
\end{pmatrix}.
\]

The estimated coefficients in $\mathbf{C}$ display pronounced state dependence in sign and magnitude (Table~\ref{tab:rsukf_params}).
In the expansion credit regime $E$, the pass-through changes sign across rate regimes, implying that the marginal pricing impact of a given
rate-level movement on default compensation can differ depending on the prevailing rate environment. In the contraction regime $C$,
the pass-through remains regime dependent, with the mapping being comparatively muted under one rate regime and stronger under the other.
These sign patterns provide a disciplined way to separate discounting effects from default-compensation effects across joint regimes:
rate-level movements always affect discounting, but they affect default compensation only through $\mathbf{C}$, while $X_3$ operates purely
through the discount curve.

Finally, negative entries in $\mathbf{C}$ do not imply negative default intensities. Admissibility is ensured by the construction of
migration/default intensities from the common driver: when mapping the driver into intensities we impose non-negativity (and, in computation,
apply a small lower bound to guard against numerical violations), so priced intensities remain well-defined in all regimes.
Figure~\ref{fig:rate_credit_joint_regime} provides a visual check that the inferred joint regimes correspond to persistent episodes,
supporting identification of regime-contingent loadings.

Table~\ref{tab:rsukf_params} also reports $(\nu_1,\ldots,\nu_5)$, which are risk-neutral adjustment terms applied to the empirical default
intensity (``vega'' adjustments) in the migration/default block. They provide additional flexibility for matching observed credit spreads
while preserving the model-implied cross-maturity consistency of the hazard curve.

\begin{table*}[!t]
\centering
\caption{Regime-switching parameter estimates (RS--UKF, $K=2$).
The first three factors ($X_1$--$X_3$) follow the rate regimes $\{H,L\}$, while the credit factor ($X_4$) follows the credit regimes $\{E,S\}$.
Robust standard errors are reported in parentheses.}
\label{tab:rsukf_params}
\footnotesize
\setlength{\tabcolsep}{4pt}
\renewcommand{\arraystretch}{1.10}

\begin{threeparttable}

\begin{tabular*}{\linewidth}{@{\extracolsep{\fill}} llccccc}
\toprule
Factor & Regime & $\kappa$ & $\theta$ & $\alpha$ & $\beta$ & $\lambda$ \\
\midrule

\multirow{2}{*}{$X_{1}$} & H &
\begin{tabular}{@{}c@{}} 1.489 \\ \((2.89\times 10^{-6})\) \end{tabular} &
\begin{tabular}{@{}c@{}} 0.006250 \\ (0.000155) \end{tabular} &
\begin{tabular}{@{}c@{}} 0.000133 \\ \((2.42\times 10^{-5})\) \end{tabular} &
\begin{tabular}{@{}c@{}} 0.000126 \\ \((3.21\times 10^{-5})\) \end{tabular} &
\begin{tabular}{@{}c@{}} -41.098 \\ \((0.01130)\) \end{tabular} \\
 & L &
\begin{tabular}{@{}c@{}} 0.5349 \\ \((7.95\times 10^{-5})\) \end{tabular} &
\begin{tabular}{@{}c@{}} 0.009634 \\ \((7.21\times 10^{-5})\) \end{tabular} &
\begin{tabular}{@{}c@{}} 0.000126 \\ \((3.81\times 10^{-6})\) \end{tabular} &
\begin{tabular}{@{}c@{}} 0.000123 \\ \((4.24\times 10^{-5})\) \end{tabular} &
\begin{tabular}{@{}c@{}} -32.901 \\ \((0.0120)\) \end{tabular} \\
\addlinespace

\multirow{2}{*}{$X_{2}$} & H &
\begin{tabular}{@{}c@{}} 0.02852 \\ (0.002592) \end{tabular} &
\begin{tabular}{@{}c@{}} 0.03341 \\ (0.01868) \end{tabular} &
\begin{tabular}{@{}c@{}} $8.78\times 10^{-5}$ \\ (0.003436) \end{tabular} &
\begin{tabular}{@{}c@{}} $3.67\times 10^{-4}$ \\ ($1.20\times 10^{-4}$) \end{tabular} &
\begin{tabular}{@{}c@{}} -15.4113 \\ (0.2500) \end{tabular} \\
& L &
\begin{tabular}{@{}c@{}} 0.05949 \\ (0.004193) \end{tabular} &
\begin{tabular}{@{}c@{}} 0.01382 \\ (0.001960) \end{tabular} &
\begin{tabular}{@{}c@{}} $2.05\times 10^{-5}$ \\ (0.00044) \end{tabular} &
\begin{tabular}{@{}c@{}} 0.003206 \\ (0.037966) \end{tabular} &
\begin{tabular}{@{}c@{}} -9.1011 \\ (1.5096) \end{tabular} \\

\multirow{2}{*}{$X_{3}$} & H &
\begin{tabular}{@{}c@{}} 1.0272 \\ (0.02231) \end{tabular} &
\begin{tabular}{@{}c@{}} $0.00017$ \\ (0.00020) \end{tabular} &
\begin{tabular}{@{}c@{}} $3.00\times 10^{-5}$ \\ ($5.54\times 10^{-6}$) \end{tabular} &
\begin{tabular}{@{}c@{}} 0.004623 \\ ($1.93\times 10^{-5}$) \end{tabular} &
\begin{tabular}{@{}c@{}} -0.1537 \\ (0.00740) \end{tabular} \\
 & L &
\begin{tabular}{@{}c@{}} 0.1740 \\ ($7.53\times 10^{-5}$) \end{tabular} &
\begin{tabular}{@{}c@{}} $0.00014$ \\ (0.00025) \end{tabular} &
\begin{tabular}{@{}c@{}} $1.00\times 10^{-4}$ \\ (0.000060) \end{tabular} &
\begin{tabular}{@{}c@{}} 0.02824 \\ (0.00800) \end{tabular} &
\begin{tabular}{@{}c@{}} -0.03754 \\ (0.0134) \end{tabular} \\
\addlinespace

\multirow{2}{*}{$X_{4}$} & E &
\begin{tabular}{@{}c@{}} 0.4824 \\ (0.0340) \end{tabular} &
\begin{tabular}{@{}c@{}} 0.0862 \\ (0.3848) \end{tabular} &
\begin{tabular}{@{}c@{}} 0.1992 \\ (0.0657) \end{tabular} &
\begin{tabular}{@{}c@{}} 0.1321 \\ (0.0143) \end{tabular} &
\begin{tabular}{@{}c@{}} -3.2500 \\ (0.3576) \end{tabular} \\
 & C &
\begin{tabular}{@{}c@{}} 0.4282 \\ (0.0510) \end{tabular} &
\begin{tabular}{@{}c@{}} 0.0504 \\ (0.3712) \end{tabular} &
\begin{tabular}{@{}c@{}} 0.1276 \\ (0.0923) \end{tabular} &
\begin{tabular}{@{}c@{}} 0.0882 \\ (0.0130) \end{tabular} &
\begin{tabular}{@{}c@{}} -2.7564 \\ (0.3381) \end{tabular} \\

\textit{Other parameters}
\end{tabular*}

\vspace{0.55em}

\begin{tabular*}{\linewidth}{@{\extracolsep{\fill}} lccccc}
\toprule
\multicolumn{6}{l}{}\\
\addlinespace[-0.25em]
& $c(E\mid H)$ & $c(E\mid L)$ & $c(S\mid H)$ & $c(S\mid L)$ &  \\
& \begin{tabular}{@{}c@{}} 3.1592 \\ (0.5653) \end{tabular}
& \begin{tabular}{@{}c@{}} -0.2821 \\ (0.0924) \end{tabular}
& \begin{tabular}{@{}c@{}} -0.1052 \\ (0.0319) \end{tabular}
& \begin{tabular}{@{}c@{}} 0.6013 \\ (0.1304) \end{tabular}
&  \\

\addlinespace[0.2em]
& $\Delta\nu_{1}$ & $\Delta\nu_{2}$ & $\Delta\nu_{3}$ & $\Delta\nu_{4}$ & $\Delta\nu_{5}$ \\
\addlinespace[-0.25em]
& \begin{tabular}{@{}c@{}}
0.000220 \\
(0.000145)
\end{tabular}
& \begin{tabular}{@{}c@{}}
$5.36\times 10^{-6}$ \\
(0.000465)
\end{tabular}
& \begin{tabular}{@{}c@{}}
0.000105 \\
(0.000310)
\end{tabular}
& \begin{tabular}{@{}c@{}}
0.000257 \\
(0.000222)
\end{tabular}
& \begin{tabular}{@{}c@{}}
0.016223 \\
(0.000966)\end{tabular} \\

\bottomrule
\end{tabular*}

\vspace{0.25em}
\raggedright\footnotesize
Notes: $H/L$ denote the two rate regimes for $X_1$--$X_3$, and $E/S$ denote the two credit regimes for $X_4$ (labels can be swapped if desired).
\end{threeparttable}
\end{table*}

\subsection{Joint regime dynamics}\label{subsec:regimes}

Figure~\ref{fig:rate_credit_joint_regime} provides a compact view of the joint rate--credit regime
dynamics implied by the RS--UKF state-space system. Panel~A plots representative rate-block
yields (CGB 3Y and CDB 3Y), Panel~B plots representative corporate yields (3Y) by rating, and
Panel~C reports the filtered regime probabilities. Throughout, the primary object is the pair of beliefs
\(\pi_t^{r}\) and \(\pi_t^{c}\)---not a hard regime label---and the figure should be interpreted in terms of the
evolution and persistence of these probabilities. For visual guidance only, the background shading
in Panels~A--B converts probabilities into a binary display: solid shading marks dates for which the
low-rate probability is above a 0.5 threshold, and hatched shading analogously marks dates
for which the contracted credit probability is above the same threshold; the underlying probability paths
are shown in Panel~C.%
\footnote{In the model, the credit regime is labeled as expansion--contraction; the figure legend may use
``contracted'' to refer to the contraction state. The interpretation is invariant to relabeling.}

Two empirical regularities emerge that map cleanly into the paper's identification and valuation design.
First, movements in \(\pi_t^{r}\) are relatively persistent and align with low-frequency shifts in the level and
dynamics of the benchmark curve: episodes with \(\pi_t^{r}\) close to one coincide with a sustained decline
in both CGB and CDB yields, and tend to be associated with a more stable intermediate benchmark
(CDB relative to CGB). This is the behavior the rate block is designed to capture, as the rate regime is
filtered from rate-block information only and is governed by a continuous-time Markov generator mapped
to weekly transition probabilities, which mechanically induces persistence.%

Second, \(\pi_t^{c}\) is more episodic and displays a conditional character: increases in the probability of the
contracted (credit-contraction) state tend to occur when corporate yields (and, equivalently, corporate
spreads relative to the CDB discount curve) rise in a sustained fashion and when cross-rating dispersion
becomes more pronounced, with stronger visibility for lower ratings. Importantly, this behavior is not
imposed by construction. Under our block-recursive learning restriction, the rate regime is inferred from
rate-block information, while the credit regime is inferred from credit-block information conditional on
a finite-dimensional posterior summary of the first-stage rate filter. As a consequence, the model admits
both synchronized configurations (joint increases in \(\pi_t^{r}\) and \(\pi_t^{c}\)) and asynchronous configurations
in which the rate environment changes while credit remains largely in the expansion state (or vice versa).

These probability objects play a direct role in valuation. Because defaultable-bond pricing is nonlinear in
states through an exponential(-affine) kernel, state-contingent prices are computed regime-by-regime and
then mixed at the price level using filtered regime probabilities, rather than averaging drivers or
states prior to valuation. Under block-recursive learning, the joint weights are constructed without joint
filtering by combining the first-stage rate-regime posterior with the second-stage credit-regime posterior
conditional on the rate regime. This price-level mixing step is essential for avoiding Jensen-type
distortions and clarifies why Figure~\ref{fig:rate_credit_joint_regime} should be read probabilistically.

Finally, any apparent timing differences between the reduced-form Gaussian HMM classification used as a
diagnostic elsewhere and the structural filtered probabilities here are expected and economically
uninformative on their own. The HMM provides a discrete, likelihood-based segmentation of moments,
whereas \(\pi_t^{r}\) and \(\pi_t^{c}\) are structural beliefs produced by a regime-switching state-space filter with
(i) CTMC-implied persistence and (ii) nonlinear measurement through pricing and price-to-yield
transformation. It is therefore natural for structural probabilities to adjust gradually and for thresholded
labels (used only for shading) to lag incipient probability movements. Consistent with this interpretation,
regime boundaries in Panels~A--B should not be treated as event identifiers; the empirical content lies in
the persistence of \((\pi_t^{r},\pi_t^{c})\) and in the extent of synchronization between rate and credit states,
which directly matches the model's one-way learning design and state-dependent pricing mechanism.

\begin{figure}[!t]
\centering
\includegraphics[width=\linewidth,height=0.86\textheight,keepaspectratio]{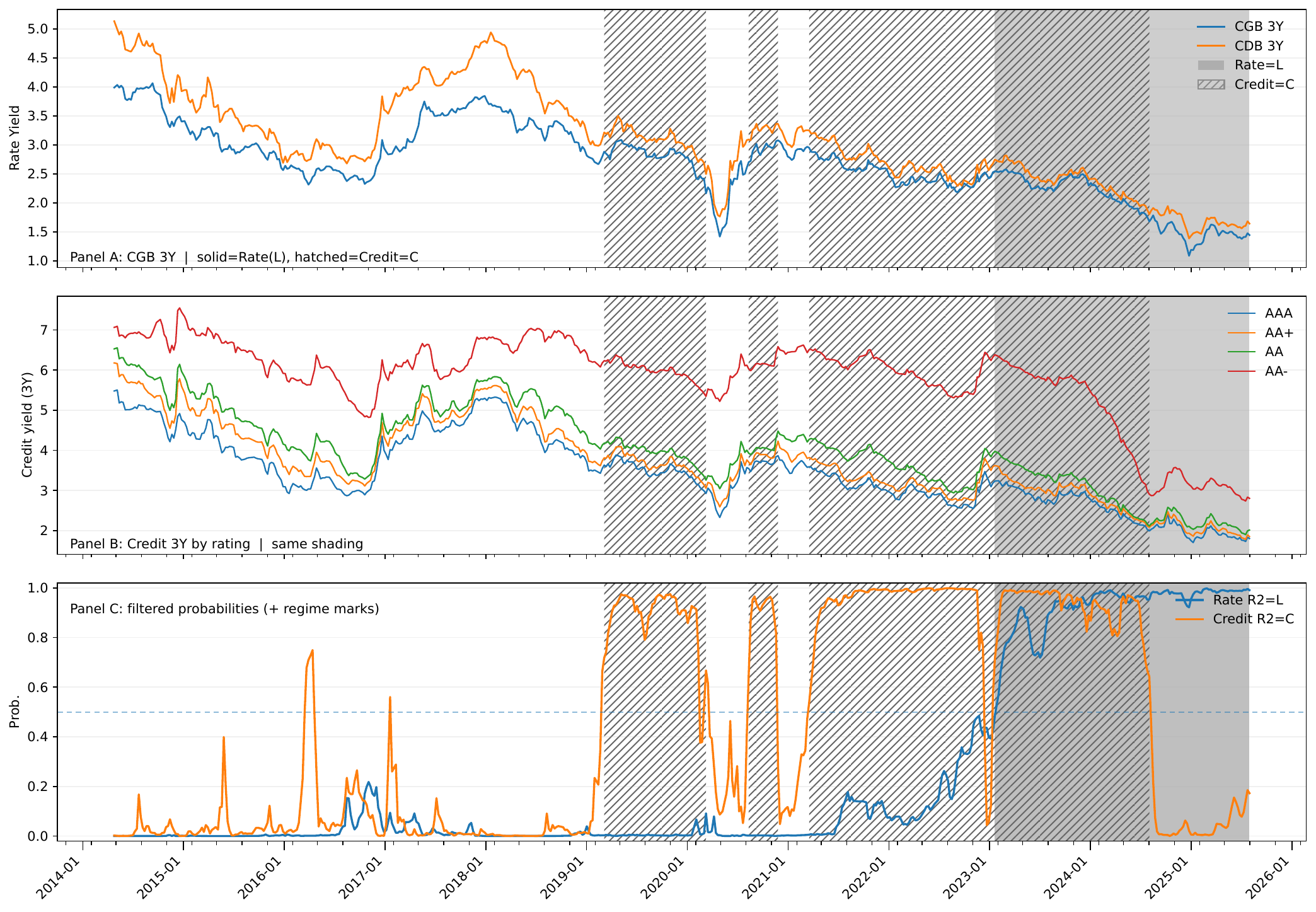}
\caption{Joint rate--credit regimes and filtered probabilities.
Panel~A plots the 3-year CGB yield and the 3-year CDB yield.
Panel~B plots 3-year corporate yields by rating (AAA, AA$+$, AA, AA$-$).
Panel~C reports the filtered regime probabilities \(\pi_t^{r}\equiv \Pr(s_t^{r}=L\mid I_t)\) and
\(\pi_t^{c}\equiv \Pr(s_t^{c}=C\mid I_t)\), where \(L\) denotes the low-rate state and \(C\) the credit-contraction
(contracted) state.
For readability, background shading in Panels~A--B provides a thresholded visualization of the joint
configuration: solid shading indicates dates with \(\pi_t^{r}>0.5\), and hatched shading indicates dates with
\(\pi_t^{c}>0.5\). The underlying probabilities are shown in Panel~C and should be treated as the primary
object of interest.}
\label{fig:rate_credit_joint_regime}
\vspace{-0.3em}
\end{figure}

\subsection{Spread decomposition}\label{subsec:decomp_spread}

This section recasts the rate--credit information used above into a decomposition that is
(i) algebraically exact and (ii) economically interpretable. Let \(y_t^{\mathrm{CGB}}(\tau)\) denote the
zero-coupon China government bond (CGB) yield at maturity \(\tau\), \(y_t^{\mathrm{CDB}}(\tau)\) the matched-maturity
policy-bank (CDB) yield, and \(y_{r,t}^{\mathrm{corp}}(\tau)\) the matched-maturity corporate zero-coupon yield for
rating bucket \(r\in\{\mathrm{AAA},\mathrm{AA{+}},\mathrm{AA}\}\).\footnote{Issuance and secondary-market activity in China’s onshore
credit market are heavily concentrated in the AAA/AA+/AA segment, while AA$-$ and below form a relatively thin
segment and are often viewed as a transition toward a speculative-grade universe. Our spread decomposition therefore
focuses on the most liquid investment-grade buckets to mitigate mechanical noise from illiquid tails.}

We work with spreads rather than raw yields to separate risk premia from the common discount-rate component.
Specifically, we define
\begin{align}
s_t^{\mathrm{CDB}}(\tau) &\equiv y_t^{\mathrm{CDB}}(\tau)-y_t^{\mathrm{CGB}}(\tau), \label{eq:cdb_wedge_def}\\
s_{r,t}^{\mathrm{corp}}(\tau) &\equiv y_{r,t}^{\mathrm{corp}}(\tau)-y_t^{\mathrm{CDB}}(\tau). \label{eq:corp_spread_def}
\end{align}
By construction, corporate yields satisfy the additive identity
\begin{equation}
y_{r,t}^{\mathrm{corp}}(\tau)
=
\underbrace{y_t^{\mathrm{CGB}}(\tau)}_{\text{sovereign discounting}}
+
\underbrace{s_t^{\mathrm{CDB}}(\tau)}_{\text{policy-bank spread}}
+
\underbrace{s_{r,t}^{\mathrm{corp}}(\tau)}_{\text{corporate compensation over CDB}}.
\label{eq:yield_decomp_identity}
\end{equation}
Equation~\eqref{eq:yield_decomp_identity} clarifies the empirical role of the two-block regime structure. The rate block
summarizes persistent shifts in discounting and funding conditions and is naturally reflected in the policy-bank spread
\(s_t^{\mathrm{CDB}}(\tau)\). The credit block isolates the residual compensation embedded in corporate yields beyond the
policy-bank benchmark, summarized by \(s_{r,t}^{\mathrm{corp}}(\tau)\). This organization follows the logic of separating
discounting from credit-related components in reduced-form term-structure decompositions, adapted to the China setting
where CDB provides a natural intermediate benchmark.

To summarize cross-maturity differences across regimes without imposing a hard classification, we report
probability-weighted regime means. For any series \(x_t(\tau)\), define
\begin{equation}
\bar{x}^{A}(\tau)
\equiv
\frac{\sum_{t=1}^{T} w_t^{A}\, x_t(\tau)}{\sum_{t=1}^{T} w_t^{A}},
\qquad
w_t^{A}\in[0,1],
\label{eq:prob_weighted_mean}
\end{equation}
where the weights are filtered regime probabilities. For the rate block, \(w_t^{L}=\pi_t^{r}\) and \(w_t^{H}=1-\pi_t^{r}\);
for the credit block, \(w_t^{S}=\pi_t^{c}\) and \(w_t^{E}=1-\pi_t^{c}\). The summaries in \eqref{eq:prob_weighted_mean}
are invariant to state relabeling as long as probabilities are mapped consistently.

\begin{figure}[!t]
\centering
\includegraphics[width=0.98\linewidth]{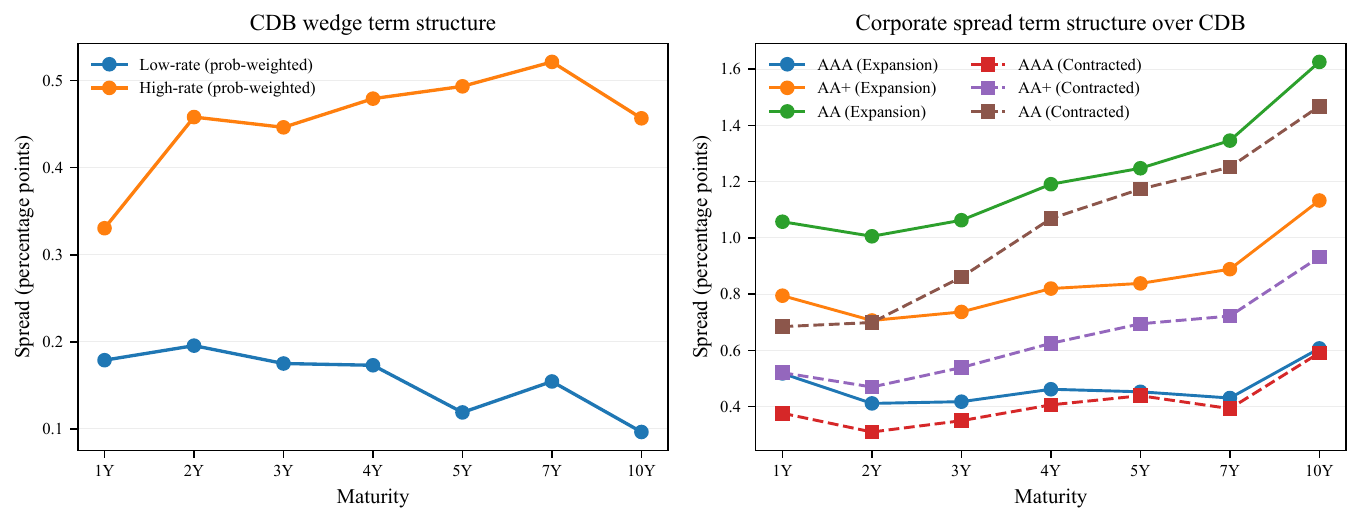}
\caption{Regime-conditional spread term structures (probability-weighted).
Panel~A plots the policy-bank spread $s_t^{\mathrm{CDB}}(\tau)=y_t^{\mathrm{CDB}}(\tau)-y_t^{\mathrm{CGB}}(\tau)$ under the two \emph{rate} regimes,
computed as probability-weighted means using $\pi_t^{r}$ and $1-\pi_t^{r}$.
Panel~B plots the corporate spread over CDB, $s_{r,t}^{\mathrm{corp}}(\tau)=y_{r,t}^{\mathrm{corp}}(\tau)-y_t^{\mathrm{CDB}}(\tau)$, for $r\in\{\mathrm{AAA},\mathrm{AA{+}},\mathrm{AA}\}$ under the two \emph{credit} regimes,
computed as probability-weighted means using $\pi_t^{c}$ and $1-\pi_t^{c}$.
All objects are in percentage points.}
\label{fig:fig55a_term_struct}
\vspace{-0.3em}
\end{figure}

Figure~\ref{fig:fig55a_term_struct} reports the implied regime-conditional term structures. Panel~A shows that the policy-bank
spread \(s_t^{\mathrm{CDB}}(\tau)\) is systematically different across rate states: the probability-weighted spread is materially
lower in the low-rate state and higher in the high-rate state, with differences that are visible across maturities. Panel~B
summarizes \(s_{r,t}^{\mathrm{corp}}(\tau)\), the corporate compensation over the CDB benchmark. Two robust features stand out.
First, the cross-sectional ordering \(s_{\mathrm{AA},t}^{\mathrm{corp}}(\tau) > s_{\mathrm{AA+},t}^{\mathrm{corp}}(\tau) >
s_{\mathrm{AAA},t}^{\mathrm{corp}}(\tau)\) holds at each maturity under both credit states, indicating that rating buckets retain
distinct compensation even after netting out the policy-bank component. Second, the contracted-credit state is associated
with higher corporate compensation, with the difference becoming more pronounced toward the long end, consistent with
credit conditions affecting longer-horizon migration/default compensation more strongly than short-maturity pricing.

Figure~\ref{fig:fig55b_timeseries} complements the cross-sectional evidence with a time-series view at \(\tau\in\{3,5,10\}\) years.
The top three panels plot the policy-bank spread \(s_t^{\mathrm{CDB}}(\tau)\) together with the AA corporate spread over CDB
\(s_{\mathrm{AA},t}^{\mathrm{corp}}(\tau)\) at each maturity, while the bottom panel reports the filtered probabilities \(\pi_t^{r}\)
and \(\pi_t^{c}\). The figure highlights two points relevant for interpretation. First, the policy-bank spread varies in the expected
direction with the rate-regime probability: the spread tends to be lower when \(\pi_t^{r}\) assigns high probability to the low-rate
state, and higher when the low-rate probability is low. Second, the corporate compensation component co-moves more closely
with the credit-regime probability: sustained elevations in \(\pi_t^{c}\) are associated with persistently higher corporate spreads
over the CDB benchmark, particularly at longer maturities.

Finally, the regime probabilities are not intended for event dating. Because they are filtered beliefs that combine a persistent
transition law with nonlinear measurement, they may adjust gradually and need not coincide one-for-one with short-lived spread
spikes. Accordingly, the empirical content of Figure~\ref{fig:fig55b_timeseries} lies in the persistence of \((\pi_t^{r},\pi_t^{c})\)
and in how the decomposition in \eqref{eq:yield_decomp_identity} attributes movements in corporate yields to either the policy-bank
spread (funding) or corporate compensation over CDB (credit-risk channel).

\begin{figure}[!t]
\centering
\includegraphics[width=0.98\linewidth]{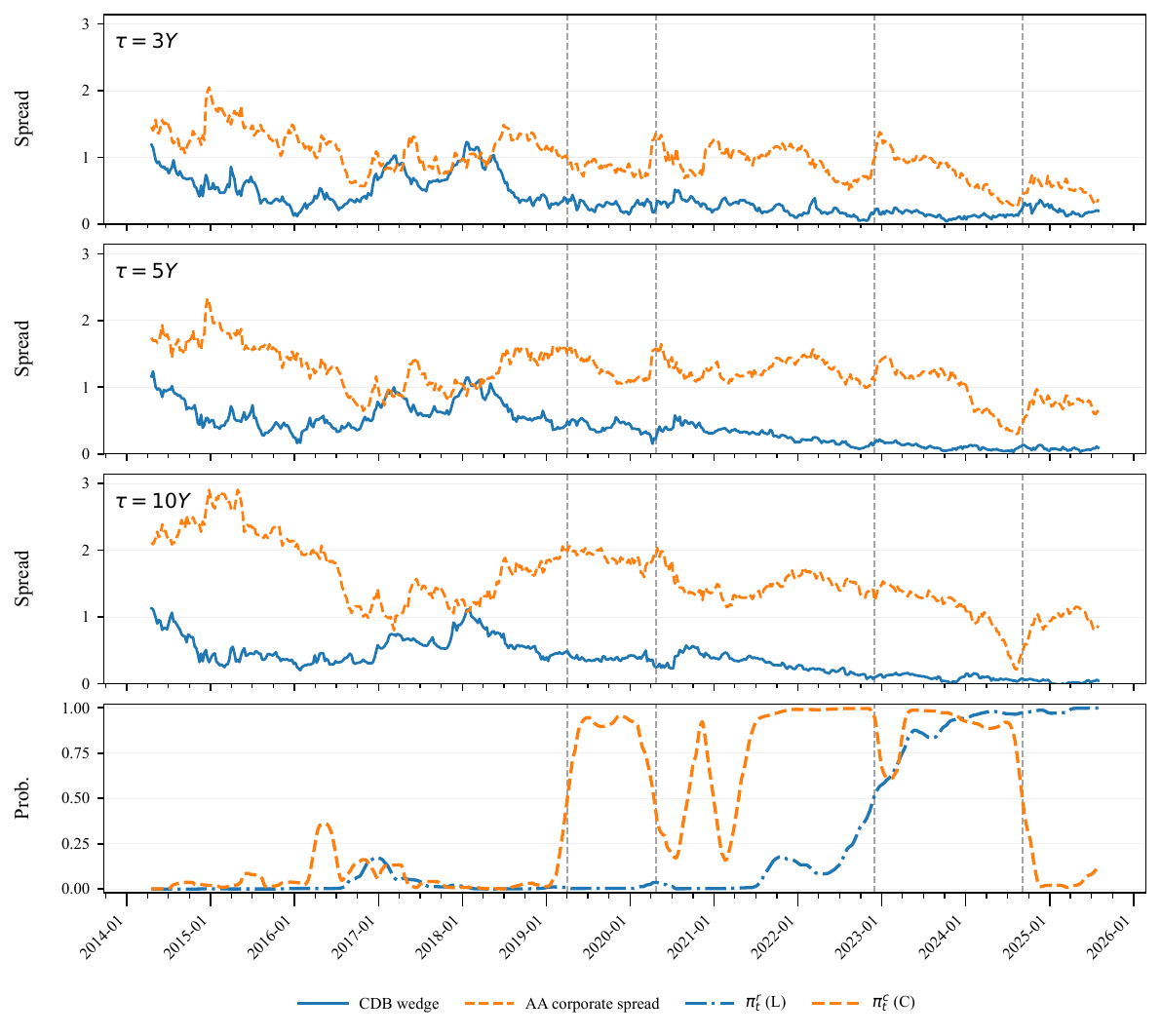}
\caption{Time-series spread decomposition and filtered regime beliefs at key maturities.
For each maturity $\tau\in\{3,5,10\}$ years, the left axis plots the policy-bank spread
$s_t^{\mathrm{CDB}}(\tau)\equiv y_t^{\mathrm{CDB}}(\tau)-y_t^{\mathrm{CGB}}(\tau)$ and a representative corporate spread over CDB,
$s_{\mathrm{AA},t}^{\mathrm{corp}}(\tau)\equiv y_{\mathrm{AA},t}^{\mathrm{corp}}(\tau)-y_t^{\mathrm{CDB}}(\tau)$ (see \eqref{eq:cdb_wedge_def}--\eqref{eq:corp_spread_def}).
The bottom panel reports the filtered regime probabilities on the right axis,
$\pi_t^{r}=\Pr(s_t^{r}=L\mid I_t)$ (low-rate state) and $\pi_t^{c}=\Pr(s_t^{c}=C\mid I_t)$ (credit-contraction state). All spreads are in percentage points, sampled at a weekly frequency.
Vertical reference lines mark calendar time for readability only and carry no structural interpretation.}
\label{fig:fig55b_timeseries}
\vspace{-0.3em}
\end{figure}

\subsection{Characteristic-interacted factor regressions}
\label{subsec:ipca_routeB}

To complement our structural regime-switching pricing results, we provide a concise, reduced-form validation that the investment-grade corporate spread surface exhibits (i) low-dimensional common movements and (ii) systematic cross-sectional heterogeneity across rating and maturity. Specifically, we implement an IPCA-style instrumented-beta regression in which each rating--maturity node is treated as an ``asset'' whose exposure to factor innovations varies with observable characteristics. Our specification follows the instrumented-beta idea by \citet{KellyPruittSu2019}, where observable characteristics proxy for conditional factor loadings. 

Let $i=(r,\tau)$ index rating bucket $r\in\{\text{AAA},\text{AA+},\text{AA}\}$ and maturity $\tau\in\{1,2,3,5,7,10\}$ years. The dependent variable is the weekly change in the matched-maturity corporate spread,
\begin{equation}
y_{i,t} \equiv \Delta s_{r,t}(\tau).
\end{equation}
We relate $y_{i,t}$ to innovations in our model-implied factors through a characteristic-driven loading,
\begin{equation}
y_{i,t}=\beta_{i,t-1}^{\top}\Delta X_t+\varepsilon_{i,t},
\qquad 
\beta_{i,t-1}=z_{i,t-1}^{\top}\Gamma,
\label{eq:ipca_routeB}
\end{equation}
so that $y_{i,t}=z_{i,t-1}^{\top}\Gamma\,\Delta X_t+\varepsilon_{i,t}$.
The characteristics $z_{i,t-1}$ include a constant, a quadratic maturity profile $(\tau,\tau^2)$, rating dummies (AA+ and AA; AAA as the base group), and the standardized lagged spread level at the node (capturing state dependence). Estimation is pooled OLS on the interaction regressors $(z_{i,t-1}\otimes \Delta X_t)$ with standard errors clustered by week. We report two specifications: a credit-focused specification using $\Delta X_t=(\Delta x_{4,t},\Delta u_t)$ (Spec-1) and a full-factor specification adding rate-side innovations (Spec-2).

Table~\ref{tab:ipca_routeB_summary} summarizes the key findings. Panel A indicates that the credit-focused innovations already explain a meaningful share of weekly spread changes ($R^2=0.227$), while allowing the full set of innovations substantially raises explanatory power ($R^2=0.748$). Panel B shows that the implied exposure to the credit factor, $\beta_{x4}$, is monotonically ordered by rating (AA $>$ AA+ $>$ AAA) at all maturities, consistent with lower credit quality being more sensitive to aggregate credit shocks. In addition, $\beta_{x4}$ declines with maturity within each rating bucket, suggesting that short-to-intermediate maturities carry stronger cyclical credit sensitivity in the investment-grade segment. Finally, Panel C provides direct evidence of characteristic-driven heterogeneity: the AA+ and AA dummies load positively into $\Gamma_{x4}$, and the lagged spread level increases the loading on $x_4$ (state dependence), while the lagged spread loads negatively into $\Gamma_u$, consistent with $u_t$ capturing an aggregate credit-conditioning component that intensifies in contracted states. Overall, this IPCA-style exercise corroborates the factor-based structure embedded in our regime-switching credit pricing model and the central role of $(x_4,u)$ in organizing the investment-grade spread surface.

\begin{table*}[!t]
\centering
\caption{Supplementary IPCA-style evidence (Route B) for the investment-grade corporate spread surface (AAA/AA+/AA; maturities 1--10Y). 
Panel A reports pooled $R^2$ under the credit-only specification (Spec-1: $\Delta x_4,\Delta u$) and the full-factor specification (Spec-2: $\Delta x_1,\Delta x_2,\Delta x_3,\Delta x_4,\Delta u$). 
Panel B reports the average implied exposure to the credit factor, $\beta_{x4}$, by rating and maturity (Spec-1). 
Panel C reports selected entries of the feature-to-beta mapping $\beta_{i,t-1}=z_{i,t-1}'\Gamma$ as coefficient (t-stat) under Spec-1.}
\label{tab:ipca_routeB_summary}
\footnotesize
\setlength{\tabcolsep}{4pt}
\renewcommand{\arraystretch}{1.08}

\begin{threeparttable}
\begin{tabular*}{\linewidth}{@{\extracolsep{\fill}} lrrrrrrr}
\toprule
& 1Y & 2Y & 3Y & 5Y & 7Y & 10Y & Average \\
\midrule

\multicolumn{8}{l}{\textit{Panel A: Model fit (pooled $R^2$; $N_{\text{weeks}}=549$, $N_{\text{nodes}}=18$, $N_{\text{obs}}=9{,}882$)}} \\
Spec-1: $\Delta x_4,\Delta u$ 
& \multicolumn{6}{c}{} & 0.227 \\
Spec-2: $\Delta x_1,\Delta x_2,\Delta x_3,\Delta x_4,\Delta u$ 
& \multicolumn{6}{c}{} & 0.748 \\
\addlinespace
\midrule
\multicolumn{8}{l}{\textit{Panel B: Average implied exposure to the credit factor $\beta_{x4}$ (Spec-1)}} \\
AAA  & 0.585 & 0.589 & 0.561 & 0.452 & 0.260 & -0.186 & 0.377 \\
AA+  & 0.809 & 0.817 & 0.794 & 0.698 & 0.520 &  0.082 & 0.620 \\
AA   & 0.966 & 0.983 & 0.970 & 0.896 & 0.728 &  0.295 & 0.806 \\
\addlinespace
\midrule
\multicolumn{8}{l}{\textit{Panel C: Selected $\Gamma$ entries in $\beta_{i,t-1}=z_{i,t-1}'\Gamma$ (Spec-1), coefficient (t-stat)}} \\
Lagged spread (std.) $\rightarrow \Gamma_{x4}$ 
& \multicolumn{6}{c}{} & 0.214 (1.30) \\
AA+ dummy $\rightarrow \Gamma_{x4}$ 
& \multicolumn{6}{c}{} & 0.182 (1.48) \\
AA dummy $\rightarrow \Gamma_{x4}$ 
& \multicolumn{6}{c}{} & 0.296 (1.77) \\
Lagged spread (std.) $\rightarrow \Gamma_{u}$ 
& \multicolumn{6}{c}{} & $-0.364$ ($-2.17$) \\
\bottomrule
\end{tabular*}

\vspace{0.25em}
\raggedright\footnotesize
Notes: The dependent variable is the weekly change in the matched-maturity corporate spread at node $i=(r,\tau)$.
Implied exposures in Panel B are time-averages of $\beta_{x4,i,t}$ obtained from Spec-1. 
Panel C reports selected coefficients from the instrumented-beta mapping under Spec-1.
Standard errors are clustered by week.
\end{threeparttable}
\end{table*}

\section{Conclusion}
\label{sec:conclusion}
This paper studies the joint dynamics and valuation of sovereign and investment-grade corporate bond
curves in China through the lens of a continuous-time regime-switching affine framework. We develop a
sovereign--credit model in which (i) the risk-free short rate and the effective discounting environment are
driven by multiple generalized-CIR term-structure factors, (ii) corporate default risk is modelled in a
rating-based reduced-form setup, and (iii) a common intensity factor governs rating migrations and
default while transmitting sovereign conditions to corporate credit through regime-dependent pass-through.
A key methodological feature is a block-recursive estimation strategy that filters the rate block first and
then conditions the credit block on rate-block posterior summaries. For valuation, we compute
regime-conditional prices and mix them at the price level using filtered regime probabilities, which
preserves the exponential-affine structure within regimes and avoids Jensen-type distortions.

Empirically, using weekly ChinaBond zero-coupon yield curves from 2014 to 2025, we find that a
parsimonious two-regime specification provides an economically interpretable representation of
time variation in discounting and credit compensation and improves the joint fit of the CGB, CDB, and
rating-sorted corporate curves. The inferred regimes align with persistent shifts in the macro--financial
environment and generate a transparent decomposition of corporate yields into a policy-bank spread and
a corporate credit-spread component. These results highlight that modelling regime changes in the
sovereign discounting environment is essential for explaining policy-bank pricing and for understanding
how sovereign conditions transmit to high-grade credit spreads in China.

Several extensions are natural. First, incorporating time-varying risk premia and allowing regime
dependence in market prices of risk would sharpen the economic interpretation of the inferred regimes
under the pricing measure. Second, extending the credit block to explicitly accommodate liquidity
premia and segmentation across issuers (e.g., SOEs vs. private firms) could help disentangle pure default
compensation from market frictions. Finally, integrating richer, China-specific migration information and
contract-period dynamics may further improve the model’s ability to jointly price cross-section and time-series
patterns of credit spreads at longer maturities. Overall, the proposed regime-switching affine sovereign--credit
framework provides a tractable foundation for structural interpretation and quantitative pricing of onshore
credit risk under evolving macro--financial conditions.


\bibliographystyle{plainnat}   
\bibliography{reference}
\appendix

\section*{Appendix}
\addcontentsline{toc}{section}{Appendix}

\section{Affine Transform for the Generalized CIR Factor}\label{app:gcir}

This appendix records the closed-form affine transform used in our pricing and filtering blocks.  
We consider a one–factor generalized CIR (square-root) diffusion under the risk–neutral
measure $\mathbb{Q}$,
\begin{equation}\label{eq:gcir}
    dX_t=\kappa(\theta-X_t)\,dt+\sqrt{\alpha+\beta X_t}\,dW_t ,
\end{equation}
where $\kappa>0$ is the speed of mean reversion toward the level $\theta\ge 0$, and
$\alpha\ge 0$, $\beta>0$ control the variance loading.  
When $\alpha=0$ and $\beta=\sigma^2$ this reduces to the standard CIR model; if $\alpha>0$ the state
space can be chosen so that $\alpha+\beta x\ge 0$ (e.g., $x\ge -\alpha/\beta$).  
For the pure CIR case ($\alpha=0$), the classical Feller condition $2\kappa\theta\ge\beta$ prevents
the boundary $0$ from being attained.

For constants $c_1,c_2\in\mathbb{R}$, define the expectation

\begin{equation}
\mathbb{E}\left[e^{-\int_t^T c_1 X_u d u} e^{c_2 X_T} \mid \mathcal{F}_t\right]=e^{A(t, T)-B(t, T) X_t}
\end{equation}

\noindent The functions $A(t,T)$ and $B(t,T)$ solve Riccati ODEs and admit closed forms.  
Let
\[
    \gamma \coloneqq \sqrt{\kappa^{2}+2\beta c_{1}},\qquad
    \text{(real whenever $c_1>-\kappa^{2}/(2\beta)$).}
\]
Then the explicit expressions are
\begin{equation}
A(t, T)=\frac{\alpha c_1}{\beta}(T-t)-\frac{\alpha}{\beta} \frac{\left(2 c_1-2 \kappa c_2-\beta c_2^2\right)\left(e^{\gamma(T-t)}-1\right)}{2 \gamma+\left(\gamma+\kappa+\beta c_2\right)\left(e^{\gamma(T-t)}-1\right)}+\frac{2(\beta \theta+\alpha) \kappa}{\beta^2} \ln \left(\frac{2 \gamma e^{\frac{(\gamma+\kappa)(T-t)}{2}}}{2 \gamma+\left(\gamma+\kappa+\beta c_2\right)\left(e^{\gamma(T-t)}-1\right)}\right)
\end{equation}

and

\begin{equation}
B(t, T)=\frac{2 c_1\left(e^{\gamma(T-t)}-1\right)+(\gamma-\kappa) c_2 e^{\gamma(T-t)}+(\gamma+\kappa) c_2}{2 \gamma+\left(\gamma+\kappa+\beta c_2\right)\left(e^{\gamma(T-t)}-1\right)}
\end{equation}


\subsection{Solution: Martingale Argument and Riccati ODEs}

\paragraph{Step 1. Build a martingale and identify terminal conditions.}
To obtain the affine transform, we look for $A(t,T)$ and $B(t,T)$ such that
\begin{equation}\label{eq:martingale-Mt-def}
M_t \coloneqq 
\exp\!\Big(-\!\int_{0}^{t} c_{1}X_u\,du\Big)\,
\exp\!\big(A(t,T)-B(t,T)X_t\big)
\end{equation}
is a $\mathbb{Q}$–martingale. At maturity $t=T$,
\begin{equation}\label{eq:MT}
M_T=\exp\!\Big(-\!\int_{0}^{T} c_{1}X_u\,du\Big)\,
\exp\!\big(A(T,T)-B(T,T)X_T\big).
\end{equation}
Impose terminal conditions
\begin{equation}\label{eq:terminal}
A(T,T)=0,\qquad B(T,T)=c_{2},
\end{equation}
then

\begin{equation}\label{eq:MT-terminal}
M_T=\exp\!\Big(-\!\int_{0}^{T} c_{1}X_u\,du\Big)\,e^{c_{2}X_T}.
\end{equation}

If $(M_t)_{t\le T}$ is a martingale, then $\mathbb{E}[M_T\mid\mathcal{F}_t]=M_t$ and, since
$\exp(-\!\int_{0}^{t} c_{1}X_u\,du)$ is $\mathcal{F}_t$–measurable,
\begin{align}
\mathbb{E}\!\left[\exp\!\Big(-\!\int_{0}^{T} c_{1}X_u\,du\Big)e^{c_{2}X_T}\,\Big|\,\mathcal{F}_t\right]
&=\exp\!\Big(-\!\int_{0}^{t} c_{1}X_u\,du\Big)\, \exp\!\big(A(t,T)-B(t,T)X_t\big) \notag\\
\Longleftrightarrow\qquad
\mathbb{E}\!\left[\exp\!\Big(-\!\int_{t}^{T} c_{1}X_u\,du\Big)e^{c_{2}X_T}\,\Big|\,\mathcal{F}_t\right]
&=\exp\!\big(A(t,T)-B(t,T)X_t\big).
\end{align}

\paragraph{Step 2. Apply It\^o’s formula to obtain ODEs for $A$ and $B$.}
Let
\[
Y_t\coloneqq \int_{0}^{t} c_{1}X_u\,du
\quad\Rightarrow\quad
dY_t=c_{1}X_t\,dt,\qquad
dY_t\,dY_t=0,\qquad
dY_t\,dX_t=0.
\]
Under the generalized CIR dynamics,
\[
dX_t=\kappa(\theta-X_t)\,dt+\sqrt{\alpha+\beta X_t}\,dW_t,
\qquad
dX_t\,dX_t=(\alpha+\beta X_t)\,dt.
\]
Define \(f(t,u,v)=\exp(-u+A(t,T)-B(t,T)v)\), so that \(M_t=f(t,Y_t,X_t)\). Its derivatives are
\[
\frac{\partial f}{\partial t}=(A'(t,T)-B'(t,T)v)f,\quad
\frac{\partial f}{\partial u}=-f,\quad
\frac{\partial f}{\partial v}=-B(t,T)f,\quad
\frac{\partial^{2}f}{\partial v^{2}}=B^{2}(t,T)f,
\]
and all derivatives involving $u$ beyond first order (or mixed with $v$) vanish in It\^o terms because
$dY_t\,dY_t=dY_t\,dX_t=0$. It\^o’s lemma gives
\begin{align}
dM_t
&= df(t,Y_t,X_t) \notag\\
&=\frac{\partial f}{\partial t}\,dt+\frac{\partial f}{\partial u}\,dY_t+\frac{\partial f}{\partial v}\,dX_t
 +\frac{1}{2}\frac{\partial^{2}f}{\partial v^{2}}\,dX_t\,dX_t \notag\\
&=\Big[A'(t,T)-B'(t,T)X_t - c_{1}X_t - B(t,T)\kappa(\theta-X_t)
 +\tfrac{1}{2}B^{2}(t,T)\big(\alpha+\beta X_t\big)\Big]M_t\,dt \notag\\
&\quad -\,B(t,T)M_t\sqrt{\alpha+\beta X_t}\,dW_t.
\label{eq:dMt}
\hfill\tag{10}
\end{align}
For $M_t$ to be a martingale, the drift in \eqref{eq:dMt} must vanish for all $t$. Since the drift is affine in $X_t$,
both the constant term and the $X_t$–coefficient must be zero:
\begin{equation}
\Big[A'(t,T)-\kappa\theta\,B(t,T)+\tfrac{1}{2}\alpha\,B^{2}(t,T)\Big]
+\Big[-B'(t,T)-c_{1}+\kappa B(t,T)+\tfrac{1}{2}\beta\,B^{2}(t,T)\Big]X_t=0.
\hfill\tag{11}
\end{equation}
Therefore $A$ and $B$ satisfy the Riccati system

\begin{equation}
\left\{\begin{array}{l}
A^{\prime}(t, T)-B(t, T) \kappa \theta+\frac{\alpha}{2} B^2(t, T)=0 \\
B^{\prime}(t, T)+c_1-\kappa B(t, T)-\frac{\beta}{2} B^2(t, T)=0
\end{array}\right.
\end{equation}

\begin{equation}
A(T,T)=0,\; B(T,T)=c_{2}.
\end{equation}

\subsection{Closed–form solution of \(B(t,T)\)}\label{app:solve-B}

We start from the Riccati system obtained in the previous subsection.  
Let \(y(t)\coloneqq B(t,T)\). Then
\begin{equation}\label{eq:riccati-AB}
\left\{
\begin{aligned}
A'(t,T) &= \kappa\theta\,y(t)\;-\;\frac{\alpha}{2}\,y^2(t),\\[2pt]
y'(t)   &= -\,c_1+\kappa\,y(t)\;+\;\frac{\beta}{2}\,y^2(t),\\[2pt]
A(T,T)  &= 0,\qquad y(T)=c_2 .
\end{aligned}\right.
\end{equation}

To eliminate the quadratic term,
we multiply the first ODE by \(\beta\) and the second by \(\alpha\):
\begin{equation*}
\left\{
\begin{aligned}
\beta A'(t,T)&=\beta\kappa\theta\,y(t)\;-\;\frac{\alpha\beta}{2}\,y^2(t),\\
\alpha y'(t) &= -\alpha c_1+\alpha\kappa y(t)\;+\;\frac{\alpha\beta}{2}\,y^2(t).
\end{aligned}\right.
\end{equation*}
Adding them cancels the square term:

\begin{equation} \label{eq:13}
\beta A'(t,T)+\alpha y'(t)=\kappa(\beta\theta+\alpha)\,y(t)-\alpha c_1 .
\end{equation}
Integrating \((\ref{eq:13})\) from \(t\) to \(T\) yields
\begin{align}
\beta\!\int_t^T\! A'(s,T)\,ds
	&=\;-\alpha\!\int_t^T\! y'(s)\,ds
		+(\beta\theta+\alpha)\kappa\!\int_t^T\! y(s)\,ds
		-\alpha c_1\!\int_t^T\!ds , \\[2pt]
\Rightarrow\quad
-\beta A(t,T)
	&=\;-\alpha\big(c_2-y(t)\big)
		+(\beta\theta+\alpha)\kappa\!\int_t^T\! B(s,T)\,ds
		-\alpha c_1 (T-t),
\end{align}
where \(A(T,T)=0\) and \(y(T)=c_2\) are used. Hence
\begin{equation}
A(t,T)=\frac{\alpha}{\beta}\big(c_2-B(t,T)\big)
-\frac{(\beta\theta+\alpha)\kappa}{\beta}\int_t^T\! B(s,T)\,ds
+\frac{\alpha c_1}{\beta}(T-t) .
\end{equation}

\paragraph{Solving the Riccati for \(y(t)=B(t,T)\).}
Consider the scalar Riccati with terminal value \(y(T)=c_2\):
\begin{equation}
\left\{
\begin{aligned}
y'(t)&=-\,c_1+\kappa y(t)+\frac{\beta}{2}y^2(t),\\
y(T)&=c_2 .
\end{aligned}
\right.
\end{equation}
It is separable:
\begin{equation}
\frac{dy(t)}{dt}=-\,c_1+\kappa y(t)+\frac{\beta}{2}y^2(t)
\qquad\Rightarrow\qquad
\frac{dy(t)}{c_1-\kappa y(t)-\frac{\beta}{2}y^2(t)}=-dt .
\end{equation}

\begin{equation}
\Rightarrow \quad \int \frac{d y(t)}{c_1-\kappa y(t)-\frac{\beta}{2} y^2(t)}=-\int d t+C_1=-t+C_1
\end{equation}

\paragraph{Completing the square.}
The denominator on the left can be written explicitly as a perfect square:
\begin{align}
c_1-\kappa y(t)-\frac{\beta}{2}y^2(t)
&=c_1-\frac{\beta}{2}\!\left(y^2(t)+\frac{2\kappa}{\beta}y(t)\right) \notag\\
&=c_1-\frac{\beta}{2}\!\left[\Big(y(t)+\frac{\kappa}{\beta}\Big)^2
-\Big(\frac{\kappa}{\beta}\Big)^2\right] \notag\\
&=c_1+\frac{\kappa^2}{2\beta}-\frac{\beta}{2}\!\left(y(t)+\frac{\kappa}{\beta}\right)^2 \notag\\
&=\frac{1}{2\beta}\!\left[\,2\beta c_1+\kappa^2-\big(\beta y(t)+\kappa\big)^2\,\right].
\end{align}
Assume \(2\beta c_1+\kappa^2>0\) and set
\begin{equation}
\gamma\;\coloneqq\;\sqrt{2\beta c_1+\kappa^2}\;>0 .
\end{equation}
and then

\begin{equation} \label{eq:22}
\begin{gathered}
c_1-\kappa y(t)-\frac{\beta}{2} y^2(t)=\frac{1}{2 \beta}\left[\gamma^2-(\beta y(t)+\kappa)^2\right] \\
-t+C_1=\int \frac{d y(t)}{c_1-\kappa y(t)-\frac{\beta}{2} y^2(t)}=\int \frac{2 \beta d y(t)}{\gamma^2-(\beta y(t)+\kappa)^2} \\
\Rightarrow \int \frac{d y(t)}{\gamma^2-(\beta y(t)+\kappa)^2}=-\frac{1}{2 \beta} t+\frac{1}{2 \beta} C_1
\end{gathered}
\end{equation}

\paragraph{Partial fractions and integration.}
Using
\[
\frac{1}{\gamma^2-z^2}=\frac{1}{2\gamma}\!\left(\frac{1}{\gamma-z}+\frac{1}{\gamma+z}\right),
\quad z\equiv\beta y(t)+\kappa,
\]
multiplying \((\ref{eq:22})\) by \(\gamma/\beta\) gives
\begin{align}
-\frac{\gamma}{\beta}t+\frac{\gamma}{\beta}C_1
	&=\int\frac{2\gamma\,dy(t)}{\gamma^2-\big(\beta y(t)+\kappa\big)^2}
	=\int\frac{dy(t)}{\gamma-(\beta y(t)+\kappa)}
	+\int\frac{dy(t)}{\gamma+(\beta y(t)+\kappa)} \notag\\
	&=-\frac{1}{\beta}\ln\!\big(\gamma-\beta y(t)-\kappa\big)
		+\frac{1}{\beta}\ln\!\big(\gamma+\beta y(t)+\kappa\big).
\end{align}
Hence
\begin{equation}
\ln\!\left(\frac{\gamma+\beta y(t)+\kappa}{\gamma-\beta y(t)-\kappa}\right)
=-\gamma t+C_2,\qquad C_2\equiv\gamma C_1 .
\notag
\end{equation}
Imposing \(y(T)=c_2\) determines \(C_2\):
\begin{equation}
\ln\!\left(\frac{\gamma+\beta c_2+\kappa}{\gamma-\beta c_2-\kappa}\right)
=-\gamma T+C_2
\;\Rightarrow\;
C_2=\ln\!\left(\frac{\gamma+\beta c_2+\kappa}{\gamma-\beta c_2-\kappa}\right)+\gamma T .
\end{equation}
Therefore, for any \(t\le T\),
\begin{equation}
\ln\!\left(\frac{\gamma+\beta y(t)+\kappa}{\gamma-\beta y(t)-\kappa}\right)
=\gamma(T-t)
+\ln\!\left(\frac{\gamma+\beta c_2+\kappa}{\gamma-\beta c_2-\kappa}\right).
\end{equation}

After taking the exponential on the both sides,
\begin{equation}
\begin{aligned}
& \frac{\gamma+\beta y(t)+\kappa}{\gamma-\beta y(t)-\kappa}=e^{\gamma(T-t)} \frac{\gamma+\beta c_2+\kappa}{\gamma-\beta c_2-\kappa} \\
\Rightarrow & (\gamma+\kappa)+\beta y(t)=e^{\gamma(T-t)} \frac{\gamma+\beta c_2+\kappa}{\gamma-\beta c_2-\kappa}(\gamma-\beta y(t)-\kappa)=(\gamma-\kappa) \frac{\gamma+\beta c_2+\kappa}{\gamma-\beta c_2-\kappa} e^{\gamma(T-t)}-\beta y(t) \frac{\gamma+\beta c_2+\kappa}{\gamma-\beta c_2-\kappa} e^{\gamma(T-t)} \\
\Rightarrow & \left(\beta+\beta \frac{\gamma+\kappa+\beta c_2}{\gamma-\kappa-\beta c_2} e^{\gamma(T-t)}\right) y(t)=\frac{(\gamma-\kappa)\left(\gamma+\kappa+\beta c_2\right)}{\gamma-\kappa-\beta c_2} e^{\gamma(T-t)}-(\gamma+\kappa) \\
\Rightarrow & y(t)=\frac{\frac{(\gamma-\kappa)\left(\gamma+\kappa+\beta c_2\right)}{\gamma-\kappa-\beta c_2} e^{\gamma(T-t)}-(\gamma+\kappa)}{\beta\left(1+\frac{\gamma+\kappa+\beta c_2}{\gamma-\kappa-\beta c_2} e^{\gamma(T-t)}\right)}
\end{aligned}
\end{equation}
Multiplying numerator and denominator of by \(\gamma-\kappa-\beta c_2\) and expanding gives
\begin{equation}
\begin{aligned}
y(t) & =\frac{(\gamma-\kappa)\left(\gamma+\kappa+\beta c_2\right) e^{\gamma(T-t)}-(\gamma+\kappa)\left(\gamma-\kappa-\beta c_2\right)}{\beta\left[\left(\gamma-\kappa-\beta c_2\right)+\left(\gamma+\kappa+\beta c_2\right) e^{\gamma(T-t)}\right]} \\
& =\frac{(\gamma-\kappa)\left(\gamma+\kappa+\beta c_2\right)+(\gamma-\kappa)\left(\gamma+\kappa+\beta c_2\right)\left(e^{\gamma(T-t)}-1\right)-(\gamma+\kappa)\left(\gamma-\kappa-\beta c_2\right)}{\beta\left[\left(\gamma-\kappa-\beta c_2+\gamma+\kappa+\beta c_2\right)+\left(\gamma+\kappa+\beta c_2\right)\left(e^{\gamma(T-t)}-1\right)\right]} \\
& =\frac{(\gamma-\kappa+\gamma+\kappa) \beta c_2+(\gamma-\kappa)\left(\gamma+\kappa+\beta c_2\right)\left(e^{\gamma(T-t)}-1\right)}{\beta\left[2 \gamma+\left(\gamma+\kappa+\beta c_2\right)\left(e^{\gamma(T-t)}-1\right)\right]} \\
& =\frac{2 \gamma \beta c_2+(\gamma-\kappa)\left(\gamma+\kappa+\beta c_2\right)\left(e^{\gamma(T-t)}-1\right)}{2 \gamma \beta+\beta\left(\gamma+\kappa+\beta c_2\right)\left(e^{\gamma(T-t)}-1\right)}
\end{aligned}
\end{equation}
Dividing numerator and denominator in \(2\gamma\beta\) yields a symmetric form:
\begin{equation}
y(t)=\frac{c_2+\frac{(\gamma-\kappa)\left(\gamma+\kappa+\beta c_2\right)}{2 \gamma \beta}\left(e^{\gamma(T-t)}-1\right)}{1+\frac{\gamma+\kappa+\beta c_2}{2 \gamma}\left(e^{\gamma(T-t)}-1\right)}
\end{equation}

\paragraph{Closed form for \(B(t,T)\).}
Recalling \(y(t)=B(t,T)\), we obtain the analytic solution
\begin{equation}\label{eq:B-closed-two}
B(t,T)=
\frac{\,c_2+\dfrac{(\gamma-\kappa)(\gamma+\kappa+\beta c_2)}{2\gamma\beta}\big(e^{\gamma(T-t)}-1\big)}
{\,1+\dfrac{\gamma+\kappa+\beta c_2}{2\gamma}\big(e^{\gamma(T-t)}-1\big)}
\end{equation}
Equivalently—using \(\gamma^2=\kappa^2+2\beta c_1\) so that
\(2c_1=(\gamma^2-\kappa^2)/\beta\)—\eqref{eq:B-closed-two} can be written in the widely used
\(c_1\)-explicit form
\[
B(t,T)=
\frac{\,2c_1\big(e^{\gamma(T-t)}-1\big)+(\gamma-\kappa)c_2 e^{\gamma(T-t)}+(\gamma+\kappa)c_2}
{\,2\gamma+\big(\gamma+\kappa+\beta c_2\big)\big(e^{\gamma(T-t)}-1\big)}
\]
Both formulas are algebraically identical and satisfy the terminal condition \(B(T,T)=c_2\).



\subsection{Deriving the closed form of \(A(t,T)\)}

We use the linear identity between \(A\) and \(B\) obtained from the martingale construction:
\begin{equation}
\label{eq:AB}
A(t,T)= -\frac{\alpha}{\beta}\big(B(t,T)-c_2\big)
       -\frac{(\beta\theta+\alpha)\kappa}{\beta}\int_t^{T} B(s,T)\,ds
       +\frac{\alpha c_1}{\beta}\,(T-t).
\end{equation}

\paragraph{Step 1: Rewrite \(B(t,T)-c_2\) correctly.}
Let
\[
\gamma=\sqrt{\kappa^2+2\beta c_1},\qquad
D=\frac{\gamma+\kappa+\beta c_2}{2\gamma}.
\]
From the closed form of \(B\),
\[
B(t,T)=
c_2+\frac{(\gamma-\kappa)(\gamma+\kappa+\beta c_2)}{2\gamma\beta}\,
      \frac{e^{\gamma(T-t)}-1}{1+D\big(e^{\gamma(T-t)}-1\big)} ,
\]
we have
\begin{align}
B(t,T)-c_2
&=\frac{\dfrac{\gamma-\kappa}{\beta}\,\dfrac{\gamma+\kappa+\beta c_2}{2\gamma}\,
        \big(e^{\gamma(T-t)}-1\big)}
        {1+D\big(e^{\gamma(T-t)}-1\big)} .
\end{align}
\emph{Remark.} The erroneous appearance of \((\gamma-\kappa-\beta c_2)\) in some drafts
comes from moving \(-c_2\) to the numerator without first taking the common denominator.

\paragraph{Step 2: Evaluate \(\displaystyle \int_t^T B(s,T)\,ds\).}
Using the same representation for \(B\) and the change of variables
\(u=T-s\), \(w=e^{\gamma u}\) (so \(du=\tfrac{1}{\gamma w}\,dw\) and \(w\in[1,w_0]\) with \(w_0=e^{\gamma(T-t)}\)),
\begin{align}
\int_t^T B(s,T)\,ds
&=\int_0^{T-t}\!\left[c_2+\frac{\gamma-\kappa}{\beta}\,
\frac{D\big(e^{\gamma u}-1\big)}{1+D\big(e^{\gamma u}-1\big)}\right]du\\
&=\frac{1}{\gamma}\int_{1}^{w_0}\!
\left[\frac{c_2+\frac{\gamma-\kappa}{\beta}\,\frac{Dw}{1+Dw}}{w}
      -\frac{\gamma-\kappa}{\beta}\,\frac{D}{(1+Dw)w}\right]dw .
\end{align}
Use the identity \(\displaystyle \frac{1}{w(1+Dw)}=\frac{1}{w}-\frac{D}{1+Dw}\) to obtain
\begin{align}
\int_t^T B(s,T)\,ds
&=\frac{c_2}{\gamma}\int_{1}^{w_0}\frac{dw}{w}
   -\frac{\gamma-\kappa-\beta c_2}{\gamma\beta}\int_{1}^{w_0}\frac{dw}{1+Dw}
   +\frac{\gamma-\kappa}{\gamma\beta}\int_{1}^{w_0}\frac{dw}{w+1}\\
&=\frac{2}{\beta}\,\ln\!\big(1+D\,w_0\big)
  -\frac{\gamma+\kappa}{\gamma\beta}\,\ln\!\big(w_0+1\big),
\end{align}
where \(w_0=e^{\gamma(T-t)}\).

\paragraph{Step 3: Assemble \(A(t,T)\).}
Substituting the expressions from Steps 1–2 into the linear identity (\ref{eq:AB}) gives
\begin{equation}
\begin{aligned}
A(t,T)
&=\frac{\alpha c_1}{\beta}\,(T-t)
 -\frac{\alpha}{\beta}\,
  \frac{\left(2c_1-2\kappa c_2-\beta c_2^2\right)\left(e^{\gamma(T-t)}-1\right)}
       {2\gamma+\left(\gamma+\kappa+\beta c_2\right)\left(e^{\gamma(T-t)}-1\right)} \\
&\quad +\frac{2(\beta\theta+\alpha)\kappa}{\beta^{2}}
 \ln\!\left(
 \frac{2\gamma\,e^{\frac{(\gamma+\kappa)(T-t)}{2}}}
      {2\gamma+\left(\gamma+\kappa+\beta c_2\right)\left(e^{\gamma(T-t)}-1\right)}
 \right).
\end{aligned}
\end{equation}

\paragraph{For completeness: closed form of \(B(t,T)\).}
An equivalent expression that will be used elsewhere is
\begin{equation}
B(t,T)=
\frac{\,2c_{1}\big(e^{\gamma(T-t)}-1\big)+(\gamma-\kappa)c_{2}e^{\gamma(T-t)}+(\gamma+\kappa)c_{2}\,}
     {\,2\gamma+\big(\gamma+\kappa+\beta c_{2}\big)\big(e^{\gamma(T-t)}-1\big)} .
\end{equation}
Both formulas satisfy \(A(T,T)=0\) and \(B(T,T)=c_2\).

\subsection{General Solution}

In this section we present the general closed-form solution for the extended G--CIR model
under the risk–neutral measure $\mathbb Q$, and then give an equivalent representation
in terms of $\mathbb P$-measure primitives with a constant market price of risk $\lambda$.

\paragraph{Convention.}
We denote risk–neutral parameters by primes $(\kappa',\theta',\alpha,\beta)$.
When one starts from physical-measure primitives $(\kappa,\theta,\alpha,\beta)$ with
a constant market price of risk $\lambda$, the two sets are linked by the mapping below.

\paragraph{Measure mapping (constant market price of risk).}
Starting from $\mathbb P$-measure dynamics
\[
dx_t=\kappa(\theta-x_t)\,dt+\sqrt{\alpha+\beta x_t}\,dW_t^{\mathbb P},
\]
with $dW_t^{\mathbb Q}=dW_t^{\mathbb P}+\lambda\sqrt{\alpha+\beta x_t}\,dt$, the
$\mathbb Q$-measure parameters are
\begin{equation}
\label{eq:Q-from-P-lambda}
\kappa'=\kappa+\beta\lambda,\qquad
\theta'=\frac{\kappa\theta-\alpha\lambda}{\kappa+\beta\lambda}.
\end{equation}
Conversely, given $(\kappa',\theta',\alpha,\beta)$ and $\lambda$,
\begin{equation}
\label{eq:P-from-Q-lambda}
\kappa=\kappa'-\beta\lambda,\qquad
\theta=\frac{\theta'\kappa'+\alpha\lambda}{\kappa'-\beta\lambda}.
\end{equation}
The identity
\begin{equation}
\label{eq:log-invariance-lambda}
(\beta\theta'+\alpha)\,\kappa'=(\beta\theta+\alpha)\,\kappa
\end{equation}
implies that the coefficient of the logarithmic term in $A(\cdot)$ is measure-invariant.

\paragraph{General affine solution under $\mathbb Q$.}
For generic Laplace loadings $(c_1,c_2)$, set $\tau\coloneqq T-t$ and
\[
\gamma'\equiv \sqrt{\kappa'^2+2\beta\,c_1}.
\]
The transform
$\mathbb E_t^{\mathbb Q}\!\big[e^{-\int_t^{t+\tau} c_1 x_u\,du}\,e^{c_2 x_{t+\tau}}\big]
=\exp\!\big(A^{\mathbb Q}(t,T)-B^{\mathbb Q}(t,T)\,x_t\big)$
has the closed forms
\begin{align}
B^{\mathbb Q}(t,T)
&=\frac{\,2c_{1}\big(e^{\gamma'\tau}-1\big)+(\gamma'-\kappa')c_{2}\,e^{\gamma'\tau}+(\gamma'+\kappa')c_{2}\,}
       {\,2\gamma'+\big(\gamma'+\kappa'+\beta c_{2}\big)\big(e^{\gamma'\tau}-1\big)}, \\
A^{\mathbb Q}(t,T)
&=\frac{\alpha c_{1}}{\beta}\,\tau
-\frac{\alpha}{\beta}\,
\frac{\left(2c_{1}-2\kappa' c_{2}-\beta c_{2}^{2}\right)\left(e^{\gamma'\tau}-1\right)}
     {\,2\gamma'+\left(\gamma'+\kappa'+\beta c_{2}\right)\left(e^{\gamma'\tau}-1\right)} \notag\\
&\quad +\frac{2(\beta\theta'+\alpha)\kappa'}{\beta^{2}}
\ln\!\left(
\frac{2\gamma'\,e^{\frac{(\gamma'+\kappa')\tau}{2}}}
     {\,2\gamma'+\left(\gamma'+\kappa'+\beta c_{2}\right)\left(e^{\gamma'\tau}-1\right)}
\right).
\end{align}

\paragraph{Equivalent $\mathbb P$-parameter representation (constant $\lambda$).}
Substituting \eqref{eq:Q-from-P-lambda} into the above and using
\eqref{eq:log-invariance-lambda}, one obtains the same affine coefficients written
directly in terms of $(\kappa,\theta,\alpha,\beta)$ and $\lambda$.
Define
\[
\gamma^\ast\equiv \sqrt{(\kappa+\beta\lambda)^2+2\beta\,c_1}.
\]
Then
\begin{align}
B^{\mathbb P}(t,T;\lambda)
&=\frac{\,2c_{1}\big(e^{\gamma^\ast\tau}-1\big)
+\big(\gamma^\ast-\kappa-\beta\lambda\big)c_{2}\,e^{\gamma^\ast\tau}
+\big(\gamma^\ast+\kappa+\beta\lambda\big)c_{2}\,}
{\,2\gamma^\ast+\big(\gamma^\ast+\kappa+\beta\lambda+\beta c_{2}\big)\big(e^{\gamma^\ast\tau}-1\big)}, \\
A^{\mathbb P}(t,T;\lambda)
&=\frac{\alpha c_{1}}{\beta}\,\tau
-\frac{\alpha}{\beta}\,
\frac{\left(2c_{1}-2(\kappa+\beta\lambda) c_{2}-\beta c_{2}^{2}\right)\left(e^{\gamma^\ast\tau}-1\right)}
     {\,2\gamma^\ast+\left(\gamma^\ast+\kappa+\beta\lambda+\beta c_{2}\right)\left(e^{\gamma^\ast\tau}-1\right)} \notag\\
&\quad +\frac{2(\beta\theta+\alpha)\kappa}{\beta^{2}}
\ln\!\left(
\frac{2\gamma^\ast\,e^{\frac{(\gamma^\ast+\kappa+\beta\lambda)\tau}{2}}}
     {\,2\gamma^\ast+\left(\gamma^\ast+\kappa+\beta\lambda+\beta c_{2}\right)\left(e^{\gamma^\ast\tau}-1\right)}
\right).
\end{align}
By construction $A^{\mathbb P}(t,T;\lambda)=A^{\mathbb Q}(t,T)$ and
$B^{\mathbb P}(t,T;\lambda)=B^{\mathbb Q}(t,T)$ for the same $(c_1,c_2)$.

\paragraph{Admissibility.}
All loadings $c_1$ are taken in the Riccati admissibility domain so that
$\gamma'$ and $\gamma^\ast$ are real. Standard parameter restrictions
ensuring nonnegativity of $x_t$ apply to the extended G--CIR setting.

\subsection{Special Cases Solutions}

We start from the physical-measure (\( \mathbb P \)) dynamics of the generalized CIR factor:
\begin{equation}\label{eq:P-SDE}
dx_t \;=\; \kappa(\theta-x_t)\,dt \;+\; \sqrt{\alpha+\beta x_t}\,dW_t^{\mathbb P},
\qquad \alpha\ge 0,\;\beta>0,\;\kappa>0,\;\theta\ge 0.
\end{equation}
Pricing is done under the risk-neutral measure \( \mathbb Q \) for the transform
\[
\mathbb E_t^{\mathbb Q}\!\left[\exp\!\Big(-\!\int_t^{t+\tau} x_u\,du\Big)\right]
= \exp\!\big(A(\tau)-B(\tau)\,x_t\big),
\]
with \(A,B\) solving the Riccati system derived earlier.

\paragraph{Change of measure.}
Assume a \emph{constant} market price of risk level \( \lambda\in\mathbb R \), and take the
\emph{risk-price process}
\[
\Lambda_t \;\coloneqq\; \lambda\,\sqrt{\alpha+\beta x_t}.
\]
Define the Radon--Nikodym density
\[
\frac{d\mathbb Q}{d\mathbb P}\Big|_{\mathcal F_t}
=\exp\!\left(-\int_0^t \Lambda_s\,dW_s^{\mathbb P}-\frac12\int_0^t \Lambda_s^2\,ds\right),
\quad
dW_t^{\mathbb Q}=dW_t^{\mathbb P}+\Lambda_t\,dt .
\]
Applying Girsanov to \eqref{eq:P-SDE}, the \( \mathbb Q \)-dynamics become
\begin{equation}\label{eq:Q-SDE-raw}
dx_t \;=\; \Big(\kappa(\theta-x_t)-\Lambda_t\,\sqrt{\alpha+\beta x_t}\Big)dt
\;+\; \sqrt{\alpha+\beta x_t}\,dW_t^{\mathbb Q}
\;=\; \big[\kappa(\theta-x_t)-\lambda(\alpha+\beta x_t)\big]dt
\;+\; \sqrt{\alpha+\beta x_t}\,dW_t^{\mathbb Q}.
\end{equation}
The drift in \eqref{eq:Q-SDE-raw} can be written in mean-reverting form
\begin{equation}\label{eq:Q-kp-thetap}
dx_t \;=\; \kappa'(\theta'-x_t)\,dt \;+\; \sqrt{\alpha+\beta x_t}\,dW_t^{\mathbb Q},
\qquad
\kappa' \coloneqq \kappa+\beta\lambda,\qquad
\theta' \coloneqq \frac{\kappa\theta-\alpha\lambda}{\kappa+\beta\lambda}.
\end{equation}
Indeed,
\[
\kappa'(\theta'-x)
=(\kappa+\beta\lambda)\!\left(\frac{\kappa\theta-\alpha\lambda}{\kappa+\beta\lambda}-x\right)
=\kappa\theta-\alpha\lambda-\kappa x-\beta\lambda x
=\kappa(\theta-x)-\lambda(\alpha+\beta x).
\]

\paragraph{Affine transform under \( \mathbb Q \).}
With \eqref{eq:Q-kp-thetap}, the transform keeps exactly the same functional form as in the
general case, with parameters \( (\kappa',\theta') \) and
\(
\gamma' \coloneqq \sqrt{\kappa'^2+2\beta c_1}
\)
(where \(c_1\) is the Laplace loading; for discounting we will set \(c_1=1\), \(c_2=0\)).
A useful identity used below is
\begin{equation}\label{eq:coeff-cancel}
(\beta\theta'+\alpha)\,\kappa'
=\Big(\beta\,\frac{\kappa\theta-\alpha\lambda}{\kappa'}+\alpha\Big)\kappa'
=\beta\kappa\theta-\beta\alpha\lambda+\alpha\kappa+\alpha\beta\lambda
= (\beta\theta+\alpha)\kappa,
\end{equation}
which shows that the coefficient multiplying the log term is \emph{invariant} to the
change of measure when the market price of risk is chosen as above.

\bigskip

\paragraph{Case 1: Classical CIR Model}

Setting \(\alpha=0\), \(\beta=\sigma^{2}\), \(c_{1}=1\), \(c_{2}=0\) gives
\[
dX_t=\kappa(\theta-X_t)\,dt+\sigma\sqrt{X_t}\,dW_t,
\qquad
\gamma=\sqrt{\kappa^{2}+2\sigma^{2}}.
\]
The affine transform for the discount factor
\[
\mathbb{E}\!\left[\exp\left(-\int_t^{T}X_u\,du\right)\middle|\mathcal{F}_t\right]
=\exp\!\big(A(t,T)-B(t,T)X_t\big)
\]
reduces to the classical CIR expressions:
\begin{align}
B(t,T)
&=\frac{ 2\big(e^{\gamma (T-t)}-1\big) }
       { 2\gamma+(\gamma+\kappa)\big(e^{\gamma (T-t)}-1\big) }, \\
A(t,T)
&=\frac{2\kappa\theta}{\beta}\,
\ln\!\left(\frac{2\gamma\,e^{\frac{(\gamma+\kappa)(T-t)}{2}}}
                 {\,2\gamma+(\gamma+\kappa)\big(e^{\gamma (T-t)}-1\big)}\right).
\end{align}

\paragraph{Case 2 (Zero market price of risk; $\mathbb Q$-dynamics coincide with $\mathbb P$-dynamics).}
Start from the physical-measure dynamics
\begin{equation}
dx_t=\kappa(\theta-x_t)\,dt+\sqrt{\alpha+\beta x_t}\,dW_t^{\mathbb P}.
\end{equation}
Let the (constant) market price of risk be $\lambda$. Under the change of measure with
risk-price process $\Lambda_t=\lambda\sqrt{\alpha+\beta x_t}$,
\[
\frac{d\mathbb Q}{d\mathbb P}\Big|_{\mathcal F_t}
=\exp\!\left(-\int_0^t \Lambda_s\,dW_s^{\mathbb P}-\frac12\int_0^t \Lambda_s^2\,ds\right),
\qquad
dW_t^{\mathbb Q}=dW_t^{\mathbb P}+\Lambda_t\,dt .
\]
When $\lambda=0$ we have $\Lambda_t\equiv 0$, hence
\[
\frac{d\mathbb Q}{d\mathbb P}\Big|_{\mathcal F_t}=1,\qquad dW_t^{\mathbb Q}=dW_t^{\mathbb P},
\]
so the $\mathbb Q$-dynamics coincide with those under $\mathbb P$:
\begin{equation}
dx_t=\kappa(\theta-x_t)\,dt+\sqrt{\alpha+\beta x_t}\,dW_t^{\mathbb Q}.
\end{equation}
For the discounting transform (take $c_1=1$, $c_2=0$) and
$\gamma=\sqrt{\kappa^2+2\beta}$, the affine coefficients are
\begin{align}
B(\tau)&=\frac{2\big(e^{\gamma\tau}-1\big)}
               {2\gamma+(\gamma+\kappa)\big(e^{\gamma\tau}-1\big)},\\
A(\tau)&=\frac{\alpha}{\beta}\,\tau
-\frac{2\alpha}{\beta}\,
 \frac{e^{\gamma\tau}-1}{2\gamma+(\gamma+\kappa)\big(e^{\gamma\tau}-1\big)}
+\frac{2(\beta\theta+\alpha)\kappa}{\beta^{2}}
 \ln\!\left(\frac{2\gamma\,e^{\frac{(\gamma+\kappa)\tau}{2}}}
 {2\gamma+(\gamma+\kappa)\big(e^{\gamma\tau}-1\big)}\right).
\end{align}
Case 2 is the special case of Case 3 with $\lambda=0$, so that
$\kappa'=\kappa$ and $\theta'=\theta$. Consequently, the affine
coefficients in Case 3 reduce exactly to those in Case 2 when $\lambda=0$.

\bigskip
\noindent\textbf{Case 3 (constant market price of risk \( \lambda\)).}

Under the physical measure \(\mathbb P\),
\[
dx_t=\kappa(\theta-x_t)\,dt+\sqrt{\alpha+\beta x_t}\,dW_t^{\mathbb P}.
\]
With a constant market price of risk \(\lambda\) and
\(dW_t^{\mathbb Q}=dW_t^{\mathbb P}+\lambda\sqrt{\alpha+\beta x_t}\,dt\),
the \(\mathbb Q\)-dynamics are
\[
dx_t=\big[\kappa(\theta-x_t)-\lambda(\alpha+\beta x_t)\big]dt
     +\sqrt{\alpha+\beta x_t}\,dW_t^{\mathbb Q}
   =\kappa'(\theta'-x_t)\,dt+\sqrt{\alpha+\beta x_t}\,dW_t^{\mathbb Q},
\]
where
\[
\kappa'=\kappa+\beta\lambda,\qquad
\theta'=\frac{\kappa\theta-\alpha\lambda}{\kappa+\beta\lambda}.
\]

For the affine transform
\(\mathbb E_t^{\mathbb Q}\!\big[e^{-\int_t^{t+\tau}c_1 x_u\,du}\,e^{c_2 x_{t+\tau}}\big]
=\exp(A(\tau)-B(\tau)x_t)\),
let \(\gamma'=\sqrt{\kappa'^2+2\beta c_1}\). Then
\begin{align}
B(\tau)
&=\frac{2c_1\!\left(e^{\gamma'\tau}-1\right)+(\gamma'-\kappa')c_2 e^{\gamma'\tau}+(\gamma'+\kappa')c_2}
       {2\gamma'+(\gamma'+\kappa'+\beta c_2)\left(e^{\gamma'\tau}-1\right)},\\
A(\tau)
&=\frac{\alpha c_1}{\beta}\,\tau
-\frac{\alpha}{\beta}\,
 \frac{\left(2c_1-2\kappa' c_2-\beta c_2^2\right)\left(e^{\gamma'\tau}-1\right)}
      {2\gamma'+(\gamma'+\kappa'+\beta c_2)\left(e^{\gamma'\tau}-1\right)}
+\frac{2(\beta\theta'+\alpha)\kappa'}{\beta^2}
 \ln\!\left(
 \frac{2\gamma' e^{\frac{(\gamma'+\kappa')\tau}{2}}}
      {2\gamma'+(\gamma'+\kappa'+\beta c_2)\left(e^{\gamma'\tau}-1\right)}
 \right).
\end{align}

In the discounting case (\(c_1=1,\;c_2=0\)) with \(\gamma'=\sqrt{\kappa'^2+2\beta}\),
\begin{align}
B(\tau)
&=\frac{2\left(e^{\gamma'\tau}-1\right)}
       {2\gamma'+(\gamma'+\kappa')\left(e^{\gamma'\tau}-1\right)},\\
A(\tau)
&=\frac{\alpha}{\beta}\,\tau
-\frac{2\alpha}{\beta}\,
 \frac{e^{\gamma'\tau}-1}{2\gamma'+(\gamma'+\kappa')\left(e^{\gamma'\tau}-1\right)}
+\frac{2(\beta\theta'+\alpha)\kappa'}{\beta^2}
 \ln\!\left(
 \frac{2\gamma' e^{\frac{(\gamma'+\kappa')\tau}{2}}}
      {2\gamma'+(\gamma'+\kappa')\left(e^{\gamma'\tau}-1\right)}
 \right).
\end{align}

Thus, the closed-form solutions of $A(\tau)$ and $B(\tau)$ expressed by parameters under $\mathbb P$ are:

$$
\begin{gathered}
A(\tau)=\frac{\alpha}{\beta} \tau-\frac{2 \alpha}{\beta} \frac{e^{\gamma \tau}-1}{2 \gamma+(\gamma+\kappa+\beta \lambda)\left(e^{\gamma \tau}-1\right)}+\frac{2(\beta \theta+\alpha) \kappa}{\beta^2} \ln \left(\frac{2 \gamma e^{\frac{(\gamma+\kappa+\beta \lambda) \tau}{2}}}{2 \gamma+(\gamma+\kappa+\beta \lambda)\left(e^{\gamma \tau}-1\right)}\right) \\
B(\tau)=\frac{2\left(e^{\gamma \tau}-1\right)}{2 \gamma+(\gamma+\kappa+\beta \lambda)\left(e^{\gamma \tau}-1\right)}
\end{gathered}
$$

where $\gamma=\sqrt{(\kappa+\beta \lambda)^2+2 \beta}$.

\section{Pricing defaultable zero-coupon bonds in the Cox--Markov model}
\label{sec:cox-pricing-onepiece}

\paragraph{Setup.}
Work on $(\Omega,\mathcal F,(\mathcal F_t)_{t\ge0},\mathbb Q)$.
Let $X=(X_t)_{t\ge0}$ be an $\mathcal F_t$-adapted state process and let the short rate be
$r_t=R(X_t)$. Let $\eta=(\eta_t)_{t\ge0}$ be a continuous-time Markov chain on the rating set
$\{1,\ldots,K\}$ with the absorbing default state $K$.
We work under the \emph{Cox--Markov} assumption: conditional on the path of $X$,
$\eta$ is a (possibly time-inhomogeneous) Markov chain with $\mathcal F_t$-adapted generator
$A_X(u)\in\mathbb R^{K\times K}$ on $[t,T]$.
Let $P_X(s,t)=[P_X(s,t)_{im}]$ denote the conditional transition matrix given $X$ on $[s,t]$; it solves
the backward Kolmogorov equation
\begin{equation}\label{eq:BK}
\partial_s P_X(s,t)=-A_X(s)\,P_X(s,t),\qquad P_X(t,t)=I_K .
\end{equation}
Define the default time $\tau=\inf\{u\ge0:\eta_u=K\}$ and the enlarged filtration
$\mathcal G_t:=\mathcal F_t\vee\sigma(\eta_u: u\le t)$.

\paragraph{Proposition (pricing identity).}
Assume zero recovery. For a firm with current rating $\eta_t=i\in\{1,\ldots,K-1\}$,
the time-$t$ price of a defaultable zero-coupon bond maturing at $T$ is
\begin{equation}\label{eq:v-goal}
v^{\,i}(t,T)=
\mathbb E_t^{\mathbb Q}\!\left[
\exp\!\Big(-\!\int_t^T R(X_u)\,du\Big)\,\big(1-P_X(t,T)_{iK}\big)
\ \Big|\ \mathcal G_t
\right].
\end{equation}

\begin{proof}
Let $B_t=\exp(\int_0^t r_u\,du)$ be the money-market account under $\mathbb Q$.
With zero recovery, the payoff is $1_{\{\tau>T\}}$. Hence,
\[
v^{\,i}(t,T)=B_t\,\mathbb E_t^{\mathbb Q}\!\left[\left.\frac{1_{\{\tau>T\}}}{B_T}\right|\mathcal G_t\right]
=\mathbb E_t^{\mathbb Q}\!\left[
1_{\{\tau>T\}}\exp\!\Big(-\!\int_t^T R(X_u)\,du\Big)\ \Big|\ \mathcal G_t
\right].
\]
By the tower property and the Cox--Markov property,
\[
\mathbb E_t^{\mathbb Q}\!\left[1_{\{\tau>T\}}\mid \mathcal F_T\vee\mathcal G_t\right]
=
\mathbb P_t^{\mathbb Q}\!\left(\tau>T\mid \mathcal F_T\vee\mathcal G_t\right)
=1-P_X(t,T)_{iK},
\]
where $P_X(t,T)_{iK}$ is the conditional default probability on $[t,T]$ given the path of $X$.
Since $\exp(-\int_t^T R(X_u)\,du)$ is $\mathcal F_T$-measurable, substituting into the previous display
yields \eqref{eq:v-goal}.
\end{proof}

\paragraph{Diagonal propagator under a common-eigenvector generator.}
Fix $[s,t]\subseteq [0,\infty)$ and condition on a path of $X$ on $[s,t]$.
Assume the conditional generator admits a common-eigenvector representation
\begin{equation}\label{eq:AX-diag}
A_X(u)=B\,\mu(X_u)\,B^{-1},\qquad
\mu(X_u):=\mathrm{diag}\!\big(\mu_1(X_u),\ldots,\mu_{K-1}(X_u),0\big),
\end{equation}
where $B$ is constant and invertible and state $K$ is absorbing (hence the last diagonal entry is $0$).
Define $\widetilde P(s,t):=B^{-1}P_X(s,t)B$. Substituting into \eqref{eq:BK} gives
\begin{equation}\label{eq:tilde-ODE}
\partial_s \widetilde P(s,t)=-\mu(X_s)\,\widetilde P(s,t),\qquad \widetilde P(t,t)=I_K.
\end{equation}
Because $\mu(X_s)$ is diagonal, $\widetilde P(s,t)$ remains diagonal and the coordinates satisfy
\[
\partial_s \widetilde P_{jj}(s,t)=-\mu_j(X_s)\widetilde P_{jj}(s,t),\qquad \widetilde P_{jj}(t,t)=1.
\]
Integrating yields the diagonal propagator
\begin{equation}\label{eq:EX}
E_X(s,t):=\widetilde P(s,t)
=\mathrm{diag}\!\Big(
e^{\int_s^{t}\mu_1(X_u)\,du},\ldots,e^{\int_s^{t}\mu_{K-1}(X_u)\,du},\,1\Big),
\end{equation}
and hence
\begin{equation}\label{eq:P-factor}
P_X(s,t)=B\,E_X(s,t)\,B^{-1}.
\end{equation}
Differentiating \eqref{eq:P-factor} and using \eqref{eq:AX-diag} verifies that $P_X$ solves \eqref{eq:BK}.
The key simplification is that diagonal matrices commute over time in the spectral coordinates, so the
time-ordered exponential reduces to \eqref{eq:EX}.

\paragraph{Default column and spectral survival weights (a missing step in the literature).}
Suppress the subscript $X$ for brevity and write $P(t,T)=B E(t,T)B^{-1}$ with $E(t,T)$ as in \eqref{eq:EX}.
Entrywise,
\begin{equation}\label{eq:P-entry}
P_{im}(t,T)=\sum_{k=1}^{K} b_{ik}\,b^{-1}_{km}\,
\exp\!\Big(\int_t^T \mu_k(X_u)\,du\Big).
\end{equation}
Taking $m=K$ and using $\mu_K\equiv 0$ gives
\begin{equation}\label{eq:PiK-split}
P_{iK}(t,T)=\sum_{k=1}^{K-1} b_{ik}\,b^{-1}_{kK}\,e^{\int_t^T\mu_k(X_u)\,du}
\;+\;b_{iK}\,b^{-1}_{KK}.
\end{equation}
By Lemma~\ref{lem:bik} below, $b_{iK}b^{-1}_{KK}=1$ for all $i$, hence
\begin{equation}\label{eq:survival-sum}
1-P_{iK}(t,T)=\sum_{j=1}^{K-1}\beta_{ij}\,
\exp\!\Big(\int_t^T\mu_j(X_u)\,du\Big),
\qquad
\beta_{ij}:=-\,b_{ij}\,b^{-1}_{jK}\ \ge 0,\ \ \sum_{j=1}^{K-1}\beta_{ij}=1.
\end{equation}
The nonnegativity and unit-sum property follow because $1-P_{iK}(t,T)$ is a survival probability and the
representation is unique given the diagonal basis.

\paragraph{Corollary (spectral pricing formula).}
Combining \eqref{eq:v-goal} with \eqref{eq:survival-sum} yields
\begin{equation}\label{eq:pricing-final}
v^{\,i}(t,T)=
\sum_{j=1}^{K-1}\beta_{ij}\,
\mathbb E_t^{\mathbb Q}\!\left[
\exp\!\Big(\int_t^T\big(\mu_j(X_u)-R(X_u)\big)\,du\Big)
\ \Big|\ \mathcal G_t
\right].
\end{equation}
This is the desired one-piece (non-RS) pricing identity used in the main text: conditional on the state path,
credit enters only through the mode loadings $\beta_{ij}$ and the mode intensities $\mu_j(\cdot)$.

\begin{lemma}\label{lem:bik}
Let $A\in\mathbb R^{K\times K}$ be a rating generator with absorbing state $K$ (so the $K$-th row is zero and
each row sums to zero). Let $B$ be a right-eigenvector matrix of $A$ whose last column is a (nonzero) right
eigenvector associated with eigenvalue $0$. Then $b_{iK}b^{-1}_{KK}=1$ for all $i$.
\end{lemma}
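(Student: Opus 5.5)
The plan is to identify both factors in $b_{iK}b^{-1}_{KK}$ explicitly: the last column of $B$ and the last row of $B^{-1}$. We work under the standing assumption of the main text that the non-default sub-generator $\widetilde A$ (the upper-left $(K-1)\times(K-1)$ block of $A$) has strictly negative eigenvalues $d_j<0$. In particular, $\widetilde A$ is invertible, which says that every non-default state is transient. This assumption is genuinely needed and will be stated explicitly. Without it the kernel of $A$ could be more than one-dimensional, and the eigenvalue-$0$ column of $B$ need not be constant.

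\emph{Step 1 (right null space).} Since each row of $A$ sums to zero, $A\mathbf 1=\mathbf 0$, so $\mathbf 1$ spans part of $\ker A$. Write $A=\begin{pmatrix}\widetilde A & a\\ \mathbf 0^\top & 0\end{pmatrix}$. The first $K-1$ rows $(\widetilde A\ \, a)$ have rank $K-1$ because $\widetilde A$ is invertible, so $\operatorname{rank}A=K-1$ and $\ker A=\operatorname{span}\{\mathbf 1\}$. Hence the last column of $B$ equals $c\mathbf 1$ for some $c\neq 0$, that is, $b_{iK}=c$ for all $i$.

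\emph{Step 2 (last row of $B^{-1}$).} From $AB=BD$ with $D$ diagonal and $D_{KK}=0$, we get $B^{-1}A=DB^{-1}$. Therefore the last row $\ell=(\ell',\ell_K)$ of $B^{-1}$ satisfies $\ell A=\mathbf 0^\top$. Because the last row of $A$ is zero, the first $K-1$ coordinates of $\ell A$ are exactly $\ell'\widetilde A$. So $\ell'\widetilde A=\mathbf 0^\top$, and invertibility of $\widetilde A$ forces $\ell'=\mathbf 0$. Thus $\ell=(0,\dots,0,b^{-1}_{KK})$.

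\emph{Step 3 (normalization).} The $(K,K)$ entry of $B^{-1}B=I_K$ gives $\ell\cdot(c\mathbf 1)=1$, so $b^{-1}_{KK}\,c=1$. Combining this with Step 1 gives $b_{iK}b^{-1}_{KK}=c\cdot c^{-1}=1$ for every $i$.

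The only delicate point is Step 1: showing that the eigenvalue-$0$ eigenspace is one-dimensional, so that an arbitrary eigenvalue-$0$ column is forced to be a multiple of $\mathbf 1$. I would handle this through the rank argument above, which rests on the invertibility of $\widetilde A$, and I would remark that this invertibility is exactly the condition $d_j<0$ imposed on $\widetilde\Lambda$ in Section~\ref{subsec:pricing_setup}.
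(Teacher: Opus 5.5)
Your proof is correct, but it pins down $b^{-1}_{KK}$ differently from the paper.

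The paper works on $B$ itself. It takes the last column to be $c\mathbf 1$ and observes that any eigenvector with nonzero eigenvalue has vanishing $K$-th coordinate, because the $K$-th row of $A$ is zero. It concludes $B=\begin{pmatrix} C & c\mathbf 1\\ \mathbf 0^\top & c\end{pmatrix}$ and reads $b^{-1}_{KK}=c^{-1}$ off the block-triangular inverse.

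You instead identify the last row of $B^{-1}$ as a left null vector of $A$ via $B^{-1}A=DB^{-1}$. You kill its first $K-1$ entries using invertibility of $\widetilde A$, then normalize through $(B^{-1}B)_{KK}=1$. Your Step 2 never needs the eigenvalues attached to the other columns.

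Your route also makes explicit what the paper leaves implicit. The paper simply ``chooses'' the eigenvalue-$0$ column to be $c\mathbf 1$ and assumes the other $K-1$ columns carry nonzero eigenvalues. Both amount to $0$ being a simple eigenvalue, i.e.\ $\widetilde A$ invertible, and that hypothesis is genuinely needed. Take $K=3$ with states $1$ and $3$ absorbing and row $2$ equal to $(1,-2,1)$, and use $(2,1,0)^\top$ as the eigenvalue-$0$ column. Then $b^{-1}_{33}=-1/2$ and $b_{13}b^{-1}_{33}=-1$. Note also that you only use invertibility of $\widetilde A$, which is weaker than $d_j<0$.

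What the paper's route buys in exchange is the full block form of $B^{-1}$. For $j<K$ its last column has entries $b^{-1}_{jK}=-\sum_{k<K}[C^{-1}]_{jk}$, where $C$ is the eigenvector matrix of $\widetilde A$. This is exactly the ``default column'' convention \eqref{eq:default_column_Binv} behind the weights $w_{ij}$. Your argument yields only the last row of $B^{-1}$, which is all the lemma itself requires.
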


\begin{proof}
Since each row of $A$ sums to zero, $A\mathbf 1=\mathbf 0$, hence $\mathbf 1$ is a right eigenvector with eigenvalue $0$.
Choose the last eigenvector as $v_K=c\,\mathbf 1$ with $c\neq 0$, so $b_{iK}=c$ for all $i$.
For any eigenvector $v$ with eigenvalue $\lambda\neq 0$, the $K$-th row of $Av=\lambda v$ gives
$0=\lambda v_K$, hence $v_K=0$. Therefore $B$ has the block form
\[
B=
\begin{pmatrix}
C & c\,\mathbf 1\\
\mathbf 0^\top & c
\end{pmatrix},
\qquad C\in\mathbb R^{(K-1)\times(K-1)}.
\]
By the inverse of an upper block-triangular matrix,
\[
B^{-1}=
\begin{pmatrix}
C^{-1} & -\,C^{-1}\mathbf 1\\
\mathbf 0^\top & c^{-1}
\end{pmatrix},
\]
so $b^{-1}_{KK}=c^{-1}$ and hence $b_{iK}b^{-1}_{KK}=c\cdot c^{-1}=1$ for all $i$.
\end{proof}

\section{Affine Pricing Building Blocks for CGB, CDB, and Corporate Bonds}
\label{app:pricing}

This appendix collects the \emph{fixed-regime} exponential-affine pricing blocks used to evaluate one-step conditional
transforms in the RS--GCIR model. In the full regime-switching model of the main text, regimes evolve according to CTMC
generators $Q^{r}$ and $Q^{c}$ (and $Q=Q^{r}\oplus Q^{c}$ for the joint regime). Pricing under full switching is implemented
by combining (i) fixed-regime one-step affine transforms over an observation interval $\tau$ (the grid spacing) with
(ii) the discrete-time transition matrices $P^{r}(\tau)=\exp(Q^{r}\tau)$ and $P(\tau)=\exp(Q\tau)$ in a backward recursion on
the observation grid. Throughout, expectations are taken under $\mathbb Q$.

Let $X_t=(X_{1,t},X_{2,t},X_{3,t},X_{4,t})^\top$, where $(X_{1,t},X_{2,t},X_{3,t})$ are rate-block factors driven by the
rate regime $s_t^r\in\mathcal S^r$, and $X_{4,t}$ is the credit factor driven by the credit regime $s_t^c\in\mathcal S^c$.
For each factor $k$, denote by $A_k^{(s)}(\tau;c_1,0)$ and $B_k^{(s)}(\tau;c_1,0)$ the one-factor coefficient functions in
Appendix~\ref{app:gcir}, evaluated at the risk-neutral parameters of factor $k$ under regime $s$. Under the main-text
normalization we set the deterministic shift $a=0$ (otherwise $A(\tau)$ below is replaced by $A(\tau)-a\tau$).

Under our normalization, the CGB and CDB short rates are
\begin{equation}\label{eq:app:short_rates}
r_t^{\mathrm{cgb}} = X_{1,t}+X_{2,t},\qquad
r_t^{\mathrm{cdb}} = X_{1,t}+X_{2,t}+X_{3,t}.
\end{equation}
Thus $X_{3,t}$ isolates the systematic convenience yield between the CGB and CDB curves and is governed by the rate regime
$s_t^r$.

For $m\in\{\mathrm{cgb},\mathrm{cdb}\}$ and each $s\in\mathcal S^{r}$, define the one-step conditional discount factor over an
observation interval $\tau$ by
\begin{equation}\label{eq:app:one_step_disc}
M_{m}^{(s)}(t;\tau)
\;\coloneqq\;
\mathbb E_t^{\mathbb Q}\!\left[
\exp\!\left(-\int_t^{t+\tau} r_u^{m}\,du\right)
\ \Big|\ X_t,\,s_t^{r}=s
\right]
=
\exp\!\Big\{A_{m}^{(s)}(\tau)-\big(B_{m}^{(s)}(\tau)\big)^\top X_t\Big\},
\end{equation}
where $(A_m^{(s)}(\tau),B_m^{(s)}(\tau))$ are constructed from the one-factor GCIR blocks in Appendix~\ref{app:gcir}.

\textit{CGB block.}
The factor loadings for $r_t^{\mathrm{cgb}}=X_{1,t}+X_{2,t}$ are $(c_{1,1},c_{1,2},c_{1,3},c_{1,4})=(1,1,0,0)$, hence
\begin{align}
A_{\mathrm{cgb}}^{(s)}(\tau)
&=A_1^{(s)}(\tau;1,0)+A_2^{(s)}(\tau;1,0), \label{eq:app:A_cgb_one}\\
B_{\mathrm{cgb}}^{(s)}(\tau)
&=\big(B_1^{(s)}(\tau;1,0),\ B_2^{(s)}(\tau;1,0),\ 0,\ 0\big)^\top. \label{eq:app:B_cgb_one}
\end{align}

\textit{CDB block.}
The factor loadings for $r_t^{\mathrm{cdb}}=X_{1,t}+X_{2,t}+X_{3,t}$ are $(c_{1,1},c_{1,2},c_{1,3},c_{1,4})=(1,1,1,0)$,
hence
\begin{align}
A_{\mathrm{cdb}}^{(s)}(\tau)
&=\sum_{k=1}^3 A_k^{(s)}(\tau;1,0), \label{eq:app:A_cdb_one}\\
B_{\mathrm{cdb}}^{(s)}(\tau)
&=\big(B_1^{(s)}(\tau;1,0),\ B_2^{(s)}(\tau;1,0),\ B_3^{(s)}(\tau;1,0),\ 0\big)^\top.
\label{eq:app:B_cdb_one}
\end{align}

Let $\mathbf P_m(t,T)\in\mathbb R^{|\mathcal S^r|}$ collect regime-conditional prices, with
$\big[\mathbf P_m(t,T)\big]_s \equiv P_m^{(s)}(t,T)$ for $s\in\mathcal S^r$ and $m\in\{\mathrm{cgb},\mathrm{cdb}\}$.
Define the diagonal matrix of one-step affine blocks
\[
\mathbf M_m(t;\tau)\;\coloneqq\;\mathrm{diag}\Big(M_m^{(s)}(t;\tau)\Big)_{s\in\mathcal S^r}.
\]
On the observation grid $T=t+n\tau$, the regime-conditional prices satisfy the backward recursion
\begin{equation}\label{eq:app:bond_vec_rec}
\mathbf P_m(t,t+n\tau)
\;=\;
\mathbf M_m(t;\tau)\,P^r(\tau)\,\mathbf P_m(t+\tau,t+n\tau),
\qquad
\mathbf P_m(T,T)=\mathbf 1,
\end{equation}
where $P^r(\tau)=\exp(Q^r\tau)$ and $\mathbf 1$ is an all-ones vector.

We price corporate bonds in a rating-based reduced-form framework in the spirit of \citet{lando1998cox,feldhutter2008decomposing}.
Let $\eta_t\in\{1,\ldots,K\}$ denote the rating, where $K$ is an absorbing default state and $\{1,\ldots,K-1\}$ are non-default ratings.

Let $\widetilde\Lambda\in\mathbb R^{(K-1)\times(K-1)}$ be the sub-generator on non-default ratings, diagonalizable as
$\widetilde\Lambda=\widetilde B\,\widetilde D\,\widetilde B^{-1}$ with $\widetilde D=\mathrm{diag}(d_1,\ldots,d_{K-1})$ and $d_j<0$.
Define the Lando weights
\begin{equation}\label{eq:app:w_ij}
w_{ij}\coloneqq -[\widetilde B]_{ij}\,[\widetilde B^{-1}]_{jK},
\qquad i=1,\ldots,K-1,\quad j=1,\ldots,K-1,
\end{equation}
which satisfy $w_{ij}\ge 0$ and $\sum_{j=1}^{K-1} w_{ij}=1$ (see proof in \ref{sec:cox-pricing-onepiece}).

\paragraph{Discounting and the joint-regime credit driver.}
Risky cash flows are discounted on the CDB curve, i.e.\ by $r_t^{\mathrm{cdb}}$ in \eqref{eq:app:short_rates}.
To allow four pass-through coefficients while keeping regime transitions separable, define the joint-regime intensity driver
\begin{equation}\label{eq:app:mu_joint}
\mu_t^{(s^r,s^c)}
\;\coloneqq\;
c_{s^c|s^r}\big(X_{1,t}+X_{2,t}\big)+X_{4,t},
\qquad (s^r,s^c)\in\mathcal S^r\times\mathcal S^c,
\end{equation}
where $c_{s^c|s^r}$ takes four values (e.g., $\{c_{E|L},c_{C|L},c_{E|H},c_{C|H}\}$). Importantly, $X_{3,t}$ enters corporate
valuation \emph{only} through discounting via $r_t^{\mathrm{cdb}}$, not through $\mu_t^{(s^r,s^c)}$.

Let $S_t=(s_t^r,s_t^c)$ denote the joint regime with generator $Q=Q^r\oplus Q^c$ and one-step transition matrix
$P(\tau)=\exp(Q\tau)$. For each joint regime $S=(s^r,s^c)\in\mathcal S$ and each eigenmode $j\in\{1,\ldots,K-1\}$,
define the one-step mode kernel over $\tau$ by
\begin{equation}\label{eq:app:one_step_mode}
M_{j}^{(S)}(t;\tau)
\;\coloneqq\;
\mathbb E_t^{\mathbb Q}\!\left[
\exp\!\left(\int_t^{t+\tau}\big(d_j\,\mu_u^{(S)}-r_u^{\mathrm{cdb}}\big)\,du\right)
\ \Big|\ X_t,\,S_t=S
\right]
=
\exp\!\Big\{A^{(j;S)}(\tau)-\big(B^{(j;S)}(\tau)\big)^\top X_t\Big\},
\end{equation}
where $\mu_u^{(S)}=c_{s^c|s^r}(X_{1,u}+X_{2,u})+X_{4,u}$ and the loadings defining
$(A^{(j;S)}(\tau),B^{(j;S)}(\tau))$ follow from the GCIR Laplace-transform blocks in Appendix~\ref{app:gcir}.

\textit{Loadings and the $(A^{(j;\cdot)},B^{(j;\cdot)})$ blocks.}
Using \eqref{eq:app:short_rates} and \eqref{eq:app:mu_joint}, for each eigenmode $j$ and joint regime $S=(s^r,s^c)$,
\[
r_u^{\mathrm{cdb}}-d_j\mu_u^{(S)}
=
\big(1-d_j c_{s^c|s^r}\big)\big(X_{1,u}+X_{2,u}\big) \;+\; X_{3,u} \;-\; d_j\,X_{4,u}.
\]
Therefore the factor loadings for the Laplace transform in Appendix~\ref{app:gcir} are
\begin{equation}\label{eq:app:c1_j}
c_{1,k}^{(j;S)}=
\begin{cases}
1-d_j c_{s^c|s^r}, & k=1,2,\\
1, & k=3,\\
-d_j, & k=4,
\end{cases}
\qquad c_{2,k}^{(j;S)}\equiv 0.
\end{equation}
Define the mode-specific affine blocks (evaluated at horizon $\tau$)
\begin{equation}\label{eq:app:AjBj_box}
\begin{aligned}
A^{(j; S)}(\tau)
&=\sum_{k=1}^3 A_k^{(s^r)}\!\big(\tau; c_{1,k}^{(j;S)}, 0\big)
  \;+\; A_4^{(s^c)}\!\big(\tau; c_{1,4}^{(j;S)}, 0\big),\\
B^{(j; S)}(\tau)
&=\Big(B_1^{(s^r)}\!\big(\tau; c_{1,1}^{(j;S)}, 0\big),\;
       B_2^{(s^r)}\!\big(\tau; c_{1,2}^{(j;S)}, 0\big),\;
       B_3^{(s^r)}\!\big(\tau; c_{1,3}^{(j;S)}, 0\big),\;
       B_4^{(s^c)}\!\big(\tau; c_{1,4}^{(j;S)}, 0\big)\Big)^{\top}.
\end{aligned}
\end{equation}

Let $\mathbf u_j(t,T)\in\mathbb R^{|\mathcal S|}$ collect the mode-$j$ continuation values across joint regimes, with
$\big[\mathbf u_j(t,T)\big]_S \equiv u_j^{(S)}(t,T)$. Define the diagonal matrix of one-step mode blocks
\[
\mathbf M_j(t;\tau)\;\coloneqq\;\mathrm{diag}\Big(M_j^{(S)}(t;\tau)\Big)_{S\in\mathcal S}.
\]
On the grid $T=t+n\tau$, the mode values satisfy
\begin{equation}\label{eq:app:corp_vec_rec}
\mathbf u_j(t,t+n\tau)
\;=\;
\mathbf M_j(t;\tau)\,P(\tau)\,\mathbf u_j(t+\tau,t+n\tau),
\qquad
\mathbf u_j(T,T)=\mathbf 1,
\end{equation}
where $P(\tau)=\exp(Q\tau)$ and $\mathbf 1$ is an all-ones vector. The joint-regime corporate discount-bond value is then
\begin{equation}\label{eq:app:corp_from_modes}
v^{\,i,S}(t,T)
=
\sum_{j=1}^{K-1} w_{ij}\,u_j^{(S)}(t,T),
\qquad S\in\mathcal S.
\end{equation}

In the main text, observable discount-bond prices are formed by convex mixtures at the price level using filtered \emph{beliefs}
about the rate regime:
\[
P_m(t,T)=\sum_{s\in\mathcal S^r}\pi_{t|t}^r(s)\,P_m^{(s)}(t,T),
\qquad m\in\{\mathrm{cgb},\mathrm{cdb}\}.
\]
Observable corporate prices are formed by price-level mixtures of joint-regime conditional values $v^{\,i,(s^r,s^c)}(t,T)$ using
the block-recursive beliefs $\pi_{t|t}^r(s^r)\,\pi_{t|t}^c(s^c\mid s^r;\widehat{\mathcal R}_t)$, as described in the main text.

\section{Regime-switching UKF: algorithmic details}
\label{app:rsukf}

This appendix describes the regime-switching unscented Kalman filter (RS--UKF)
used for maximum-likelihood estimation under the physical measure. The filter
combines regime-specific UKF recursions with a Gray-style collapsing approximation
for the continuous state. The resulting procedure delivers (i) regime-conditional
filtered moments and (ii) filtered regime probabilities, along with the
corresponding log-likelihood contributions.

Let $X_t\in\mathbb R^{d}$ denote the continuous latent state and
$s_t\in\{1,\ldots,S\}$ the discrete regime at time $t$.
The information set is $\mathcal I_t=\{\mathbf y_\tau:\tau\le t\}$, where
$\mathbf y_t$ is the (stacked) vector of observed yields at time $t$.
For each regime $j$, denote the filtered moments by
$\widehat X_{t|t}(j)$ and $\widehat P_{t|t}(j)$, and the filtered regime
probability by
\[
\pi_t(j)\equiv\mathbb P(s_t=j\mid\mathcal I_t),\qquad j=1,\ldots,S.
\]
Let $F_j(\cdot)$ be the (possibly nonlinear) state transition mapping in regime $j$
and $h_j(\cdot)$ the measurement mapping implied by the bond-pricing formulas.
Process and measurement noise covariances are $Q_j$ and $R_j$, respectively.
The regime chain follows a Markov transition matrix
$P(\Delta t)=[P_{ij}(\Delta t)]_{i,j=1}^{S}$ with
$P_{ij}(\Delta t)=\mathbb P(s_t=j\mid s_{t-\Delta t}=i)$.
For a Gaussian $\mathcal N(\mu,\Sigma)$, the unscented transform produces sigma
points $\{\chi^{(k)}\}$ and weights $\{w_m^{(k)},w_c^{(k)}\}$.

\paragraph{Initialization.}
For each regime $j=1,\ldots,S$, specify a Gaussian prior
\begin{equation}
  X_0\mid s_0=j \sim \mathcal N\!\big(\widehat X_{0|0}(j),\,\widehat P_{0|0}(j)\big),
\end{equation}
together with initial regime probabilities $\pi_0(j)$ (e.g., the stationary
distribution of the regime chain).

\paragraph{Filtering recursion.}
For $t=1,\ldots,T$, given $\{\widehat X_{t-\Delta t|t-\Delta t}(j),
\widehat P_{t-\Delta t|t-\Delta t}(j),\pi_{t-\Delta t}(j)\}_{j=1}^{S}$, perform:

\paragraph{(1) Gray-style collapse of the continuous state.}
We approximate the regime mixture posterior by a single Gaussian matching the
first two moments,
\begin{equation}
\label{eq:gray_collapse_app}
\begin{aligned}
  \widehat X_{t-\Delta t|t-\Delta t}
  &=\sum_{j=1}^{S}\pi_{t-\Delta t}(j)\,\widehat X_{t-\Delta t|t-\Delta t}(j),\\[0.3em]
  \widehat P_{t-\Delta t|t-\Delta t}
  &=\sum_{j=1}^{S}\pi_{t-\Delta t}(j)\Big[
      \widehat P_{t-\Delta t|t-\Delta t}(j)
      +\big(\widehat X_{t-\Delta t|t-\Delta t}(j)-\widehat X_{t-\Delta t|t-\Delta t}\big)
       \big(\widehat X_{t-\Delta t|t-\Delta t}(j)-\widehat X_{t-\Delta t|t-\Delta t}\big)^{\!\top}
     \Big].
\end{aligned}
\end{equation}
Generate sigma points $\{\chi^{(k)}_{t-\Delta t|t-\Delta t}\}$ and weights
$\{w_m^{(k)},w_c^{(k)}\}$ from
$\mathcal N(\widehat X_{t-\Delta t|t-\Delta t},\widehat P_{t-\Delta t|t-\Delta t})$.

\paragraph{(2) Regime-specific state prediction.}
For each regime $j$, propagate sigma points through the transition map $F_j(\cdot)$:
\begin{equation}
\begin{aligned}
  \chi^{(k)}_{t|t-\Delta t}(j)
  &=F_j\!\big(\chi^{(k)}_{t-\Delta t|t-\Delta t}\big),\\[0.3em]
  \widehat X_{t|t-\Delta t}(j)
  &=\sum_k w_m^{(k)}\,\chi^{(k)}_{t|t-\Delta t}(j),\\[0.3em]
  \widehat P_{t|t-\Delta t}(j)
  &=\sum_k w_c^{(k)}\big(\chi^{(k)}_{t|t-\Delta t}(j)-\widehat X_{t|t-\Delta t}(j)\big)
                    \big(\chi^{(k)}_{t|t-\Delta t}(j)-\widehat X_{t|t-\Delta t}(j)\big)^{\!\top}
    +Q_j.
\end{aligned}
\end{equation}

\paragraph{(3) Regime-specific measurement prediction and likelihood.}
Apply the measurement map $h_j(\cdot)$ to the predicted sigma points:
\begin{equation}
\begin{aligned}
  \upsilon^{(k)}_{t|t-\Delta t}(j)
  &=h_j\!\big(\chi^{(k)}_{t|t-\Delta t}(j)\big),\\[0.3em]
  \widehat{\mathbf y}_{t|t-\Delta t}(j)
  &=\sum_k w_m^{(k)}\,\upsilon^{(k)}_{t|t-\Delta t}(j),\\[0.3em]
  S_t(j)
  &=\sum_k w_c^{(k)}\big(\upsilon^{(k)}_{t|t-\Delta t}(j)-\widehat{\mathbf y}_{t|t-\Delta t}(j)\big)
                    \big(\upsilon^{(k)}_{t|t-\Delta t}(j)-\widehat{\mathbf y}_{t|t-\Delta t}(j)\big)^{\!\top}
    +R_j.
\end{aligned}
\end{equation}
The state--observation cross-covariance and the Kalman gain are
\begin{equation}
  P^{xy}_t(j)=\sum_k w_c^{(k)}\big(\chi^{(k)}_{t|t-\Delta t}(j)-\widehat X_{t|t-\Delta t}(j)\big)
                           \big(\upsilon^{(k)}_{t|t-\Delta t}(j)-\widehat{\mathbf y}_{t|t-\Delta t}(j)\big)^{\!\top},
  \qquad
  K_t(j)=P^{xy}_t(j)\,S_t(j)^{-1}.
\end{equation}
We approximate the regime-conditional predictive density by a Gaussian,
\begin{equation}
  f_t(j)\equiv p(\mathbf y_t\mid\mathcal I_{t-\Delta t},s_t=j)
  \approx \phi\!\big(\mathbf y_t;\widehat{\mathbf y}_{t|t-\Delta t}(j),S_t(j)\big),
\end{equation}
where $\phi(\cdot;\mu,\Sigma)$ denotes the multivariate normal density.

\paragraph{(4) Regime-specific state update.}
Update the regime-conditional posterior moments using $K_t(j)$:
\begin{equation}
\begin{aligned}
  \widehat X_{t|t}(j)
  &=\widehat X_{t|t-\Delta t}(j)+K_t(j)\big(\mathbf y_t-\widehat{\mathbf y}_{t|t-\Delta t}(j)\big),\\[0.3em]
  \widehat P_{t|t}(j)
  &=\widehat P_{t|t-\Delta t}(j)-K_t(j)\,S_t(j)\,K_t(j)^{\!\top}.
\end{aligned}
\end{equation}

\paragraph{(5) Regime probability update and likelihood contribution.}
The one-step-ahead regime probabilities are
\begin{equation}
  \pi_{t|t-\Delta t}(j)=\sum_{i=1}^{S}\pi_{t-\Delta t}(i)\,P_{ij}(\Delta t).
\end{equation}
The predictive density at time $t$ is the finite mixture
\begin{equation}
  p(\mathbf y_t\mid\mathcal I_{t-\Delta t})
  \approx \sum_{j=1}^{S}\pi_{t|t-\Delta t}(j)\,f_t(j),
\end{equation}
so that the log-likelihood contribution is
\begin{equation}
  \ell_t(\Theta)=\log p(\mathbf y_t\mid\mathcal I_{t-\Delta t}).
\end{equation}
Finally, Bayes' rule yields the filtered regime probabilities,
\begin{equation}
  \pi_t(j)=
  \frac{\pi_{t|t-\Delta t}(j)\,f_t(j)}
       {\sum_{h=1}^{S}\pi_{t|t-\Delta t}(h)\,f_t(h)},
  \qquad j=1,\ldots,S.
\end{equation}
Iterating over $t=1,\ldots,T$ gives the approximate full-sample log-likelihood
\begin{equation}
  \mathcal L(\Theta)=\sum_{t=1}^{T}\ell_t(\Theta),
\end{equation}
which is maximized numerically to obtain the parameter estimates $\widehat\Theta$.

\section{Supplementary Empirical Evidence}\label{app:empirical_supp}

\subsection*{Gaussian-HMM regime diagnostics from weekly yields}\label{app:hmm_diagnostics}

This subsection reports reduced-form regime diagnostics based on Gaussian hidden Markov models (HMMs)
estimated on weekly yield curves. These HMM results are \emph{not} used for pricing or structural estimation;
they serve as an auxiliary check that (i) yield dynamics exhibit persistent, discrete shifts in level and
within-regime variability, and (ii) a two-state partition provides a parsimonious summary consistent with
the regime-switching design adopted in the main model.
Throughout, we relabel the two HMM states by economic content: \emph{low-rate} state \(L\) and
\emph{high-rate} state \(H\), where \(H\) is the state with the larger regime-specific mean yield.

Figure~\ref{fig:app_hmm_yields} overlays yields across maturities for the benchmark rate block
(CGB and CDB) and investment-grade corporate buckets. Shaded bands indicate the inferred
HMM state sequence \(\widehat{s}^{\,\mathrm{HMM}}_{t}\in\{L,H\}\) (based on smoothed probabilities with a 0.5 cutoff).
The figure highlights that regime-like episodes are common across the sovereign and high-grade credit
segments, supporting the view that shifts in the discount-rate environment are a first-order driver of
comovement in yields.

\begin{figure}[!t]
  \centering
  \includegraphics[width=0.95\linewidth]{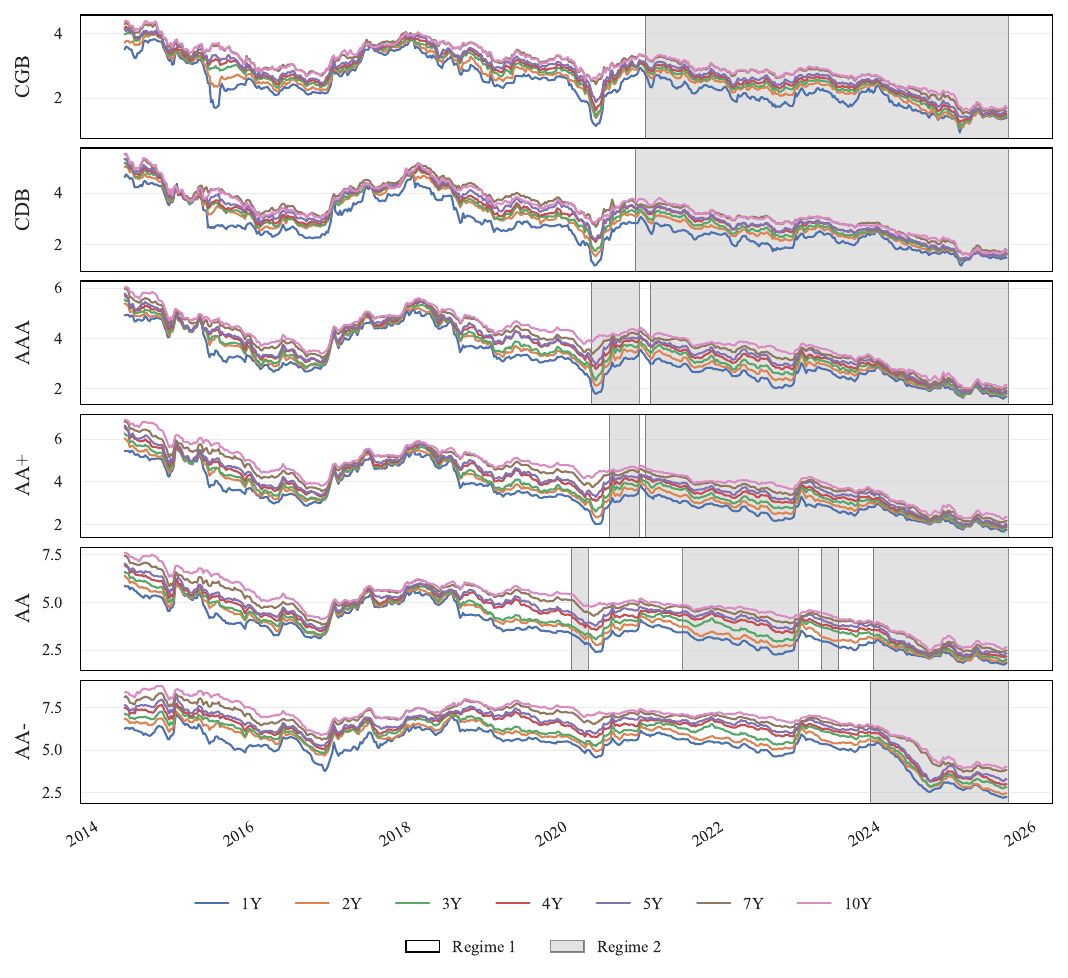}
  \caption{Weekly zero-coupon yields for CGB, CDB, and investment-grade corporate buckets (AAA, AA\(+\), AA) across maturities.
  Shaded areas mark the HMM-inferred rate environment \(\widehat{s}^{\,\mathrm{HMM}}_{t}\in\{L,H\}\), where \(H\) denotes the \emph{high-rate} state (higher regime-specific mean yield) and \(L\) the \emph{low-rate} state.}
  \label{fig:app_hmm_yields}
\end{figure}


Table~\ref{tab:app_hmm_moments_yields} summarizes regime-specific unconditional moments for \(K=2\) and \(K=3\)
Gaussian HMMs. For \(K=2\), the two regimes typically differ materially in mean yield levels and in a scalar
dispersion proxy, consistent with a persistent low-rate versus high-rate environment.
For \(K=3\), an additional state often appears to split one of the \(K=2\) regimes into two nearby components
with similar mean levels but larger dispersion. This pattern is indicative of over-partitioning rather than a
stable third regime, reinforcing the use of a two-state specification in the structural RS--GCIR model.

\begin{table*}[!t]
  \centering
  \caption{Regime-specific unconditional moments from Gaussian HMM diagnostics (weekly yields).
  For \(K=2\), states are relabeled as \(L\) (low-rate) and \(H\) (high-rate) by sorting regime-specific mean yield levels.
  For \(K=3\), states are reported by index.}
  \label{tab:app_hmm_moments_yields}
  \footnotesize
  \renewcommand{\arraystretch}{1.15}
  \setlength{\tabcolsep}{4pt}
  \begin{threeparttable}
    \begin{tabular*}{\linewidth}{@{\extracolsep{\fill}} lcccrcc}
      \toprule
      Segment & \(K\) & State & Count & Weight & Mean level\tnote{a} & Dispersion proxy\tnote{b} \\
      \midrule
      \multirow{5}{*}{CGB}
        & 2 & \(L\) & 343 & 0.624 & 2.792 & 2.351 \\
        & 2 & \(H\) & 207 & 0.376 & 3.576 & 1.169 \\
        & 3 & 1     & 133 & 0.242 & 3.028 & 5.162 \\
        & 3 & 2     & 132 & 0.240 & 3.011 & 5.131 \\
        & 3 & 3     & 285 & 0.518 & 3.149 & 0.752 \\
      \midrule
      \multirow{5}{*}{CDB}
        & 2 & \(H\) & 338 & 0.615 & 3.952 & 2.518 \\
        & 2 & \(L\) & 212 & 0.385 & 2.648 & 1.971 \\
        & 3 & 1     & 123 & 0.224 & 3.438 & 9.900 \\
        & 3 & 2     & 124 & 0.225 & 3.471 & 9.917 \\
        & 3 & 3     & 303 & 0.551 & 3.445 & 1.135 \\
      \midrule
      \multirow{5}{*}{AAA}
        & 2 & \(L\) & 367 & 0.667 & 3.106 & 2.404 \\
        & 2 & \(H\) & 183 & 0.333 & 4.602 & 1.704 \\
        & 3 & 1     & 112 & 0.204 & 3.801 & 11.642 \\
        & 3 & 2     & 112 & 0.204 & 3.764 & 11.572 \\
        & 3 & 3     & 326 & 0.593 & 3.481 & 1.258 \\
      \midrule
      \multirow{5}{*}{AA\(+\)}
        & 2 & \(H\) & 306 & 0.556 & 4.558 & 4.216 \\
        & 2 & \(L\) & 244 & 0.444 & 3.116 & 2.802 \\
        & 3 & 1     &  91 & 0.165 & 3.819 & 13.455 \\
        & 3 & 2     &  93 & 0.169 & 3.854 & 13.682 \\
        & 3 & 3     & 366 & 0.665 & 3.959 & 3.857 \\
      \midrule
      \multirow{5}{*}{AA}
        & 2 & \(H\) & 324 & 0.589 & 4.945 & 4.223 \\
        & 2 & \(L\) & 226 & 0.411 & 3.369 & 3.576 \\
        & 3 & 1     & 106 & 0.193 & 4.154 & 13.293 \\
        & 3 & 2     & 106 & 0.193 & 4.140 & 13.235 \\
        & 3 & 3     & 338 & 0.615 & 4.392 & 4.651 \\
      \midrule
      \multirow{5}{*}{AA\(-\)}
        & 2 & \(H\) & 484 & 0.880 & 6.422 & 2.087 \\
        & 2 & \(L\) &  66 & 0.120 & 3.612 & 1.804 \\
        & 3 & 1     &  64 & 0.116 & 4.635 & 9.252 \\
        & 3 & 2     &  64 & 0.116 & 4.605 & 9.106 \\
        & 3 & 3     & 422 & 0.767 & 6.529 & 1.594 \\
      \bottomrule
    \end{tabular*}
    \begin{tablenotes}[flushleft]
      \footnotesize
      \item[a] Regime-specific unconditional mean of the yield series (percent per annum) implied by the state-dependent Gaussian emission distribution.
      \item[b] Scalar proxy for within-regime variability (e.g., an aggregate variance/trace measure used to summarize multi-maturity dispersion when applicable).
      \item[] \emph{Relabeling rule for \(K=2\):} within each segment, the state with the lower mean level is labeled \(L\), and the one with the higher mean level is labeled \(H\).
    \end{tablenotes}
  \end{threeparttable}
\end{table*}

\begin{table}[!t]
    \centering
    \caption{Implied average regime duration from the two-state Gaussian HMM.}
    \label{tab:gauss_hmm_duration_k2}
    \footnotesize
    \renewcommand{\arraystretch}{1.2}
    \setlength{\tabcolsep}{6pt}
    \begin{threeparttable}
    \begin{tabular*}{\linewidth}{@{\extracolsep{\fill}} lcccc}
        \toprule
        Segment & $p_{11}$ & $p_{22}$ & $E[D_1]$ (years) & $E[D_2]$ (years) \\
        \midrule
        CGB        & 0.989 & 0.976 & 1.77 & 0.82 \\
        CDB spread & 0.990 & 0.984 & 1.86 & 1.23 \\
        AAA spread & 0.982 & 0.987 & 1.10 & 1.42 \\
        AA+ spread & 0.992 & 0.990 & 2.38 & 1.93 \\
        AA spread  & 0.994 & 0.987 & 2.99 & 1.52 \\
        AA- spread & 0.993 & 0.992 & 2.81 & 2.45 \\
        \bottomrule
    \end{tabular*}
    \begin{tablenotes}[flushleft]
        \footnotesize
        \item Notes: For each segment, we estimate a two-state Gaussian HMM on weekly data and report the one-step self-transition probabilities $p_{ii}$. 
        The implied expected duration of regime $i$ is $E[D_i]=\Delta/(1-p_{ii})$, where $\Delta$ is one week, reported in years.
    \end{tablenotes}
    \end{threeparttable}
\end{table}

To assess whether the inferred regimes reflect persistent episodes rather than noisy state flips, Table~\ref{tab:gauss_hmm_duration_k2} reports the implied average regime duration from the estimated two-state Gaussian HMM. Across all segments, the self-transition probabilities are close to one, implying expected regime durations on the order of roughly one to three years (weekly frequency). This persistence supports the interpretation that the HMM captures slow-moving shifts in the level/volatility environment, providing diagnostic evidence for regime behavior in the data, while the structural pricing model is developed separately.

\subsection*{Filter-based benchmark: EKF versus RS--KF}\label{app:kf_comparison}

The main estimation adopts a nonlinear regime-switching filter to accommodate the curvature induced by
(i) state-dependent affine pricing kernels and (ii) the price-to-yield transformation.
As a benchmark, this subsection reports pricing errors from (a) a single-regime Kalman filter (EKF) and
(b) a two-regime regime-switching Kalman filter (RS--EKF), both evaluated on the same maturity grid up to 10 years.
The comparison isolates the incremental contribution of allowing for regime shifts within a linear-Gaussian filtering framework.

For maturity \(\tau\in\{1,2,3,4,5,7,10\}\) years, the pricing error is
\(\epsilon_{t}(\tau)=y_{t}(\tau)-\widehat{y}_{t}(\tau)\).
Reported mean and standard deviation are in basis points (bp), i.e., \(10{,}000\times \epsilon_{t}(\tau)\).
RRMSE is unitless, and the ``Average'' column is the simple average across maturities.

\begin{table*}[!htbp]
\centering
\caption{Pricing error statistics under the RS--EKF benchmark (up to 10Y).}
\label{tab:app_rskf_pricing_error_stats}
\footnotesize
\setlength{\tabcolsep}{4pt}
\renewcommand{\arraystretch}{1.08}
\begin{threeparttable}
\begin{tabular*}{\linewidth}{@{\extracolsep{\fill}} lrrrrrrrr}
\toprule
& \(\epsilon_{1\mathrm{Y}}\) & \(\epsilon_{2\mathrm{Y}}\) & \(\epsilon_{3\mathrm{Y}}\) & \(\epsilon_{4\mathrm{Y}}\) & \(\epsilon_{5\mathrm{Y}}\) & \(\epsilon_{7\mathrm{Y}}\) & \(\epsilon_{10\mathrm{Y}}\) & Average \\
\midrule
\multicolumn{9}{l}{\textit{CGB}} \\
Mean      & 11.20 & -0.52 & 1.67 & 0.27 & 0.86 & -6.89 & 12.66 & 2.75 \\
St.~dev.  & 20.00 & 15.65 & 16.23 & 16.36 & 15.31 & 15.22 & 14.59 & 16.19 \\
RRMSE     & 0.095 & 0.056 & 0.052 & 0.046 & 0.043 & 0.043 & 0.054 & 0.056 \\
\addlinespace
\multicolumn{9}{l}{\textit{CDB}} \\
Mean      & 3.06 & -3.08 & 1.78 & 1.63 & 2.75 & -4.88 & 14.14 & 2.20 \\
St.~dev.  & 13.92 & 10.99 & 11.87 & 12.77 & 12.31 & 12.38 & 12.54 & 12.40 \\
RRMSE     & 0.058 & 0.041 & 0.041 & 0.039 & 0.040 & 0.036 & 0.054 & 0.044 \\
\addlinespace
\multicolumn{9}{l}{\textit{AAA}} \\
Mean      & 0.51 & 3.90 & 6.58 & 1.09 & 0.63 & -6.16 & -11.14 & -0.66 \\
St.~dev.  & 12.71 & 14.41 & 15.38 & 14.33 & 14.46 & 11.78 & 13.17 & 13.75 \\
RRMSE     & 0.045 & 0.055 & 0.058 & 0.051 & 0.050 & 0.045 & 0.047 & 0.050 \\
\addlinespace
\multicolumn{9}{l}{\textit{AA\(+\)}} \\
Mean      & -2.67 & 1.31 & 3.96 & -2.11 & -2.45 & -10.75 & -5.30 & -2.14 \\
St.~dev.  & 18.83 & 21.58 & 21.20 & 18.97 & 18.24 & 19.62 & 26.45 & 20.70 \\
RRMSE     & 0.058 & 0.067 & 0.067 & 0.055 & 0.050 & 0.043 & 0.052 & 0.053 \\
\addlinespace
\multicolumn{9}{l}{\textit{AA}} \\
Mean      & -3.40 & 0.70 & 2.79 & -5.74 & -3.52 & -6.64 & 8.52 & -1.04 \\
St.~dev.  & 25.73 & 26.61 & 23.45 & 18.78 & 19.08 & 24.42 & 35.74 & 24.83 \\
RRMSE     & 0.072 & 0.070 & 0.060 & 0.046 & 0.044 & 0.045 & 0.076 & 0.059 \\
\addlinespace
\multicolumn{9}{l}{\textit{AA\(-\)}} \\
Mean      & -1.74 & -2.07 & 0.86 & -7.11 & -5.24 & -12.78 & -1.06 & -4.16 \\
St.~dev.  & 36.53 & 24.82 & 24.74 & 24.20 & 23.77 & 19.49 & 20.53 & 24.87 \\
RRMSE     & 0.080 & 0.049 & 0.040 & 0.039 & 0.037 & 0.036 & 0.030 & 0.044 \\
\bottomrule
\end{tabular*}
\vspace{0.25em}
\raggedright\footnotesize
Notes: ``Average'' is the simple average across maturities \(\{1,2,3,4,5,7,10\}\) years. Mean and standard deviation are reported in bp; RRMSE is unitless.
\end{threeparttable}
\end{table*}

\begin{table*}[!htbp]
\centering
\caption{Pricing error statistics under the single-regime EKF benchmark (up to 10Y).}
\label{tab:app_kf_pricing_error_stats}
\footnotesize
\setlength{\tabcolsep}{4pt}
\renewcommand{\arraystretch}{1.08}
\begin{threeparttable}
\begin{tabular*}{\linewidth}{@{\extracolsep{\fill}} lrrrrrrrr}
\toprule
& \(\epsilon_{1\mathrm{Y}}\) & \(\epsilon_{2\mathrm{Y}}\) & \(\epsilon_{3\mathrm{Y}}\) & \(\epsilon_{4\mathrm{Y}}\) & \(\epsilon_{5\mathrm{Y}}\) & \(\epsilon_{7\mathrm{Y}}\) & \(\epsilon_{10\mathrm{Y}}\) & Average \\
\midrule
\multicolumn{9}{l}{\textit{CGB}} \\
Mean     & 9.16 & -0.36 & 2.88 & 2.04 & 2.95 & -4.37 & 15.55 & 3.98 \\
St.~dev. & 17.48 & 13.77 & 14.64 & 15.21 & 13.91 & 13.94 & 14.41 & 14.77 \\
RRMSE    & 0.091 & 0.053 & 0.053 & 0.048 & 0.047 & 0.042 & 0.069 & 0.058 \\
\addlinespace
\multicolumn{9}{l}{\textit{CDB}} \\
Mean     & 5.12 & -3.06 & 0.90 & 0.46 & 1.64 & -5.39 & 14.78 & 2.06 \\
St.~dev. & 15.97 & 12.89 & 14.56 & 15.16 & 13.67 & 13.70 & 14.29 & 14.32 \\
RRMSE    & 0.067 & 0.048 & 0.052 & 0.048 & 0.041 & 0.044 & 0.056 & 0.051 \\
\addlinespace
\multicolumn{9}{l}{\textit{AAA}} \\
Mean     & -11.79 & -3.15 & 3.37 & 1.12 & 3.77 & 3.45 & 8.89 & 0.81 \\
St.~dev. & 19.03 & 18.71 & 19.07 & 17.07 & 17.12 & 13.37 & 13.40 & 16.82 \\
RRMSE    & 0.075 & 0.073 & 0.072 & 0.064 & 0.060 & 0.047 & 0.045 & 0.062 \\
\addlinespace
\multicolumn{9}{l}{\textit{AA\(+\)}} \\
Mean     & -13.57 & -4.62 & 1.45 & -1.99 & -0.02 & -2.29 & 7.40 & -1.95 \\
St.~dev. & 27.40 & 25.91 & 24.05 & 20.23 & 18.69 & 18.04 & 23.42 & 22.53 \\
RRMSE    & 0.084 & 0.079 & 0.072 & 0.059 & 0.051 & 0.043 & 0.058 & 0.064 \\
\addlinespace
\multicolumn{9}{l}{\textit{AA}} \\
Mean     & -13.66 & -4.88 & 0.46 & -6.12 & -3.43 & -2.57 & 14.41 & -2.26 \\
St.~dev. & 34.54 & 30.41 & 25.45 & 19.98 & 19.44 & 23.00 & 31.26 & 26.30 \\
RRMSE    & 0.096 & 0.082 & 0.067 & 0.050 & 0.047 & 0.046 & 0.075 & 0.066 \\
\addlinespace
\multicolumn{9}{l}{\textit{AA\(-\)}} \\
Mean     & 0.33 & 1.58 & 3.94 & -5.65 & -5.78 & -17.36 & -10.87 & -4.83 \\
St.~dev. & 32.80 & 23.12 & 22.92 & 23.59 & 24.11 & 20.53 & 23.87 & 24.42 \\
RRMSE    & 0.077 & 0.051 & 0.041 & 0.039 & 0.039 & 0.042 & 0.038 & 0.047 \\
\bottomrule
\end{tabular*}
\vspace{0.25em}
\raggedright\footnotesize
Notes: ``Average'' is the simple average across maturities \(\{1,2,3,4,5,7,10\}\) years. Mean and standard deviation are in bp; RRMSE is unitless.
\end{threeparttable}
\end{table*}

Tables~\ref{tab:app_rskf_pricing_error_stats}--\ref{tab:app_kf_pricing_error_stats} show that allowing for regime switching
within a EKF benchmark can improve fit in several segments/maturities, but the remaining errors reflect the fact that
the measurement equation for yields inherits curvature from defaultable-bond valuation and the price-to-yield map.
This motivates the nonlinear filtering strategy adopted in the main estimation and the emphasis on regime-conditional
pricing combined at the price level rather than averaging state variables prior to valuation.

\end{document}